\documentclass[aps,pra,reprint,superscriptaddress,nofootinbib]{revtex4-2}

\usepackage{amsmath,amssymb,mathtools,bm}
\usepackage{graphicx}
\usepackage{booktabs}
\usepackage[table]{xcolor}
\usepackage[colorlinks=true,citecolor=blue,linkcolor=magenta]{hyperref}
\usepackage{microtype}
\usepackage{colortbl,hhline}

\usepackage{amsmath,amssymb,graphicx}
\usepackage{multirow}

\usepackage{tabularray}
\usepackage{bm}

\usepackage{array}
\usepackage{longtable}

\usepackage{amsthm}
\usepackage{amsmath}
\usepackage[caption=false]{subfig}
\usepackage{colortbl,multirow,hhline,longtable}
\newsavebox{\SIbox}
\newcommand{\SIcell}[2]{%
  \sbox{\SIbox}{$\textstyle #2$}%
  \raisebox{0pt}[\dimexpr\ht\SIbox+1.8pt\relax][\dimexpr\dp\SIbox+1.8pt\relax]{%
  \ifdim\wd\SIbox>#1\relax
    \resizebox{#1}{!}{\usebox{\SIbox}}%
  \else\usebox{\SIbox}\fi}}

\newtheorem{theorem}{Theorem}

\newtheorem{proposition}{Proposition}

\newtheorem{definition}{Definition}

\newcommand{\ket}[1]{\left|#1\right\rangle}
\newcommand{\bra}[1]{\left\langle#1\right|}

\newcommand{\rad}{\rm rad}

\newcommand{\C}{\mathbb C}
\newcommand{\id}{\mathbb I}
\newcommand{\rank}{\rm{rank}}

\newcommand{\Span}{\rm{span}_{\C}}
\newcommand{\slc}{\mathfrak{sl}}

\newcommand{\pstab}{\mathfrak p}

\begin{document}
\title{Classification of quantum states through conserved operators}

\author{Federico Petrovich}
\affiliation{Information Sciences, Los Alamos National Laboratory, Los Alamos, New Mexico 87545, USA}

\author{R. Rossignoli}
\affiliation{Instituto de F\'isica La Plata, CONICET, and Depto.\ de F\'isica, Facultad de Ciencias Exactas, Universidad Nacional de La Plata, C.C. 67, La Plata (1900), Argentina}
\affiliation{Comisi\'on de Investigaciones Cient\'ificas (CIC), La Plata (1900), Argentina}

\author{M. Cerezo}
\affiliation{Information Sciences, Los Alamos National Laboratory, Los Alamos, New Mexico 87545, USA}

\begin{abstract}

Determining whether quantum states can be interconverted by a prescribed set of transformations is a fundamental problem in quantum information. Here we develop a general classification framework based on conserved operators (COs), i.e., not necessarily hermitian matrices  having the state of interest as an eigenstate. We first show that COs belonging to the algebra generating the allowed transformations can be obtained from the kernel of a quantum covariance matrix. Their number determines the dimension of the accessible orbit, while their algebraic structure and eigenvalues provide further constraints on state convertibility. We establish general bounds on the number of COs and prove that highest-weight states, like  separable states in the case of $n$ qudits, attain the maximum. We also show how these operators generate parent Hamiltonians having the characterized state as an exact ground state. We then also consider positive zero-eigenvalue COs, which encode forbidden directions of reduced density matrices and incorporate rank-based constraints into the same framework. Applying this approach to multiqubit SLOCC, we obtain complementary classifications from algebraic and rank COs. While for $n\leq 3$ qubits both lead to the usual classes,  for $n=4$ qubits they distinguish 17 and 20 types separately, and 29 when combined. The combined classification yields 287 fixed types and $4$ continuous families for $n=5$  qubits, and 4926 fixed types and 147 continuous families for $n=6$ qubits, providing, to our knowledge, the most refined exhaustive classification of SLOCC-convertibility classes based on such data.

\end{abstract}

\maketitle

\section{Introduction}

One of the basic questions at the heart of quantum resource theories (QRTs) is whether a given set of allowed ``free'' operations can transform a quantum state into another~\cite{chitambar2019quantum}. For pure states, invertible transformations, typically obtained as some branch of a specific quantum instrument~\cite{chitambar2014everything}, partition the Hilbert space into orbits of interconvertible states. For instance, in entanglement theory, these transformations give stochastic local operations and classical communication (SLOCC)~\cite{dur2000three,miyake2003classification}, while similar nonunitary operations can be obtained for more general Lie-algebra-based QRTs by complexifying their free groups~\cite{barnum2003generalizations,diaz2025unified}. However, understanding the orbits associated with these transforms can be difficult even for relatively small systems. Indeed,  the Schmidt decomposition completely solves the problem for bipartite entanglement, and in three qubits one recovers the familiar product, biseparable, W, and GHZ classes. Already for four qubits, continuous parameters appear and the orbit structure becomes considerably richer~\cite{verstraete2002four,dietrich2022classification,jia2015classification}.

\begin{figure*}[t]
\centering
\includegraphics[width=1\linewidth]{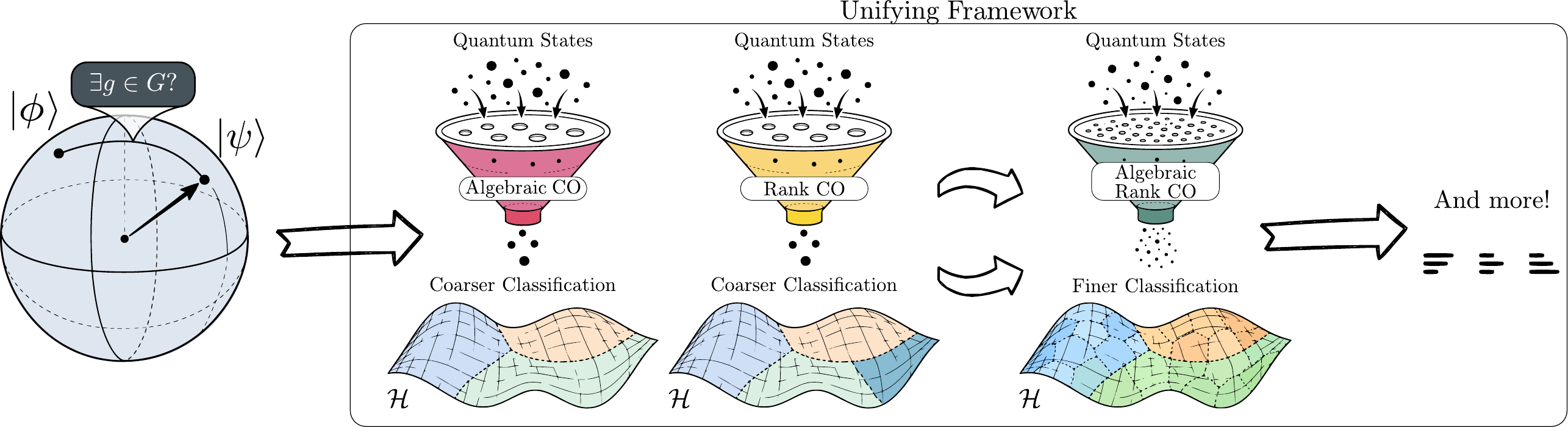}
\caption{\textbf{Conserved-operator classification of quantum states.}
Given two states $\ket\psi$ and $\ket\phi$, we ask whether an allowed free transformation $g\in G$ can connect them. In this work we focus on two types of COs, dubbed algebraic and rank, but our framework is more general than simply those two. Algebraic COs provide a classification based on the symmetries of the states, while rank COs add information from forbidden directions in reduced density matrices. Combining the two refines the resulting classes. States with inequivalent CO data cannot belong to the same orbit, providing a hierarchy of constraints on state interconvertibility within a common framework. }
\label{fig:schematic}
\end{figure*}

Over the past few decades several approaches have been developed to classify state interconvertibility orbits under free operations. These include normal forms for representative states~\cite{verstraete2002four,chterental2006normal}, polynomial invariants and covariants~\cite{luque2003polynomial,gour2013classification,walter2013entanglement,gharahi2020fine,briand2003complete,luque2006algebraic,li2012slocc}, rank-based techniques~\cite{dur2000three,li2012method,li2012classification,chen2006range},  and methods based on the stabilizer of a state~\cite{carteret2000local,walck2007maximum,gour2017almost,slowik2020link,dietrich2022classification}. While all of these approaches constrain the same orbit structure, they are generally treated separately and can lead to different classifications of the same set of states. It is therefore not always clear how the information captured by one approach is related to that obtained from another, or how these different descriptions can be combined. Stabilizer-based methods are particularly appealing from a symmetry and physics-informed perspective, as any transformation connecting two states must also relate the conserved operators that leave them invariant. In many SLOCC treatments, however, the stabilizer is defined at the vector level, and therefore retains only generators that annihilate the state. This can miss projective symmetries for which the state is an eigenvector with a nonzero eigenvalue. More generally, even the projective stabilizer is restricted to operators in the algebra generating the free transformations, and therefore does not capture conserved operators outside this subspace.

In this work, we take a broader conserved-operator perspective. We call an operator $Q$ a conserved operator (CO) of a pure state $\ket{\psi}$ whenever $\ket{\psi}$ is an eigenstate of $Q$. When $Q$ belongs to the Lie algebra generating the Lie group of free transformations, the COs are precisely the infinitesimal operations that leave the state unchanged as a ray, i.e., the so-called projective stabilizers. Importantly, our definition of a CO is more general. When the eigenvalue is zero, additional conserved operators are possible, providing information that is not contained in the projective stabilizer. More generally, we show that several of the standard approaches to SLOCC classification can be understood as different ways of extracting CO information, meaning that our work provides a unifying umbrella for a wide range of state characterization approaches.

This point of view also gives a direct connection with measurable correlations. The COs in the generator algebra are obtained from the null space of a quantum covariance matrix~\cite{qi2019determining,chertkov2018computational,petrovich2024covariance,petrovich2025conserved}. Their dimension determines the dimension of the accessible orbit, while their algebraic structure gives further necessary conditions for state convertibility. We show that the largest CO algebras correspond to highest-weight states, recovering product states for local qubit transformations and more generally generalized coherent states~\cite{barnum2003generalizations,barnum2004subsystem,perelomov1977generalized,gilmore1974properties,zhang1990coherent}. Importantly, all the information used in our classification comes from one- and two-body correlations that can be efficiently measured on states prepared in quantum computers, without requiring full state tomography.

To exemplify our framework, we apply it to pure qubit states under SLOCC. Both the algebraic and positive two-body zero-eigenvalue COs, dubbed rank COs, individually recover the usual SLOCC classes through $3$ qubits and provide a symmetry-based classification for larger systems. For $4$ or more qubits, aggregating both types of COs substantially refines the classification beyond what each type can detect on its own (see Fig.~\ref{fig:schematic}).  Our framework allows us to naturally combine stabilizer-based and rank-based classifications, which are usually treated separately. Finally, we show that the same COs can be used to construct parent Hamiltonians for which the characterized state is an exact eigenstate or ground state.

\section{Results}

\subsection{Conserved operators as zero fluctuations}

Consider a finite-dimensional Hilbert space $\mathcal H$ of dimension $D_{\mathcal H}$, with associated space of linear operators denoted as ${\rm End}(\mathcal{H})$. From here, we can already introduce the notion of COs:
\begin{definition}[Conserved operators (COs)]\label{def:CO}
    Let $\ket{\psi}$ be a normalized pure state in $\mathcal{H}$. We say that $Q\in{\rm End}(\mathcal{H})$ is a CO of $\ket{\psi}$ if $Q\ket{\psi}=\lambda\ket{\psi}$ for some $\lambda\in\mathbb{C}$.
\end{definition}
\noindent The special case $\lambda=0$ recovers the zero-eigenvalue vector stabilizer condition considered in~\cite{gour2017almost,dietrich2022classification}. However, we first focus on the generic eigenvalue case, and return to the additional structure that arises for $\lambda=0$ later.

Given that throughout this work we assume that the free transformations form a Lie group $G$ whose complex reductive Lie algebra $\mathfrak g\subseteq\operatorname{End}(\mathcal H)$ acts irreducibly on $\mathcal H$, the most natural place to look for COs is within $\mathfrak g$ itself. Indeed, if $Q\in\mathfrak g$ is a CO of $\ket\psi$,  then $e^{tQ}\ket\psi=e^{t\lambda}\ket\psi$, so the corresponding one-parameter transformation leaves the ray $[\psi]\in\mathbb{P}(\mathcal{H})$ invariant\footnote{Conversely, if $e^{tQ}\ket\psi=e^{t\lambda}\ket\psi$ for all $t$, then $Q$ is a CO.}. Here, $\mathbb{P}(\mathcal{H})$ denotes the projective Hilbert space, defined as the set of rays in $\mathcal H$, i.e., normalized states identified up to multiplication by a nonzero complex scalar. Hence, the COs contained in $\mathfrak g$ are precisely the infinitesimal generators of the projective stabilizer of the state. We denote the complete space of algebraic COs as
\begin{equation}
 \pstab[\psi]=
 \{Q\in\mathfrak g\mid Q\ket\psi=\chi_\psi(Q)\ket\psi\}.
 \label{eq:co}
\end{equation}
The eigenvalue $\chi_\psi(Q)$ is a linear function of $Q$. Moreover, for any two COs $Q,Q'\in\pstab[\psi]$, their commutator annihilates $\ket\psi$. The algebraic COs therefore form a Lie subalgebra of $\mathfrak g$, with $\chi_\psi$ being their character. 

Notably, we can fully characterize the algebraic COs of a given state without knowledge of all the amplitudes of the state vector. Given a complex basis $O^1,\ldots,O^{D_{\mathfrak g}}$ of
$\mathfrak g$, with $D_{\mathfrak g}$ the dimension of ${\mathfrak g}$, we can subtract each expectation value to define the operators
$\Delta O^\mu=O^\mu-\langle O^\mu\rangle_\psi\id$.
The associated covariance matrix
\begin{equation}
 {C}^{\mu\nu}(|\psi\rangle)
 =\langle(\Delta O^\mu)^\dagger\Delta O^\nu\rangle_\psi\in\mathbb{C}^{D_{\mathfrak g}\times D_{\mathfrak g}}\,,
 \label{eq:covariance}
\end{equation}
is positive semidefinite.
This gives the following criterion~\cite{qi2019determining,petrovich2024covariance,petrovich2025conserved}.

\begin{proposition}[Covariance criterion]
\label{prop:covariance}
An operator $Q=\sum_{\mu=1}^{D_{\mathfrak g}} u_\mu O^\mu$ is an algebraic CO of $\ket\psi$  as per Definition~\ref{def:CO}
if and only if its coefficient vector $\bm{u}\in\mathbb{C}^{D_{\mathfrak g}} $ belongs to $\ker \bf{C}(|\psi\rangle)$.
\end{proposition}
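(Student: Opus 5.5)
The plan is to reduce both directions to the identity
\begin{equation*}
\bm u^\dagger {\bf C}(|\psi\rangle)\,\bm u=\big\|\Delta Q\ket\psi\big\|^2 ,
\end{equation*}
and then use positive semidefiniteness of ${\bf C}$ to pass from the vanishing of this quadratic form to membership in the kernel. Throughout, $\ket\psi$ is normalized.

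First I would fix $\bm u$ and write $Q=\sum_\mu u_\mu O^\mu$. By linearity of the expectation value,
\begin{equation*}
\Delta Q\equiv Q-\langle Q\rangle_\psi\id=\sum_\mu u_\mu\,\Delta O^\mu .
\end{equation*}
Expanding $\bm u^\dagger{\bf C}\bm u=\sum_{\mu\nu}\bar u_\mu u_\nu\langle(\Delta O^\mu)^\dagger\Delta O^\nu\rangle_\psi$ then gives $\langle(\Delta Q)^\dagger\Delta Q\rangle_\psi=\|\Delta Q\ket\psi\|^2$. The same computation with an arbitrary $\bm v$ shows that ${\bf C}$ is the Gram matrix of the vectors $\Delta O^\mu\ket\psi$. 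This confirms ${\bf C}\succeq0$.

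Next I would show that $Q$ is a CO if and only if $\Delta Q\ket\psi=0$.
\begin{itemize}
\item If $\Delta Q\ket\psi=0$, then $Q\ket\psi=\langle Q\rangle_\psi\ket\psi$, so $Q$ is a CO with $\lambda=\langle Q\rangle_\psi$.
\item Conversely, if $Q\ket\psi=\lambda\ket\psi$, taking the inner product with $\bra\psi$ gives $\lambda=\bra\psi Q\ket\psi=\langle Q\rangle_\psi$. Hence $\Delta Q\ket\psi=0$.
\end{itemize}
Note that the eigenvalue is forced to equal $\langle Q\rangle_\psi$ even though $Q$ need not be Hermitian. So $\chi_\psi(Q)=\langle Q\rangle_\psi$, which also confirms linearity of $\chi_\psi$. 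Since every $Q$ of this form lies in $\mathfrak g$, algebraic COs are exactly the $Q$ with $\Delta Q\ket\psi=0$.

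Finally I would link the kernel to the quadratic form. For a positive semidefinite matrix, $\bm u^\dagger{\bf C}\bm u=0$ if and only if ${\bf C}\bm u=0$. To see this, write ${\bf C}=B^\dagger B$ with $B$ the matrix whose columns are the vectors $\Delta O^\nu\ket\psi$; then $\bm u^\dagger{\bf C}\bm u=\|B\bm u\|^2$, and $B\bm u=0$ implies ${\bf C}\bm u=0$, with the converse immediate. Chaining the steps:
\begin{equation*}
\bm u\in\ker{\bf C}\iff \bm u^\dagger{\bf C}\bm u=0\iff \Delta Q\ket\psi=0\iff Q\in\pstab[\psi].
\end{equation*}

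No step is a real obstacle. The only point needing care is the one that seems subtle for non-Hermitian $Q$: the eigenvalue must equal $\langle Q\rangle_\psi$. This holds because it only uses $\bra\psi Q\ket\psi$, not Hermiticity. One must also keep the ordering $(\Delta O^\mu)^\dagger\Delta O^\nu$ consistent with conjugating $\bm u$ on the left.
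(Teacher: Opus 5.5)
Your proposal is correct and follows the paper's argument. Both reduce the claim to the identity $\bm u^\dagger{\bf C}\bm u=\|(Q-\langle Q\rangle_\psi\id)\ket\psi\|^2$ and use that a positive semidefinite form vanishes exactly on its kernel. You also spell out two steps the paper leaves implicit: that any eigenvalue must equal $\langle Q\rangle_\psi$ even for non-Hermitian $Q$, and the Gram factorization ${\bf C}=B^\dagger B$ that justifies the kernel step.
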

\noindent To see both implications, note that $ \bm{u}^\dagger {\bf C} \bm{u} =\| (Q-\langle Q\rangle_\psi\id)\ket\psi\|^2$. For a positive matrix the left-hand side vanishes exactly on its kernel. The right-hand side vanishes exactly when $Q$ preserves the state ray. The basis of $\mathfrak g$ thus identifies the
complex vector spaces $\ker {\bf C}$ and $\pstab [\psi]$. 

Next, note that because the allowed transformations are generally nonunitary, the CO algebra is not restricted to Hermitian operators. This distinction can be seen directly from the covariance matrix. In particular, let us write ${\bf C}={\bf C}_++{\bf C}_-$ with
\begin{equation}
C_\pm^{\mu\nu} =\frac12\left\langle
(\Delta O^\mu)^\dagger\Delta O^\nu
\mathbin{\pm}\Delta O^\nu(\Delta O^\mu)^\dagger \right\rangle_\psi .
\label{eq:covariance-split}
\end{equation}
The symmetrized part $\bf C_+$ is positive semidefinite and satisfies
\begin{equation}
\bm u^\dagger {\bf C}_+\bm u = \frac12\left(
 \|\Delta Q\ket\psi\|^2+ \|(\Delta Q)^\dagger\ket\psi\|^2
 \right),
 \label{eq:symmetric-fluctuation}
\end{equation}
for $Q=\sum_\mu u_\mu O^\mu$. It follows that
$\bm u\in\ker {\bf C}_+$ if and only if both $Q$ and $Q^\dagger$ are COs of $\ket\psi$. In particular,
$\ker {\bf C}_+\subseteq\ker{\bf C}_-$. When the basis operators $O^\mu$ are Hermitian, one has ${\bf C}_+=\operatorname{Re}{\bf C}$ and
$-\mathrm{i}{\bf C}_-=\operatorname{Im}{\bf C}$. Hence, the real vectors in $\ker{\bf C}_+$ determine the Hermitian COs, while the full kernel of ${\bf C}$ is needed to identify general non-Hermitian COs. 

\subsection{CO algebra and orbit of accessible states}
Having determined the algebraic COs of a given state, we can use them to characterize the orbit $G\cdot[\psi]$ of states reachable under the allowed transformations in $G$. As schematically shown in Fig.~\ref{fig:orbits}, each operator $O\in\mathfrak g$ generates an infinitesimal change $O\ket\psi$. Since the component parallel to $\ket\psi$ does not change the ray, the corresponding tangent vector is~\cite{provost1980riemannian,wierichs2023symmetric}
\begin{equation}
\mathcal{T}_{[\psi]}(O)
 =(\id-\ket\psi\bra\psi)O\ket\psi
 =\Delta O\ket\psi.
 \label{eq:tangent}
\end{equation}
As the $O$ spans $\mathfrak g$, these vectors span the tangent space to the orbit at $[\psi]$, and an algebraic CO is precisely an operator for which $\mathcal{T}_{[\psi]}(Q)=0$. Moreover, since the entries of the covariance matrix $\bm C$ are the inner products between these tangent vectors, $\bm C$ is their Gram matrix. It follows that
\begin{equation}
 {\rm rank}\ {\bf C}
 =\dim G\cdot[\psi]
 = D_{\mathfrak g}- D_{\pstab [\psi]},
 \label{eq:geometry}
\end{equation}
where $D_{\mathfrak p} = \dim \pstab [\psi]$. Equation \eqref{eq:geometry} shows that the dimension of the CO algebra directly determines the dimension of the accessible orbit, and as such provides the first necessary condition for state interconvertibility: $\ket\phi$ can potentially be transformed into $\ket\psi$ only if $D_{\pstab[\psi]}=D_{\pstab [\phi]}$. 

Moreover, we can further push the necessary condition by exploiting the full structure of the projective algebras. If $Q|\psi\rangle=\chi_\psi(Q)|\psi\rangle$ and $\ket{\phi}=g\ket{\psi}$, then $(g Q g^{-1})|\phi\rangle=\chi_\psi(Q)|\phi\rangle$, implying the following proposition.

\begin{proposition}[CO algebra equivalence under orbit]\label{prop:co-equiv}
Let $\ket{\phi}=g\ket{\psi}$ with $g\in G$. Then
\begin{equation}
\pstab [\phi] = {\rm Ad}_g \pstab [\psi],\qquad \chi_\phi = \chi_\psi \circ {\rm Ad}_{g^{-1}},
 \label{eq:conjugacy}
\end{equation}
where ${\rm Ad}_g(O)=g O g^{-1}$ and $\chi \!\circ\! {\rm Ad}_{g}(O) \!=\! \chi({\rm Ad}_{g}O)$.
\end{proposition}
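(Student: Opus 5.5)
The plan is to prove the two inclusions $\mathrm{Ad}_g\pstab[\psi]\subseteq\pstab[\phi]$ and $\mathrm{Ad}_{g^{-1}}\pstab[\phi]\subseteq\pstab[\psi]$ directly from the definition~\eqref{eq:co}, and to read off the character relation along the way. Throughout, $g\ket\psi$ may not be normalized, since $g$ is generally nonunitary. This does not matter: the eigenvalue condition in Definition~\ref{def:CO} is insensitive to rescaling the vector, so $\pstab[\phi]$ and $\chi_\phi$ depend only on the ray $[\phi]$.

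\textbf{Forward inclusion.} First I check that $\mathrm{Ad}_g$ maps $\mathfrak g$ into itself. Since $G$ is a Lie group with Lie algebra $\mathfrak g\subseteq\mathrm{End}(\mathcal H)$, for $Q\in\mathfrak g$ we have
\[
g e^{tQ} g^{-1}=e^{t\,gQg^{-1}}\in G
\]
for all $t$. Differentiating at $t=0$ gives $gQg^{-1}\in\mathfrak g$. Now take $Q\in\pstab[\psi]$, so that $Q\ket\psi=\chi_\psi(Q)\ket\psi$. Then
\[
(gQg^{-1})\ket\phi=gQ\ket\psi=\chi_\psi(Q)\,g\ket\psi=\chi_\psi(Q)\ket\phi .
\]
Hence $\mathrm{Ad}_gQ\in\pstab[\phi]$, with eigenvalue $\chi_\phi(\mathrm{Ad}_gQ)=\chi_\psi(Q)$.

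\textbf{Reverse inclusion and character.} Since $\ket\psi=g^{-1}\ket\phi$ and $g^{-1}\in G$, the same argument with the roles of $\psi$ and $\phi$ exchanged gives $\mathrm{Ad}_{g^{-1}}\pstab[\phi]\subseteq\pstab[\psi]$. Applying $\mathrm{Ad}_g$ to both sides yields $\pstab[\phi]\subseteq\mathrm{Ad}_g\pstab[\psi]$, which together with the forward inclusion gives equality. For the character, any $O\in\pstab[\phi]$ can be written as $O=\mathrm{Ad}_gQ$ with $Q=\mathrm{Ad}_{g^{-1}}O\in\pstab[\psi]$. The forward step then gives $\chi_\phi(O)=\chi_\psi(\mathrm{Ad}_{g^{-1}}O)$, which is exactly $\chi_\phi=\chi_\psi\circ\mathrm{Ad}_{g^{-1}}$.

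As a remark, $\mathrm{Ad}_g$ is a Lie algebra automorphism, so the equality is an isomorphism of the CO subalgebras rather than just of vector spaces. This is what justifies using the full algebraic structure as an orbit invariant. The only step that needs care is the closure of $\mathfrak g$ under $\mathrm{Ad}_g$, which follows from $G$ being the group generated by $\mathfrak g$; the rest is routine verification.
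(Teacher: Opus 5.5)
Your proof is correct and uses the same argument as the paper. The paper only records the computation $(gQg^{-1})\ket\phi=\chi_\psi(Q)\ket\phi$ and leaves the reverse inclusion via $g^{-1}$ and the invariance $\mathrm{Ad}_g\mathfrak g\subseteq\mathfrak g$ implicit, whereas you state both explicitly.
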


Hence, states belonging to the same orbit must have CO algebras related by a similarity transformation induced by some $g\in G$. In particular, the similarity map $Q\mapsto gQg^{-1}$ preserves the commutation relations and eigenvalue characters of the COs. States whose CO algebras cannot be related by such an allowed similarity transformation therefore cannot be interconverted by an element of $G$. Figure~\ref{fig:orbits} illustrates this relation.

\begin{figure}[!t]
\centering
\includegraphics[width=0.93\linewidth]{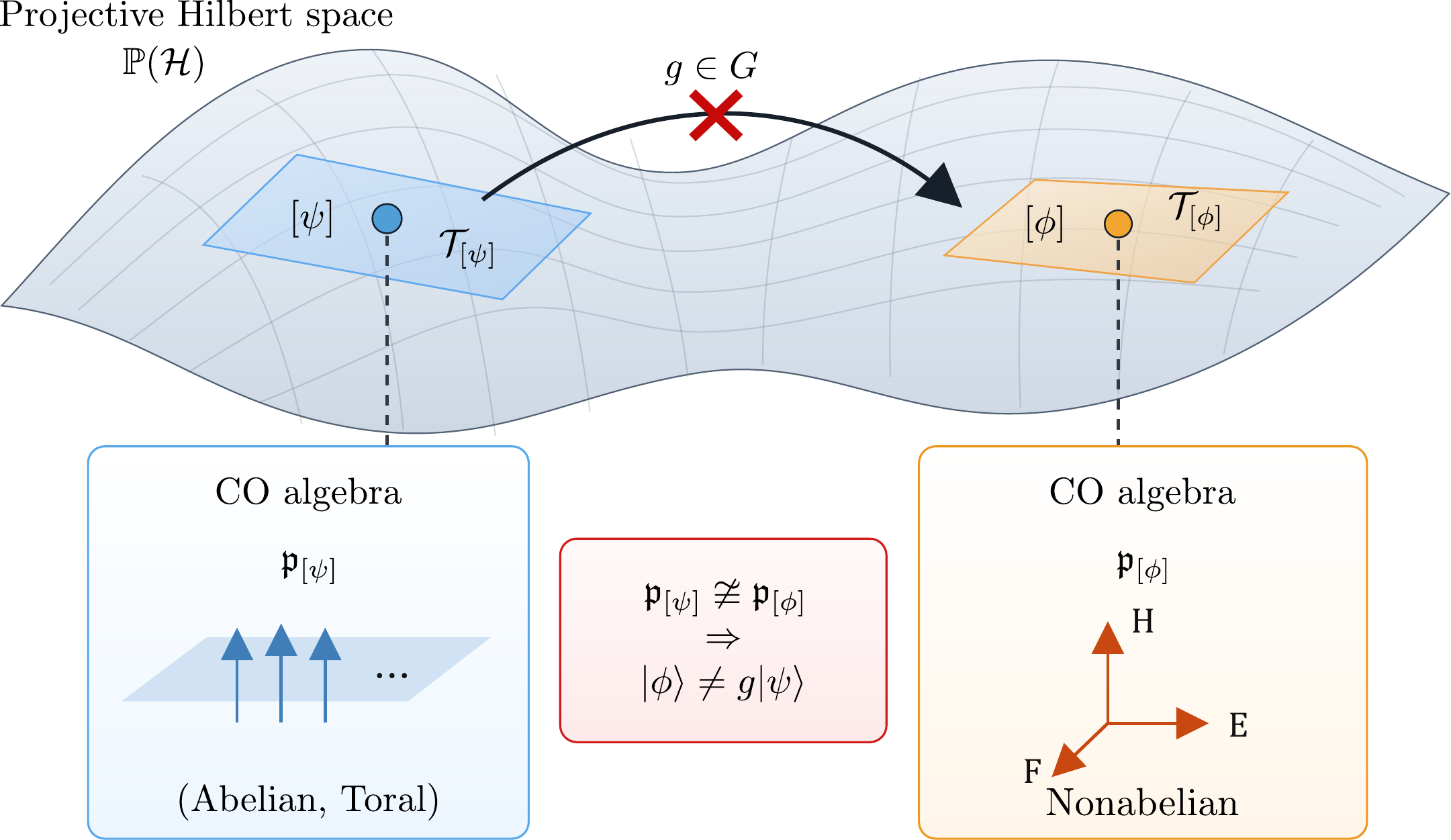}
\caption{\textbf{Algebraic  COs  and the tangent planes of state orbits.} The grey curved surface represents projective Hilbert space. The colored planes schematically represent the orbit tangents at the indicated states under the allowed transformations. After removing the component along the state ray, the vectors $\Delta O^\mu\ket\psi$ span each tangent plane. Their Gram matrix is the covariance $\bm C$, and a CO is a linear combination producing the zero tangent vector. A map between two states conjugates their CO algebras. The two illustrated algebras have different commutation relations, which rules out conversion even if their dimensions agree.}
\label{fig:orbits}
\end{figure}

\subsection{Structure of the CO algebra}

As previously mentioned, characterizing the algebra $\pstab[\psi]$ is extremely important to understand the states that $\ket \psi$ can be converted into. For such a characterization, we make use of the Cartan--root decomposition of $\mathfrak g$~\cite{hall2013lie}. For the  complex reductive Lie algebra $\mathfrak g$, let $\mathfrak g^z$ denote a Cartan subalgebra and let $\mathcal R$ be the corresponding set of roots. After choosing a set of positive roots $\mathcal R_+$, one can always decompose the algebra as
\begin{equation}
\mathfrak g = \mathfrak g^z \oplus \mathfrak g^+ \oplus \mathfrak g^-, \qquad \mathfrak g^\pm = \bigoplus_{\gamma\in\mathcal R_+} \mathfrak g^{\pm\gamma}.
\label{eq:cartan-root}
\end{equation}
Here we recall that the operators in $\mathfrak g^z$ are mutually commuting and semisimple, while the root spaces $\mathfrak g^{\pm\gamma}$ contain raising and lowering operators satisfying
\begin{equation}
[O^z,O^{\pm\gamma}] = \pm\gamma(O^z)O^{\pm\gamma}, \qquad O^z\in\mathfrak g^z.
\end{equation}
In what follows we will denote as $\ket {\psi_{\mathrm{hw}}}$ the highest-weight state of the algebra, i.e., a simultaneous eigenstate of the Cartan generators that is annihilated by all positive-root generators $O^{+\gamma}$.

The same decomposition can be used to organize the CO algebra of a state. In general, since $\pstab[\psi]$ need not be aligned with the reference decomposition in Eq.~\eqref{eq:cartan-root}, we can choose an adapted decomposition
$\mathfrak g=\mathfrak g^{ z'}\oplus\mathfrak g^{ +'}\oplus\mathfrak g^{-'}$
such that
\begin{equation}
\pstab[\psi] = \pstab^{z'}\![\psi] \oplus \pstab^{+'}[\psi] \oplus \pstab^{-'}[\psi], \quad \pstab^{\mu'}[\psi] \subseteq \mathfrak g^{\mu'}.
\label{eq:co-root-decomposition}
\end{equation}
Here, the prime index denotes that there exists a transformation $g\in G$ that maps the adapted decomposition to the reference one. Now, the Cartan sector $\pstab^{z'}[\psi]$ contains the semisimple conserved directions, while $\pstab^{\pm'}[\psi]$ contain the conserved root directions. Since root operators are nilpotent, any CO in $\pstab^{\pm'}[\psi]$ necessarily has zero eigenvalue, implying that the character $\chi_\psi$ vanishes on these sectors, and any nonzero CO eigenvalue is carried by the Cartan component. 

\subsection{Bounds on conserved operators and orbit dimensions}

Having characterized the algebra generated by the algebraic  COs, we now shift gears and ask the questions: \textit{What is the minimum number of algebraic  COs a state can possess?} and inversely, \textit{What are the states which possess the largest number of algebraic  COs, or equivalently, the smallest orbits?} For an irreducible representation of a complex reductive group $G$, the answer to these questions is captured in the following theorem, whose proof can be found in the SI.

\begin{theorem}[Bounds on algebraic  COs and orbit dimensions]
\label{thm:highest}
For every non-zero $\ket{\psi}\in\mathcal H$ we find
\begin{equation}
 \max\{0,D_{\mathfrak g}+1-D_{\mathcal H}\} \leq D_{\pstab[\psi]} \leq
 D_{\pstab[\psi_{\rm hw}]}\,.
\end{equation}
The upper bound on the number of algebraic  COs, and equivalently the lower bound on the orbit dimension, is saturated if and only if
$[\psi]\in G\cdot[\psi_{\mathrm{hw}}]$.
\end{theorem}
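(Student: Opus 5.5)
The plan is to work entirely with the orbit-dimension identity Eq.~\eqref{eq:geometry}, $D_{\pstab[\psi]}=D_{\mathfrak g}-\dim G\cdot[\psi]$. This turns both bounds into statements about orbit dimensions in $\mathbb P(\mathcal H)$.

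\emph{Lower bound.} The orbit $G\cdot[\psi]$ is a complex submanifold of $\mathbb P(\mathcal H)$, whose complex dimension is $D_{\mathcal H}-1$. Equivalently, the tangent vectors $\mathcal T_{[\psi]}(O)=\Delta O\ket\psi$ all lie in the orthogonal complement of $\ket\psi$, which has dimension $D_{\mathcal H}-1$. Hence ${\rm rank}\,{\bf C}\le D_{\mathcal H}-1$, and Eq.~\eqref{eq:geometry} gives $D_{\pstab[\psi]}\ge D_{\mathfrak g}+1-D_{\mathcal H}$. Nonnegativity is trivial, so this step is routine.

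\emph{Upper bound.} Here I would use the standard closed-orbit argument (Borel fixed point / Chevalley). The function $[\psi]\mapsto\dim\pstab[\psi]$ is upper semicontinuous, since it equals the dimension of the kernel of a matrix depending polynomially on $\psi$. It is also constant along orbits by Proposition~\ref{prop:co-equiv}. The closure $\overline{G\cdot[\psi]}$ in the compact space $\mathbb P(\mathcal H)$ contains a closed orbit $G\cdot[\phi]$, for example an orbit of minimal dimension in the closure. By semicontinuity, $\dim\pstab[\phi]\ge\dim\pstab[\psi]$. A closed orbit in projective space is a projective variety, hence complete. A Borel subgroup $B\subseteq G$, namely the one with Lie algebra $\mathfrak g^z\oplus\mathfrak g^+$, is connected and solvable, so it has a fixed point on this orbit. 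A $B$-fixed ray is a ray $[v]$ with $v$ a $B$-eigenvector, i.e.\ a simultaneous Cartan eigenvector annihilated by $\mathfrak g^+$. By irreducibility, $v$ is a highest-weight vector and is unique up to scale. Thus every closed orbit equals $G\cdot[\psi_{\rm hw}]$, and $D_{\pstab[\psi]}\le D_{\pstab[\psi_{\rm hw}]}$.

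\emph{Saturation.} One direction is immediate from Proposition~\ref{prop:co-equiv}. For the converse, suppose $\dim G\cdot[\psi]=\dim G\cdot[\psi_{\rm hw}]$. The boundary $\overline{G\cdot[\psi]}\setminus G\cdot[\psi]$ is a union of orbits of strictly smaller dimension. It is nonempty unless the orbit is closed. By the argument above it would contain the unique closed orbit $G\cdot[\psi_{\rm hw}]$, whose dimension would then be strictly smaller, a contradiction. Hence $G\cdot[\psi]$ is closed, and so equals $G\cdot[\psi_{\rm hw}]$.

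\emph{Main obstacle.} I expect the main obstacle to be justifying the algebraic-group facts in the paper's setting: that $G$ may be taken to be the connected complex reductive group generated by $\mathfrak g$, and that boundary orbits have strictly smaller dimension. I would cite the standard results (Borel's fixed point theorem, and the closed-orbit lemma for algebraic group actions) rather than reprove them. As an alternative route for the upper bound, I could directly show that $\pstab[\psi_{\rm hw}]$ contains a Borel subalgebra. Every subalgebra containing a Borel is parabolic, and a parabolic subalgebra is the stabilizer of a highest-weight line only for that specific line. This gives a more Lie-algebraic proof, but it still requires showing that a maximal-dimensional stabilizer contains a Borel up to conjugation, which again reduces to the fixed-point argument.
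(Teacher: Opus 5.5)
Your proposal is correct and follows essentially the same route as the paper. The lower bound is the same rank count: you confine the tangent vectors to $\ket\psi^\perp$, while the paper bounds the rank of the augmented map $(Q,a)\mapsto Q\ket\psi-a\ket\psi$ by $D_{\mathcal H}$. The upper bound and saturation rest, as in the paper, on the Borel fixed-point theorem, the uniqueness of the $B$-fixed highest-weight ray, and the strictly smaller dimension of orbit-closure boundaries. The only cosmetic difference is that you pass through a closed orbit and the semicontinuity of $\dim\pstab$, whereas the paper applies Borel directly to $\overline{G\cdot[\psi]}$ and compares $\dim G\cdot[\psi_{\rm hw}]\le\dim\overline{G\cdot[\psi]}=\dim G\cdot[\psi]$.
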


The lower bound immediately implies that whenever $D_{\mathfrak g}+1>D_{\mathcal H}$, every state necessarily possesses nontrivial COs. At the opposite extreme, Theorem~\ref{thm:highest} shows that highest-weight states possess the maximum number of COs. From Proposition~\ref{prop:co-equiv}, every state in their orbit under $G$, i.e., the so-called generalized coherent states~\cite{perelomov1977generalized,gilmore1974properties,zhang1990coherent}, therefore has the same maximal number of COs.

\subsection{Connection to parent Hamiltonians}

Here we briefly discuss the fact that the algebraic COs can also be used to engineer Hamiltonians for which the characterized state is an exact eigenstate~\cite{perez2006matrix,cerezo2016factorization,chertkov2018computational,petrovich2024covariance,petrovich2025conserved,shiraishi2017systematic}. Let us denote as $Q^\alpha$ and $Q_h^{\alpha}$ general and Hermitian COs (obtained from $\ker \bf C_+$), respectively. Then, for real coefficients $h_\alpha$ and a Hermitian matrix $\bf V$, the Hamiltonian
\begin{equation}
 H=\sum_\alpha h_\alpha Q_h^\alpha +\sum_{\alpha,\beta}V_{\alpha\beta}(\Delta Q^\alpha)^\dagger\Delta Q^\beta\,,
 \label{eq:parent-eigenstate}
\end{equation}
has $\ket\psi$ as an exact eigenstate, with
\begin{equation}
 H\ket\psi = \left(\sum_\alpha h_\alpha \lambda^\alpha \right)\ket\psi.
\end{equation}
The first term fixes the corresponding eigenvalue, while the quadratic part annihilates $\ket\psi$ and can therefore be varied without changing the eigenstate. A particularly useful case is obtained by taking $h_\alpha=0$ and ${\bf V}>0$. The resulting positive Hamiltonian
\begin{equation}
 H_0=\sum_{\alpha,\beta}V_{\alpha\beta}(\Delta Q^\alpha)^\dagger\Delta Q^\beta
 \label{eq:parent}
\end{equation}
has ground space $\ker H_0=\bigcap_\alpha\ker\Delta Q^\alpha$, meaning that $\ket\psi$ is the unique ground state whenever the common kernel of the centered COs is one-dimensional.

As we can see, this connects the CO construction with covariance-based inverse Hamiltonian methods~\cite{chertkov2018computational,petrovich2025conserved}. Different positive choices of $\bf V$ lead to different Hamiltonians with the same ground space, while Hermitian CO terms can be added without changing the target eigenstate. 

\subsection{Positive zero-eigenvalue COs}

Up to this point we have focused on algebraic COs belonging to the algebra $\mathfrak g$ associated to the free transformations, whose structure fixes the similarity transformation $Q\mapsto gQg^{-1}$. Definition~\ref{def:CO}, however, is more general, and the case of zero eigenvalue allows additional transformation laws.

Let $P\geq0$ be a CO satisfying
\begin{equation}\label{eq:positive-CO}
    P\ket{\psi}=0.
\end{equation}
Clearly,  this condition is equivalent to $    \bra{\psi}P\ket{\psi}=0$, meaning that after an overall normalization, $P$ can be interpreted as a measurement effect whose outcome occurs with zero probability. In this sense, positive zero-eigenvalue COs describe configurations that are forbidden by the state. Interestingly, the transformation law for these operators follows naturally from this interpretation. Writing $P=M^\dagger M$, Eq.~\eqref{eq:positive-CO} is satisfied if and only if $M\ket{\psi}=0$. Hence, for any state in the orbit, $\ket{\phi}=g\ket{\psi}$, the transformed constraint $M g^{-1}$ annihilates $\ket{\phi}$. The transformed operator $\widetilde P=(g^{-1})^\dagger P g^{-1}$ is positive and satisfies $\ker\widetilde P=g\,\ker P$. Hence, positive null directions are transported consistently along the same state orbit, but according to congruence rather than similarity. Moreover, this congruence transformation preserves positivity even when $g$ is nonunitary.

As we can see, positive COs can therefore provide additional necessary conditions for state convertibility, including in cases where the projective stabilizer in $\mathfrak g$ is trivial. Of course, if we allow generic positive COs, these constraints would contain essentially complete information about the state. We therefore restrict below to few-body positive COs, which provide a finer classification while directly connecting several SLOCC classification methods within the same framework.

\subsection{Application to multipartite entanglement}

We now instantiate our framework to states of $n$ distinguishable subsystems, with local dimensions $d_1,\ldots,d_n$. A general pure state can be written as
\begin{equation}
 \ket\psi=\sum_{i_1,\ldots,i_n} \Gamma_{i_1\cdots i_n} \ket{i_1}\otimes\cdots\otimes\ket{i_n},
 \quad \mathcal H=\bigotimes_{p=1}^n\C^{d_p}.
 \label{eq:distinguishable-state}
\end{equation}

\subsubsection{One-body COs}

Here, the free transformations are invertible local maps
$g_1\otimes\cdots\otimes g_n\in G^{\mathrm{dis}}=\bigotimes_{q=1}^n\mathbb{SL}(\C^{d_q})$, corresponding to SLOCC transformations of $n$ qudits with local dimensions $d_1,\ldots,d_n$~\cite{miyake2003classification}. These are generated by the local algebra
\begin{equation}
\mathfrak g^{\mathrm{dis}} =\bigoplus_{p=1}^n\slc_{d_p}(\C),
\label{eq:distinguishable-algebra}
\end{equation}
with $\mathfrak{sl}_{d_p}(\mathbb C) =
\operatorname{span}_{\mathbb C}
\left\{ G_p^{ij}-\frac{\delta^{ij}}{d_p}\id_p \right\}$ and $G_p^{ij}=|i_p\rangle\langle j_p|$.

For this local algebra, Theorem 1 immediately yields a general bound on the dimension of the algebraic CO space. Defining $\bm d=(d_1,\ldots,d_n)$ and $d=\sum_{p=1}^n d_p$, we obtain
\begin{equation}
\max\!\left\{0,\,\|\bm d\|^2+1-\prod_{p=1}^n d_p \right\} \leq D_{\pstab [\psi]} \leq \|\bm d\|^2-d,
\label{eq:quditbound}
\end{equation}
where $\|\bm d\|^2=\sum_{p=1}^n d_p^2$. The upper bound is saturated by product states, and only by product states.
Indeed, choosing a local basis such that
$|\psi_{\rm prod}\rangle = \otimes_{p=1}^n |0_p\rangle$, the corresponding algebraic COs are spanned, at each site $p$, by $G_p^{0\bar j}$ and $G_p^{\bar{i} \bar{j}}$, where here and in what follows, unbarred (barred) indices denote levels within (orthogonal to) the support of the state, and diagonal generators are understood in their traceless form. Thus, the local projective stabilizer at site $p$ has dimension $d_p(d_p-1)$, so that $D_{\pstab[\psi_{\rm prod}]} = \|\bm d\|^2-d$. 

For $n=1$, there is a single SLOCC family, represented by $|0\rangle$,
in agreement with Eq. \eqref{eq:quditbound}, since the upper and lower bounds coincide at $d (d-1)$.

For $n=2$, the orbit structure is completely determined by the Schmidt
rank. Since all nonzero Schmidt coefficients can be absorbed by invertible local transformations, every state of rank $r$ is SLOCC-equivalent to $|{\rm GHZ}_2^{\,r}\rangle
= \frac{1}{\sqrt r} \sum_{k=0}^{r-1}|k_1\rangle|k_2\rangle$ with $1\leq r\leq\min(d_1,d_2)$, implying that there are $\min(d_1,d_2)$ bipartite SLOCC orbits, labeled by $r$. In the Schmidt basis, the corresponding algebraic COs are $G_1^{ij}-G_2^{ji}$, $G_p^{i\bar j}$, $G_p^{\bar i\bar j}$ ($p=1,2$). Thus, the CO dimension distinguishes all these
orbits and decreases monotonically with the Schmidt rank:
\begin{equation}
D_{\pstab[{\rm GHZ}_2^{r}]} = \|\bm d\|^2-r(d-r)-1.
\end{equation}

More generally, Eq. \eqref{eq:quditbound} implies that every multipartite state necessarily possesses nontrivial algebraic COs whenever $\|\bm d\|^2+1>\prod_{p=1}^n d_p$. In particular, for $n=3$ and local dimensions $(2,2,2)$, $(2,2,3)$,
and $(2,3,3)$, the lower bound in Eq. \eqref{eq:quditbound} yields
$2$, $3$, and $2$ COs, respectively. In contrast, for $n=3$ with $d_p\geq 3 \ \forall \ p$, or for $n=4$ with $d_p \geq 2 \ \forall \ p$, the lower bound becomes nonpositive, implying that a generic state can have no algebraic COs in $\mathfrak g^{\rm dis}$.

\subsubsection{Rank COs}

So far we have just considered COs  which are linear combinations of local one-body operators, hence closed under commutation. For $m \geq2$, $m$-body operators are in general no longer closed  under such operation. However, we can  turn our attention to the set of local COs, i.e., operators of the form $Q=Q_S \otimes \id_{\overline S}$, acting non-trivially only on a small subset $S$ of qudits, with $\overline S$ its complement,  which do form a closed commutation algebra. For $|S|=1$, these operators are already contained in the previous one-body algebra \eqref{eq:distinguishable-algebra}, so we will focus on subsets with $|S|\geq 2$, for which local
COs can provide additional information.

For $\mathfrak g_S=\mathfrak{sl}(\mathcal H_S)$ the algebra of traceless operators supported on $S$, we define the corresponding local CO space as
\begin{equation}
\pstab_S[\psi] = \left\{
Q_S\in\mathfrak g_S: (Q_S\otimes\id_{\bar S})|\psi\rangle = \chi_\psi(Q_S)|\psi\rangle
\right\}.  
\end{equation}
Notice that $Q_S\in\mathfrak p_S[\psi]$ if and only if $\left\|(\Delta Q_S\otimes I_{\bar S})
|\psi\rangle\right\|^2 \!=\! \operatorname{Tr}_S[ \Delta Q_S^\dagger \Delta Q_S\rho_S]\!=\! \|\Delta Q_S\sqrt{\rho_S}\|^2\!=\!0$, where $\Delta Q_S = Q_S - \chi_\psi(Q_S) \id_S$ and $\rho_S=\operatorname{Tr}_{\bar S}|\psi\rangle\langle\psi|$. This immediately yields the following criterion \footnote{Notice that
$\|\Delta Q_S\sqrt{\rho_S}\|_{\rm HS}^2$ vanishes if and only if $\Delta Q_S\rho_S=0$, since $\rho_S$ and $\sqrt{\rho_S}$ have the same support.}:
\begin{proposition}[Reduced state kernel criterion]
\label{prop:rdm-kernel}
A local operator $Q=Q_S \otimes \id_{\overline S}$ is a CO of $\ket\psi$ with eigenvalue $\lambda$ if and only if
\begin{equation}
(Q_S - \lambda \id_S) \rho_S = 0.
\end{equation}
\end{proposition}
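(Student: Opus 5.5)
The plan is to reduce the eigenvalue condition on the full state to a condition on the reduced density matrix through a single norm identity, and then to pass from a Hilbert--Schmidt norm to an operator equation using the support of $\rho_S$. Set $A=Q_S-\lambda\id_S$. The condition $(Q_S\otimes\id_{\overline S})\ket\psi=\lambda\ket\psi$ is equivalent to $(A\otimes\id_{\overline S})\ket\psi=0$. In turn, this holds if and only if
\begin{equation}
\|(A\otimes\id_{\overline S})\ket\psi\|^2=\bra\psi (A^\dagger A\otimes\id_{\overline S})\ket\psi=0 .
\end{equation}
Taking the partial trace over $\overline S$, this quantity equals $\operatorname{Tr}_S[A^\dagger A\,\rho_S]$. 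Writing $\rho_S=\sqrt{\rho_S}\sqrt{\rho_S}$ and using cyclicity, it further equals $\operatorname{Tr}[(A\sqrt{\rho_S})^\dagger A\sqrt{\rho_S}]=\|A\sqrt{\rho_S}\|_{\rm HS}^2$.

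The remaining step is to show that $A\sqrt{\rho_S}=0$ if and only if $A\rho_S=0$. This is the only point that needs care. I would diagonalize $\rho_S=\sum_k p_k\ket{k}\bra{k}$ with $p_k>0$ on its support. Then $\sqrt{\rho_S}$ and $\rho_S$ have the same range, namely $\Span\{\ket k\}$. Hence either product vanishes exactly when $A$ annihilates every $\ket k$ with $p_k>0$. Concretely, $A\rho_S=A\sqrt{\rho_S}\sqrt{\rho_S}$ gives one direction. For the other, $A\sqrt{\rho_S}=A\rho_S\,\rho_S^{+1/2}$, where $\rho_S^{+1/2}$ is the pseudo-inverse square root on the support.

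Chaining these equivalences shows that $Q$ is a CO with eigenvalue $\lambda$ if and only if $(Q_S-\lambda\id_S)\rho_S=0$. It is worth noting that $\lambda$ is then forced. Taking the trace of $Q_S\rho_S=\lambda\rho_S$ gives $\lambda=\operatorname{Tr}[Q_S\rho_S]=\langle Q\rangle_\psi$. This is consistent with the centered operator $\Delta Q_S$ used just before the statement.
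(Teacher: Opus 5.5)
Your proposal is correct and follows essentially the same route as the paper: the norm identity $\|(A\otimes\id_{\overline S})\ket\psi\|^2=\operatorname{Tr}_S[A^\dagger A\rho_S]=\|A\sqrt{\rho_S}\|_{\rm HS}^2$, combined with the fact that $\rho_S$ and $\sqrt{\rho_S}$ have the same support. Your pseudo-inverse argument just writes out the support step that the paper states only in a footnote.
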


Then, for $G_S^{ij} = |i_S\rangle \langle j_S|$, the subspace of local COs is spanned by the operators $G_S^{i \bar j}$, $G_S^{\bar i \bar j}$, with $\ket{\bar j_S} \in \mathcal \ker\, \rho_S$, implying, in particular,
\begin{equation}
{\rm rank}\ \rho_S =
D_{\mathcal H_S}- \frac{D_{\pstab_S[\psi]}}{D_{\mathcal H_S}}.
\end{equation}
Moreover, we can state the following proposition:
\begin{proposition}[Local CO and reduced-kernel equivalence under SLOCC] \label{prop:pair} Let $g = \bigotimes_{p} g_p$ and $g_S = \bigotimes_{p\in S}g_p$. Then, the SLOCC equivalence $\ket{\phi} = g \ket{\psi}$ implies that
$(\pstab_S[\phi],(\chi_{S})_\phi)$ and $(\pstab_S[\psi],(\chi_{S})_\psi)$ are related through \eqref{eq:conjugacy} for $g \rightarrow g_S$, or equivalently, 
\begin{equation}
\ker\rho_S[\phi] =
(g_S^\dagger)^{-1}\ker\rho_S[\psi].
\label{nullcon}
\end{equation}
\end{proposition}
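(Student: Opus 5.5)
The plan is to derive everything from how the reduced density matrix transforms under the local map, and then feed this into Proposition~\ref{prop:rdm-kernel}. First, write $g=g_S\otimes g_{\bar S}$ with $g_{\bar S}=\bigotimes_{p\notin S}g_p$. For $\ket\phi=g\ket\psi$, I will take $\ket\phi$ unnormalized; normalization only rescales $\rho_S$, which changes neither its kernel nor the conditions $(Q_S-\lambda\id_S)\rho_S=0$. Computing the partial trace over $\bar S$ gives
\begin{equation}
\rho_S[\phi]=\operatorname{Tr}_{\bar S}\bigl[g\ket\psi\bra\psi g^\dagger\bigr]=g_S\,\operatorname{Tr}_{\bar S}\bigl[(\id_S\otimes g_{\bar S}^\dagger g_{\bar S})\ket\psi\bra\psi\bigr]\,g_S^\dagger\equiv g_S\,\sigma_S\,g_S^\dagger ,
\end{equation}
where $\sigma_S$ is a positive operator on $\mathcal H_S$.

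The key step is to show $\ker\sigma_S=\ker\rho_S[\psi]$, since this is where the nonunitarity of $g_{\bar S}$ must be handled. Set $A=g_{\bar S}^\dagger g_{\bar S}$; it is positive definite, so $c\,\id\le A\le c'\,\id$ for some $0<c\le c'$. Taking partial traces preserves operator ordering, so
\begin{equation}
c\,\rho_S[\psi]\le\sigma_S\le c'\,\rho_S[\psi].
\end{equation}
For positive operators this sandwich forces equal kernels, because $\bra v\sigma_S\ket v=0$ holds iff $\bra v\rho_S\ket v=0$, and for a positive operator this holds iff the operator annihilates $\ket v$. Equivalently, one can phrase this through Schmidt supports: $\operatorname{supp}\rho_S$ is the range of $\psi$ viewed as a map $\mathcal H_{\bar S}^*\to\mathcal H_S$, and an invertible $g_{\bar S}$ does not change that range.

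The kernel relation then follows directly. Since $g_S$ is invertible, $\rho_S[\phi]\ket v=0$ iff $\sigma_S g_S^\dagger\ket v=0$, i.e. $g_S^\dagger\ket v\in\ker\rho_S[\psi]$. This gives $\ker\rho_S[\phi]=(g_S^\dagger)^{-1}\ker\rho_S[\psi]$, which is Eq.~\eqref{nullcon}.

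Finally, I will derive the CO statement from Proposition~\ref{prop:rdm-kernel} and check the claimed equivalence. Since $\operatorname{supp}\rho_S[\phi]=g_S\operatorname{supp}\rho_S[\psi]$, the condition $(Q_S-\lambda\id_S)\rho_S[\psi]=0$ is equivalent to $(g_SQ_Sg_S^{-1}-\lambda\id_S)\rho_S[\phi]=0$, because $\rho_S[\phi]=g_S\sigma_Sg_S^\dagger$ and $\sigma_S$ has the same range as $\rho_S[\psi]$. Hence $\pstab_S[\phi]=\mathrm{Ad}_{g_S}\pstab_S[\psi]$ with $(\chi_S)_\phi=(\chi_S)_\psi\circ\mathrm{Ad}_{g_S^{-1}}$, which is Eq.~\eqref{eq:conjugacy}; trace is preserved under conjugation, so $\mathfrak g_S$ maps to itself. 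The CO data is determined by the support of $\rho_S$ via Proposition~\ref{prop:rdm-kernel}, and the support is the orthogonal complement of the kernel. The support transforms by $g_S$ exactly when the kernel transforms by $(g_S^\dagger)^{-1}$, so the two formulations are equivalent. The main obstacle is the kernel-invariance step $\ker\sigma_S=\ker\rho_S[\psi]$; the rest is bookkeeping.
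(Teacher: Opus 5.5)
Your proof is correct, but it orders the argument differently from the paper.

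\textbf{The paper's route.} It obtains the CO half directly from Proposition~\ref{prop:co-equiv}. Since $g(Q_S\otimes\id_{\bar S})g^{-1}=(g_SQ_Sg_S^{-1})\otimes\id_{\bar S}$, the factor $g_{\bar S}$ cancels, and the conjugacy of $(\pstab_S,\chi_S)$ follows at the level of the full state without reduced density matrices. It then gets the kernel half by asserting that invertibility of $g_{\bar S}$ gives $\operatorname{supp}\rho_S[\phi]=g_S\operatorname{supp}\rho_S[\psi]$, and taking orthogonal complements.

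\textbf{Your route.} You work entirely with $\rho_S$. You actually prove the support invariance through $\rho_S[\phi]=g_S\sigma_Sg_S^\dagger$ and the sandwich $c\,\rho_S[\psi]\le\sigma_S\le c'\rho_S[\psi]$; this supplies exactly the justification the paper leaves implicit. You derive Eq.~\eqref{nullcon} first, and only then recover the conjugacy of the CO data from Proposition~\ref{prop:rdm-kernel}.

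\textbf{Comparison.} The paper's route is shorter and shows the local-CO statement as a special case of general orbit covariance. Yours is more self-contained and makes explicit why the same $g_S$ acts by similarity on the COs but by congruence on the kernel.

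\textbf{Two small precision points.}

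First, ``partial traces preserve operator ordering'' does not literally apply, since $(\id_S\otimes A)\ket\psi\bra\psi$ is not even Hermitian. The clean justification is $\bra v\sigma_S\ket v=\bra\psi(\ket v\bra v\otimes A)\ket\psi\ge c\,\bra\psi(\ket v\bra v\otimes\id_{\bar S})\ket\psi=c\,\bra v\rho_S[\psi]\ket v$, and likewise for $c'$.

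Second, for the ``equivalently'' you show only that the support determines the CO data. The converse holds because $\operatorname{supp}\rho_S$ is recovered as the common kernel of the centered operators $Q_S-\chi(Q_S)\id_S$ with $Q_S\in\pstab_S$. These centered operators are exactly the operators vanishing on the support.
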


The transformation of the local CO algebra and its character follows directly from Proposition 2, while the invertibility of $g_{\bar S}$ ensures $\operatorname{supp}\rho_S[\phi]
=g_S\operatorname{supp}\rho_S[\psi]$,
whose orthogonal complement yields Eq. \eqref{nullcon}. 

Since local COs are completely determined by the kernel of the
corresponding reduced density matrix, in what follows, we will
characterize the local CO data through $\ker\rho_S$ itself. Equivalently,
we may restrict attention to the positive zero-eigenvalue COs
\begin{equation}
P_S^{\bar k} = \left(|\bar k_S\rangle\langle \bar k_S|\right)
\otimes \mathbb I_{\bar S},
\qquad |\bar k_S\rangle\in\ker\rho_S, 
\end{equation}
and we will refer to these as rank COs. This construction directly recovers the reduced state rank constraints commonly used in SLOCC classification \cite{dur2000three,li2012method,li2012classification}. Importantly, we note that following  Eq.~\eqref{nullcon}, given some group element $g_S$, then  $\left(|\bar k_S\rangle\langle \bar k_S|\right)$ transforms as $(g_S^{-1})^\dagger\left(|\bar k_S\rangle\langle \bar k_S|\right) \,g_S^{-1}$, i.e., via congruence and not via similarity as the algebraic COs in Proposition~\ref{prop:co-equiv}. 

In the remainder of this work, we restrict the rank-CO data to
two-site ($p\neq q$) subsets, $|S|=2$. 
This is the simplest case beyond the one-body operators already contained 
in $\mathfrak g^{\rm dis}$. Moreover, the corresponding reduced state $\rho_{pq}$ is completely determined by one- and two-body correlations, i.e., by the same level of correlation data required to construct the covariance matrix of the algebraic COs. 

\subsubsection{Qubit systems}

We now focus on qubit systems, $d_p=2 \ \forall \ p$, such that $\mathfrak g^{\rm dis}$ is the algebra generated by the three Pauli operators on each site $p$. Both the covariance matrix and the reduced density matrices $\rho_{pq}$ for $p<q$ are obtained entirely from one- and two-body Pauli correlations:
\begin{subequations}
\begin{eqnarray*}
C^{p \mu,q\nu} \!&=&\!\langle\sigma_p^{\mu} \sigma_q^\nu\rangle - \langle\sigma_p^{\mu}\rangle \langle \sigma_q^\nu \rangle \\
\rho_{pq} &=&
\rho_p\otimes\rho_q
+\tfrac{1}{4}\sum_{\mu,\nu}
C^{p\mu,q\nu}\,
\sigma^\mu \otimes \sigma^\nu,
\end{eqnarray*}
\end{subequations}
where $\rho_p=\frac{1}{2}(
\id + \sum_{\mu} \langle\sigma_p^\mu\rangle\,\sigma^\mu)$.

We ask how far the algebraic and rank CO data introduced above take us in classifying multipartite entanglement. Rather than retaining only the dimension of the algebraic stabilizer or the ranks of the reduced states, we keep the complete algebraic CO structure together with its eigenvalue character and the full pair-kernel data. These complementary constraints lead to a refined classification of multiqubit states. 

In the following, we employ the method and algorithm developed in Section B of the Methods to obtain an exhaustive classification of all allowed and realizable algebraic and rank CO data. We first determine all possible algebraic CO data, then incorporate the compatible pair rank CO spaces, identifying equivalent configurations under simultaneous invertible local transformations and qubit permutations. Finally, we verify that every resulting type is realized by a pure state with exactly the prescribed CO data.
    
\begin{theorem}[Classification from algebraic COs]
\label{thm:stabilizer-classification}
Under invertible local maps and qubit permutations, the algebraic CO data
$(\pstab_[\psi],\chi_\psi)$
give $1,2,4,17$ types for $1$ through $4$ qubits, $76$ fixed types and $4$ continuous families for $5$ qubits, and $430$ fixed types and $32$ continuous families for $6$ qubits, as shown in Table \ref{tab:counts}.  Every fixed type and every allowed parameter value is realized by a pure state.
\end{theorem}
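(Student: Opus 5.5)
The proof is an exhaustive enumeration, so the plan is to reduce it to a finite, computer-checkable classification problem. Each qubit contributes a copy of $\slc_2(\C)$, and $\pstab[\psi]\subseteq\bigoplus_p\slc_2(\C)$ is a Lie subalgebra with a character $\chi_\psi$. By Proposition~\ref{prop:co-equiv}, the data $(\pstab[\psi],\chi_\psi)$ is defined only up to conjugation by $\bigotimes_p\mathbb{SL}_2(\C)$ and permutations of the qubits. I would therefore first classify all subalgebras of $\slc_2^{\oplus n}$ that can occur, up to this action.

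The first structural step uses the projections $\pi_p:\pstab[\psi]\to\slc_2(\C)$ onto each site. Every subalgebra of $\slc_2$ is conjugate to one of $0$, a Cartan line $\langle Z\rangle$, a nilpotent line $\langle\sigma^+\rangle$, the Borel $\langle Z,\sigma^+\rangle$, or all of $\slc_2$. Moreover, $\pi_p(\pstab)=\slc_2$ is impossible unless qubit $p$ is factored out as a product. To see this, note that $\ket\psi$ would then be an eigenvector of a full $\slc_2$ acting on qubit $p$ alone, which forces $\rho_p$ to have a kernel. For the non-factorized part, I would then write $\pstab$ as a subdirect-type subalgebra of $\bigoplus_p\mathfrak b_p$, with each $\mathfrak b_p$ a Borel. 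Using Goursat-type arguments, such a subalgebra is described by a Cartan part $\mathfrak t\subseteq\C^n$ (the $Z_p$ coefficients), nilpotent directions $\sigma^+_p$, and linear couplings between sites. Conjugation by local diagonal and unipotent elements brings these to normal forms. Continuous families appear exactly where the normal form retains a ratio of Cartan coefficients, or a character value, that cannot be rescaled away.

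Second, I would impose realizability. For each candidate normal form $(\mathfrak a,\chi)$, the space $V_{\mathfrak a,\chi}=\{\ket\psi: Q\ket\psi=\chi(Q)\ket\psi\ \forall Q\in\mathfrak a\}$ is computed by linear algebra in the computational basis. In particular, the Cartan part selects weight spaces, namely basis strings $i$ with $\sum_p t_p(-1)^{i_p}=\chi$. I would keep only those pairs for which a generic $\ket\psi\in V_{\mathfrak a,\chi}$ has stabilizer exactly $\mathfrak a$, not larger. This is checked by computing $\ker{\bf C}$ at a random point of $V_{\mathfrak a,\chi}$ via Proposition~\ref{prop:covariance}, and comparing dimension and character. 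The same check also confirms the final sentence of the theorem, since it exhibits an explicit realizing state for each type and for each allowed parameter value. Families whose parameters hit exceptional values where the stabilizer jumps are split off, so that those values become separate fixed types.

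Third, I would remove duplicates and count. Two normal forms are identified if they are conjugate under $\bigotimes_p\mathbb{SL}_2\rtimes S_n$. The invariants used to separate them are the dimensions of $\mathfrak a$ and of its derived and nilpotent parts, the Cartan weight multiset up to permutation, $\chi$, and the projection types $\pi_p$. For $n\le3$ the count should reproduce $1,2,4$ by hand: for $n=2$ the product and Bell classes, and for $n=3$ the product, biseparable, W, and GHZ classes, which is also a check on Theorem~\ref{thm:highest}. For $n=4,5,6$ the enumeration is run algorithmically. I expect the main obstacle to be this last stage for $n=5,6$. The difficulty is proving that the list of normal forms is complete and irredundant when continuous Cartan parameters are present. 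Equivalence there can be realized by nontrivial permutations combined with sign or ratio changes of the weights, and the realizable parameter loci have to be determined exactly, including the exceptional values where the stabilizer jumps. I would handle this by working with the Cartan weight configuration modulo $S_n\ltimes(\pm1)^n$ together with an overall scaling, and by treating the exceptional loci through the resulting polynomial conditions.
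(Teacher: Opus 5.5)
\textbf{The structural step rests on a false claim.} Your overall pipeline (local normal forms, an exact-stabilizer realizability filter, then a quotient by local maps and permutations) mirrors the paper's. The problem is the claim that $\pi_p(\pstab[\psi])=\mathfrak{sl}_2$ forces qubit $p$ to factor out, on the grounds that $\ket\psi$ would be an eigenvector of a full $\mathfrak{sl}_2$ acting on qubit $p$ alone. Surjectivity of the projection does not imply that the single-site components of a CO are themselves COs. For the singlet $\ket{\Psi^-_{12}}$, the CO algebra is $\mathrm{span}\{\sigma_1^\mu+\sigma_2^\mu\}$. This is a diagonal $\mathfrak{sl}_2$ that projects onto each site, yet the state is entangled. (Conversely, a factored qubit contributes a Borel, never $\mathfrak{sl}_2$, since the nilpotent $\sigma_p^\pm$ cannot both annihilate a nonzero vector.)

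Restricting the non-factorized part to subalgebras of $\bigoplus_p\mathfrak b_p$ therefore discards every type whose CO algebra has a semisimple Levi factor. This already includes:
\begin{itemize}
\item the Bell class at $n=2$, whose algebra is not solvable and so has no conjugate inside $\mathfrak b_1\oplus\mathfrak b_2$;
\item the biseparable class $\mathfrak b\oplus\mathfrak{sl}_2$ at $n=3$;
\item the $n=4$ types with $\mathfrak b^{\oplus2}\oplus\mathfrak{sl}_2$, with $\mathfrak{sl}_2\oplus\mathfrak{sl}_2$, and with the four-site collective $\mathfrak{sl}_2$.
\end{itemize}
Your enumeration would thus undercount from $n=2$ on.

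\textbf{The missing ingredient is the Levi step.} Write $\pstab[\psi]=\mathfrak s\ltimes\mathrm{rad}(\pstab[\psi])$. Each simple factor of $\mathfrak s$ projects to each site either trivially or isomorphically. Two factors cannot share a site, because their images would be commuting nonzero ideals of the simple $\mathfrak{sl}_2$. Since all automorphisms of $\mathfrak{sl}_2$ are inner, independent local conjugations bring each factor to the collective form $\mathrm{span}\{\sigma^z_{\mathcal P_k},\sigma^\pm_{\mathcal P_k}\}$ on disjoint blocks, and the radical projects to zero on those blocks. Only the remainder is solvable and conjugate into a product of Borels. Its normal form should then use that $\pstab[\psi]$ is algebraic: a torus $\ltimes$ unipotent radical, with the torus made diagonal and the nilpotent part an abelian torus-stable subspace of $\bar{\mathfrak g}^+$. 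An arbitrary subdirect product is not enough, since it would also admit non-algebraic candidates such as $\mathrm{span}\{\sigma_1^z+\sigma_2^+\}$.

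\textbf{Source of the continuous families.} Your account of where these arise is also off. For a realizable stabilizer in adapted form, the Cartan sector is cut out by $\bm s(\bm x)\cdot\bm\epsilon=\lambda$ over the support of $\ket\psi$. It is therefore one of finitely many rational subspaces determined by subsets of hypercube vertices, and the character values are fixed by the same integer equations. No Cartan ratio or eigenvalue survives as a modulus. In the paper, the surviving parameters are reduced-row-echelon coefficients of the raising sector inside equal-weight blocks that the residual diagonal rescalings cannot remove. For instance, a generic plane in $\mathrm{span}\{\sigma^+_p\}$ over a four-site block is four points of a projective line and keeps their cross-ratio. A search organized by Cartan weight configurations modulo $S_n\ltimes(\pm1)^n$ and scaling would not see these.

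\textbf{Realizability must be exact.} Evaluating $\ker{\bf C}$ at a random point is only a probabilistic check, and it cannot certify every value of a continuous parameter. The paper makes the test exact. The required maximal ranks of the augmented action matrix amount to nonvanishing of minors, which are polynomials on the common eigenspace $W$. This gives a nonempty Zariski-open condition with a deterministic witness on the grid $\{0,\dots,D\}^m$. Family parameters are kept symbolic, and the domain is split wherever a rank, pivot, denominator, or equivalence relation changes.
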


\begin{table}[tbp]
\centering
\footnotesize
\setlength{\tabcolsep}{1.5pt}
\renewcommand{\arraystretch}{0}
\setlength{\arrayrulewidth}{0.4pt}

\ifdefined\CountsUnit\else
  \newlength{\CountsUnit}
\fi
\setlength{\CountsUnit}{\dimexpr(\linewidth-32.2pt)/6\relax}

\providecommand{\CountsCell}[1]{%
  \parbox[c][13.6pt][c]{\linewidth}{\centering #1}%
}
\providecommand{\CountsSubhead}[1]{%
  \parbox[c][22pt][c]{\linewidth}{%
    \centering\scriptsize\bfseries #1}%
}
\providecommand{\CountsHead}[1]{%
  \parbox[c][25pt][c]{\dimexpr2\CountsUnit+3.4pt\relax}{%
    \centering\footnotesize\bfseries #1}%
}

\begin{tabular}{
|>{\columncolor{cyan!10}[\tabcolsep][\tabcolsep]
    \centering\arraybackslash}m{8pt}
|>{\columncolor{red!10}[\tabcolsep][\tabcolsep]
    \centering\arraybackslash}m{.65\CountsUnit}
|>{\columncolor{red!10}[\tabcolsep][\tabcolsep]
    \centering\arraybackslash}m{1.35\CountsUnit}
|>{\columncolor{yellow!15}[\tabcolsep][\tabcolsep]
    \centering\arraybackslash}m{.65\CountsUnit}
|>{\columncolor{yellow!15}[\tabcolsep][\tabcolsep]
    \centering\arraybackslash}m{1.35\CountsUnit}
|>{\columncolor{green!10}[\tabcolsep][\tabcolsep]
    \centering\arraybackslash}m{.65\CountsUnit}
|>{\columncolor{green!10}[\tabcolsep][\tabcolsep]
    \centering\arraybackslash}m{1.35\CountsUnit}|}
\hline

\cellcolor{cyan!20}
& \multicolumn{2}{
    >{\columncolor{red!15}[\tabcolsep][\tabcolsep]}c|
  }{\CountsHead{Algebraic COs}}
& \multicolumn{2}{
    >{\columncolor{yellow!20}[\tabcolsep][\tabcolsep]}c|
  }{\CountsHead{Rank COs}}
& \multicolumn{2}{
    >{\columncolor{green!15}[\tabcolsep][\tabcolsep]}c|
  }{\CountsHead{Algebraic +\\Rank COs}} \\

\hhline{%
  >{\arrayrulecolor{black}}|%
  >{\arrayrulecolor{cyan!20}}-%
  >{\arrayrulecolor{black}}|------|%
}

\cellcolor{cyan!20}
  \raisebox{12.7pt}[0pt][0pt]{\bfseries $\bm n$}
& \cellcolor{red!15}\CountsSubhead{Fixed\\types}
& \cellcolor{red!15}\CountsSubhead{Continuous\\Families}
& \cellcolor{yellow!20}\CountsSubhead{Fixed\\types}
& \cellcolor{yellow!20}\CountsSubhead{Continuous\\Families}
& \cellcolor{green!15}\CountsSubhead{Fixed\\types}
& \cellcolor{green!15}\CountsSubhead{Continuous\\Families} \\
\hline

\CountsCell{1}
& \CountsCell{1} & \CountsCell{0}
& \CountsCell{1} & \CountsCell{0}
& \CountsCell{1} & \CountsCell{0} \\ \hline

\CountsCell{2}
& \CountsCell{2} & \CountsCell{0}
& \CountsCell{2} & \CountsCell{0}
& \CountsCell{2} & \CountsCell{0} \\ \hline

\CountsCell{3}
& \CountsCell{4} & \CountsCell{0}
& \CountsCell{4} & \CountsCell{0}
& \CountsCell{4} & \CountsCell{0} \\ \hline

\CountsCell{4}
& \CountsCell{17} & \CountsCell{0}
& \CountsCell{20} & \CountsCell{0}
& \CountsCell{29} & \CountsCell{0} \\ \hline

\CountsCell{5}
& \CountsCell{76} & \CountsCell{4}
& \CountsCell{129} & \CountsCell{1}
& \CountsCell{287} & \CountsCell{4} \\ \hline

\CountsCell{6}
& \CountsCell{430} & \CountsCell{32}
& \CountsCell{973} & \CountsCell{9}
& \CountsCell{4926} & \CountsCell{147} \\ \hline

\end{tabular}

\caption{\textbf{Classification from algebraic and rank COs.}
Algebraic COs are those in $\mathfrak g^{\mathrm{dis}}$, while rank COs
are the positive two-body zero-eigenvalue COs associated with the kernels
of reduced states. The three classifications consider algebraic COs alone,
rank COs alone, and both, respectively. Continuous families contain
infinitely many inequivalent CO data.}
\label{tab:counts}
\end{table}

For $1$ through $3$ qubits, these types coincide with the usual SLOCC classes up to permutations. In particular, the product, biseparable, W and GHZ classes for $3$ qubits have local CO dimensions $6,5,3,2$, respectively. The complete table covering from $1$  to  $4$ qubits and including representative states together with their CO generators, is presented in the Methods section, Table \ref{tabgen}.  

We now consider the classification induced by the rank CO data alone, independently of the algebraic COs. These COs admit a natural fermionic interpretation, as shown in the Methods. Encoding each qubit as a single fermion occupying one of two modes maps local qubit transformations to number-preserving one-body transformations acting within each qubit sector. In the same representation, the forbidden two-qubit directions underlying the rank COs correspond to one-body fermionic COs transferring fermions between different sectors. Thus, the algebraic and rank COs can be understood as the intra- and intersector parts of a larger one-body fermionic CO space, establishing a connection between qubit and fermionic entanglement classifications.

\begin{theorem}[Classification from rank COs]
\label{thm:rank-classification}
Under invertible local maps and qubit permutations, the rank CO data
$\{\ker\rho_{pq}[\psi]\}_{p<q}$
give $1,2,4,20$ types for $1$ through $4$ qubits,
$129$ fixed types and $1$ continuous family for $5$ qubits, and $973$ fixed types and $9$ continuous families for $6$ qubits,
as shown in Table \ref{tab:counts}.
Every fixed type and every allowed parameter value is realized by a pure state.
\end{theorem}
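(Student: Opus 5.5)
The classification is a finite, computer-assisted enumeration carried out in two stages: an \emph{admissibility} stage that lists all candidate pair-kernel configurations up to equivalence, and a \emph{realizability} stage that exhibits, for each surviving configuration, a pure state whose pair kernels are exactly the prescribed ones.

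\textbf{Local kernel data.} For each pair $p<q$, $\rho_{pq}$ is a $4\times4$ density matrix of rank $1$ to $4$. Hence $\ker\rho_{pq}$ is a subspace of $\C^2\otimes\C^2$ of dimension $0$ to $3$. By Proposition~\ref{prop:pair}, $\ker\rho_{pq}$ transforms by congruence, $\ker\rho_{pq}\mapsto (g_p^\dagger\otimes g_q^\dagger)^{-1}\ker\rho_{pq}$. Each single kernel can therefore be brought to a normal form under $\mathbb{SL}_2\times\mathbb{SL}_2$. A one-dimensional kernel is spanned by a vector of Schmidt rank $1$ or $2$. A three-dimensional kernel is fixed by its orthogonal complement, the support, which is again of Schmidt rank $1$ or $2$. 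A two-dimensional kernel is a pencil of $2\times2$ matrices, classified by its Kronecker/Segre type: generic with two product vectors, degenerate with one, containing only products, and so on.

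\textbf{Compatibility and enumeration.} Pure-state constraints tie the pair kernels together. If $\ket{\bar k}\in\ker\rho_{pq}$, then $(\bra{\bar k}\otimes\id)\ket\psi=0$. This linear condition on the amplitudes $\Gamma$ propagates to the supports of $\rho_p$, and to those of $\rho_{pr}$ and $\rho_{qr}$ whenever the kernel contains product vectors. The key simplification is that a product vector $\ket{a}\ket{b}$ in $\ker\rho_{pq}$ is a linear constraint on $\Gamma$. A configuration is thus described by an incidence structure of linear subspaces, and the admissible configurations are exactly those whose induced linear constraint system is consistent with a nonzero $\Gamma$ having no additional kernel vectors.

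The configurations are enumerated qubit by qubit, using the Methods algorithm. For each candidate assignment of normal forms to the $\binom n2$ pairs, all residual local-group freedom (including discrete choices) and qubit permutations are canonicalized so that each equivalence class is counted once. Configurations where a continuous invariant survives, such as cross-ratios of product-vector points on $\mathbb{P}^1$ across several kernels, are recorded as continuous families. For $n\le3$ this step is done by hand and reproduces the product, biseparable, W and GHZ classes: for example, W has $\ker\rho_{pq}$ spanned by $\ket{11}$, while GHZ has two-dimensional kernels spanned by $\ket{01},\ket{10}$. For $n=4,5,6$ the enumeration is computerized, yielding the counts in Table~\ref{tab:counts}.

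\textbf{Realizability.} For each candidate, one solves the linear system $(\bra{\bar k}\otimes\id)\ket\psi=0$ over all prescribed kernel vectors to obtain the solution space $V$. One then checks that a generic $\ket\psi\in V$ has \emph{exactly} the prescribed kernels, i.e.\ that $\operatorname{rank}\rho_{pq}$ equals the target for every pair; this rank check is simply the condition that no kernel is strictly larger. A random point of $V$, certified with exact rational arithmetic, suffices for fixed types. For continuous families the parameter is kept symbolic, and one shows that the rank conditions fail only on a proper Zariski-closed set of excluded values. Those excluded values are then exactly the ones that collapse onto other types.

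\textbf{Main obstacle.} I expect the hardest step to be showing that the enumeration is \emph{complete} and \emph{non-redundant}, that is, deciding equivalence of kernel configurations under the simultaneous group $\prod_p\mathbb{SL}_2$ together with $S_n$. Here I would first compute complete invariants: Segre types of the pencils, the incidence pattern of product vectors in $\mathbb{P}^1$ at each site, and cross-ratios when four or more points lie on one $\mathbb{P}^1$. I would then prove that these invariants separate the configurations, by exhibiting explicit normalizing $g_p$. This separation is precisely what isolates the single continuous family at $n=5$ and the nine at $n=6$.
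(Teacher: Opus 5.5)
Your skeleton matches the paper's Methods. You impose the prescribed kernel vectors as linear constraints $(\bra{\bar k}\otimes\id)\ket\psi=0$, demand the maximal ranks $\operatorname{rank}\rho_{pq}=4-\dim\mathcal K_{pq}$ (the paper's condition on $B_{pq}(w)$) on a Zariski-open subset of the solution space, keep family parameters symbolic, and quotient by one simultaneous local map and permutation. The gaps are in the places where you depart from the paper.

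\textbf{Separation.} The step you flag as the main obstacle rests on an invariant list that is not complete. Segre types, incidences of product-vector points and their cross-ratios only see product vectors. But Schmidt-rank-two kernel vectors, entangled supports of three-dimensional kernels, and the non-product generator of a degenerate pencil all carry data linking different sites.

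For example, suppose $\ker\rho_{12},\ker\rho_{23},\ker\rho_{13}$ are one-dimensional with invertible coefficient matrices $K_{pq}$. Then $K_{pq}\mapsto A_pK_{pq}A_q^{T}$ with $A_p=(g_p^\dagger)^{-1}\in\mathbb{SL}_2$. Since $A^{T}\epsilon A=\epsilon$ for $\epsilon=i\sigma_y$, the matrix $M=K_{12}\epsilon K_{23}\epsilon K_{13}^{T}\epsilon$ is conjugated by $A_1$. So $\operatorname{tr}(M)^2/\det M$ is an invariant of a configuration that contains no product vectors at all. Entangled kernel vectors also transport product-vector points between sites, which gives cross-ratios of points that do not natively lie on one $\mathbb P^1$.

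Hence ``these invariants separate the configurations'' is false on the candidate space, and your normalizing-$g_p$ step fails whenever such hidden data differ. On realized data these quantities can be pinned. For the triangle, the paper's relation $[Q^{pq}(u),Q^{qr}(v)]=Q^{pr}(uv)$ gives $K_{13}\propto K_{12}\epsilon K_{23}$, hence scalar $M$. That cross-site compatibility is exactly what your proposal lacks. The paper uses these fermionic relations to restrict which pair spaces can coexist, and it decides equivalence directly, by one simultaneous transformation on all pair spaces, rather than through invariants.

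Your $n=3$ check shows the same blind spot. For three qubits $\operatorname{rank}\rho_{pq}\le 2$, so $\ker\rho_{pq}[W_3]=\operatorname{span}\{\ket{11},\ket{01}-\ket{10}\}$ is a degenerate pencil, not the line through $\ket{11}$.

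\textbf{Cover and realizability.} You do not say how the continuum of relative positions is cut into a finite exhaustive cover. If ``qubit by qubit'' means extending the $(n-1)$-qubit catalogue, this fails. The pair data of an $n$-qubit pure state, restricted to $n-1$ sites, need not come from any $(n-1)$-qubit pure state: four-qubit type 11 has one-dimensional kernels on every pair, which is impossible for three qubits. The paper instead stratifies by algebraic CO type and uses the stability of each $\pstab_{pq}$ under the CO algebra.

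Finally, a random rational point certifies a type only when it succeeds. Discarding a candidate requires showing that all relevant minors vanish identically on the solution space. The paper turns this into a finite exact test on the grid $\{0,1,\ldots,D\}^m$.
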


For $1$ through $3$ qubits, this classification again coincides with the usual SLOCC classes up to permutations. For $4$ through $6$ qubits, however, it yields a different classification from that obtained using algebraic COs alone. The complete classification for $1$ through $4$ qubits, including representative states and their CO generators, is presented in the Methods section, Table \ref{tabgen}. These differences motivate combining both types of COs, as we now discuss.

\begin{theorem}[Classification from algebraic and rank COs]
\label{thm:classification}
Under simultaneous invertible local maps and qubit permutations, the algebraic and rank CO data
$\left(
\pstab_[\psi],\chi_\psi,
\{\ker\rho_{pq}[\psi]\}_{p<q}
\right)$
give $1,2,4,29$ types for $1$ through $4$ qubits.
At $5$ qubits they give $287$ fixed types and $4$ continuous families, while at $6$ qubits they give $4926$ fixed types and $147$ continuous families, as shown in Table~\ref{tab:counts}.
Every fixed type and every allowed parameter value is realized by a pure state.
\end{theorem}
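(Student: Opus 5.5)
The plan is to obtain Theorem~\ref{thm:classification} as a fibered refinement of Theorems~\ref{thm:stabilizer-classification} and~\ref{thm:rank-classification}, rather than by classifying from scratch. I first fix, for each $n\le 6$, the list of algebraic CO types $(\pstab[\psi],\chi_\psi)$ produced by Theorem~\ref{thm:stabilizer-classification}. For each type I choose a normal form in which $\pstab[\psi]$ is aligned with the adapted Cartan--root decomposition of Eq.~\eqref{eq:co-root-decomposition}. I then compute its residual group $N=\{g\in G^{\rm dis}\rtimes S_n:\ {\rm Ad}_g\pstab=\pstab,\ \chi\circ{\rm Ad}_{g^{-1}}=\chi\}$. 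By Proposition~\ref{prop:co-equiv}, two states with the same algebraic normal form are related by a combined transformation only through an element of $N$. By Proposition~\ref{prop:pair}, such an element acts on every pair kernel $\ker\rho_{pq}$ by the congruence law~\eqref{nullcon}. The combined classification is therefore the disjoint union, over algebraic types, of the $N$-orbits of compatible tuples $\{\ker\rho_{pq}\}_{p<q}$.

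Next I need the set of compatible tuples inside each algebraic stratum. The states with prescribed algebraic data form a linear-algebraic set $V_\pstab$: the simultaneous eigenvectors of $\pstab$ with character $\chi$, minus those having a strictly larger CO algebra. For $\ket\psi\in V_\pstab$, each $\rho_{pq}$ is a polynomial function of the amplitudes. Its kernel is a subspace of $\C^2\otimes\C^2$ of dimension $0$ to $3$. Whether a given subspace occurs is a rank condition, and the kernels with dimension $\geq 2$ are constrained by their intersection with the product and Schmidt-rank-one varieties.

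I would enumerate the tuples as follows. I stratify $V_\pstab$ by the rank vector $({\rm rank}\,\rho_{pq})_{p<q}$. Within each rank stratum, I stratify further by the $N$-invariant incidence data of the kernels: their intersections with the Segre variety of product vectors, and the coincidences of those product directions across pairs sharing a qubit. Each resulting cell is then either a single $N$-orbit or a family carrying continuous invariants, such as cross-ratios of product directions on a shared qubit. As a consistency check, the projections of this list onto each factor must reproduce the $17/20$, $76{+}4/129{+}1$ and $430{+}32/973{+}9$ counts of Theorems~\ref{thm:stabilizer-classification} and~\ref{thm:rank-classification}. The exact combined numbers $29$, $287{+}4$ and $4926{+}147$ then come out of the same computation.

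Realizability is the last step. For every fixed type and every allowed parameter value, I exhibit an explicit state in $V_\pstab$ and check two things by computing ${\bf C}$ via Proposition~\ref{prop:covariance} and the reduced states via Proposition~\ref{prop:rdm-kernel}:
\begin{itemize}
\item its algebraic CO data is exactly $(\pstab,\chi)$ and is not enlarged;
\item its pair kernels are exactly the prescribed ones.
\end{itemize}
For $n\le3$ the answer reduces to the four standard classes, since both factor classifications already coincide there.

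The main obstacle is the fiber computation for $n=5,6$. The residual group $N$ is often small or trivial, because generic algebraic types have few COs. The kernel tuples then carry genuine moduli, and I must decide which stratifications are $N$-orbits and which are continuous families. I must also prove that no two listed cells are related by a permutation combined with a local map. I would handle this by computing a complete set of $N$-invariants of the kernel tuples, namely the incidence patterns of product vectors together with the cross-ratios of the resulting qubit directions. Inequivalence then follows by comparing these invariants, and the generic point of each cell serves as the witness for realizability.
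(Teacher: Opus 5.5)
Your overall architecture is the paper's: fiber over the algebraic types, let the residual transformations that preserve $(\pstab,\chi_\psi)$ act on the pair kernels by congruence, test exact realization by rank conditions, and quotient once by a simultaneous local map and permutation. The gap is in the step that actually produces the numbers.

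\textbf{Main gap: the equivalence step.} You assert that incidence patterns of product vectors, together with cross-ratios of the resulting qubit directions, form a \emph{complete} set of $N$-invariants. From this you conclude that each cell is a single $N$-orbit or a family labeled by these invariants, but nothing in the proposal proves it.
\begin{itemize}
\item Invariants can only certify inequivalence. Rational invariants such as cross-ratios separate orbits in general only on a dense open subset. To merge two candidates you must exhibit one local map and permutation carrying all the data at once, which is exactly how the paper identifies types.
\item Your list also misses data that entangled kernel lines carry relative to the frame fixed by $\pstab$. Take $\pstab=\mathrm{span}\{\sum_p\sigma_p^z\}$ with zero character, the algebra of types 11, 13, 20, 21, 27 in Table~\ref{tabgen}. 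The identity component of $N$ is the diagonal torus, and each $\ker\rho_{pq}$ is a sum of weight spaces of $\sigma_p^z+\sigma_q^z$. A one-dimensional weight-zero kernel $a_{pq}\ket{01}+b_{pq}\ket{10}$ contains no product vector, yet suitably oriented products of the ratios $a_{pq}/b_{pq}$ around closed cycles of pairs are torus invariants.
\item Whether such quantities are pinned or free is decided by the intersector closure relations \eqref{eq:fermionic-sector-commutators}, e.g.\ $[Q^{pq}(\mathbf u),Q^{qr}(\mathbf v)]=Q^{pr}(\mathbf u\mathbf v)$. Your proposal never uses these.
\end{itemize}
Without a transitivity argument, your procedure could merge distinct orbits or report continuous families as fixed types. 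Relatedly, $N$ is not ``small or trivial'' when $\pstab$ is small. Fewer COs impose fewer constraints on $g$, and $N=G^{\rm dis}\rtimes S_n$ when $\pstab=0$.

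\textbf{Secondary gap: realizability for every parameter value.} A generic point of a cell certifies realization only on a Zariski-dense set of parameter values, not for ``every allowed parameter value.'' The paper handles this in two ways. For each candidate, it certifies the maximal ranks of $A(w)$ and $B_{pq}(w)$ on $W$ through nonvanishing minors: the Zariski-open sets intersect, and the grid $\{0,\dots,D\}^m$ gives an exact witness. For families, it keeps the parameters symbolic and subdivides the domain wherever a rank, pivot, denominator or equivalence relation changes. This is what determines excluded values such as $\xi\neq\pm1,\tfrac12,2$ in Table~\ref{tabgen}. Your projection check against the $17/20$, $76{+}4/129{+}1$ and $430{+}32/973{+}9$ counts is a useful sanity test. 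It is not independent of your own enumeration, however, unlike the paper's four-qubit comparison with the known SLOCC strata.
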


As shown in Table \ref{tab:counts}, all three classifications coincide for up to $3$ qubits. At $4$ qubits, however, the algebraic and rank COs yield $17$ and $20$ types, respectively, while their combination distinguishes $29$. This demonstrates that the two types of COs provide complementary information, as illustrated by the complete classification in Table \ref{tabgen}. In particular, the families sharing equivalent algebraic CO data are
$\{9,17\}$,
$\{11,13,20,21,27\}$,
$\{14,15,24,25,29\}$,
$\{19,26\}$, and
$\{22,23,28\}$.
Conversely, those sharing equivalent rank CO data are
$\{11,15\}$,
$\{12,14\}$,
$\{19,20,22,25\}$,
$\{21,23\}$, and
$\{26,27,28,29\}$.
All remaining families are distinct under each respective classification. Some CO types in Table \ref{tabgen} contain several SLOCC branches, for which multiple representative state families are included. This ensures that every state belonging to a given CO type is SLOCC-equivalent, up to qubit permutations, to at least one of the listed representatives for some allowed choice of parameters. These branches are nevertheless counted as a single CO type.

At $5$ qubits, the algebraic and rank CO classifications yield $76$ and $129$ fixed types, respectively, with $4$ and $1$ continuous families, while their combination distinguishes $287$ fixed types and $4$ continuous families. At $6$ qubits, the algebraic and rank CO classifications yield $430$ and $973$ fixed types, respectively, with $32$ and $9$ continuous families, while their combination distinguishes $4926$ fixed types and $147$ continuous families.

\subsection{Beyond one- and two-body correlators}

Throughout this work we have restricted the classification to COs that can be obtained from one- and two-body correlations, since these quantities can be measured efficiently without reconstructing the full state. For $1$ through $4$ qubits, this information is already sufficient to obtain the complete CO classifications reported above. For larger systems, however, there is no reason to stop at two-body operators. Positive COs supported on a subset $S$ probes the reduced state $\ker\rho_S$ on $|S|$ qubits, so higher-body reduced states provide a direct way of extending the rank-CO construction whenever the one- and two-body data are no longer sufficient.

Then, our framework is sufficiently general that one could also consider an  extension of COs acting on multiple coherent copies of the state \cite{leifer2004measuring}. Indeed, since by Riesz representation theorem any homogeneous polynomial $f$ of degree $k$ can be written as
\begin{equation}
    f(\psi)=\langle w_f|\psi^{\otimes k}\rangle,
\end{equation}
for some vector $\ket{w_f}$ on $k$ copies. Hence, the condition $f(\psi)=0$ is equivalent to the positive zero-eigenvalue CO
\begin{equation}
    \ket{w_f}\!\bra{w_f}
\end{equation}
annihilating $\ket{\psi}^{\otimes k}$. Polynomial-invariant approaches, and similarly constructions based on polynomial covariants, can therefore also be incorporated into the CO framework when coherent copies of the state are available~\cite{luque2003polynomial,gour2013classification,walter2013entanglement,gharahi2020fine}. Together with higher-body rank COs, this provides a systematic route for going beyond the low-body classification considered here and connecting stabilizer-, rank-, and polynomial-based descriptions within the same framework.

\section{Discussion}
In this work, we developed a framework for classifying quantum states through their conserved operators. When restricted to the algebra generating the allowed transformations, the COs coincide with the projective stabilizer of the state. Their number determines the dimension of the orbit, while the structure of the algebra contains additional information that can be used for classification. In particular, the commutation relations and eigenvalues characterizing the CO algebra can rule out transformations between states whose orbits have the same dimension. We also showed that the largest CO algebras correspond to highest-weight states, which for local transformations are product states. This classification is necessarily coarser than full SLOCC equivalence, since distinct SLOCC orbits can share the same algebraic CO data and have full-rank two-body reduced states.

In addition to the previous, we also considered positive zero-eigenvalue COs. For multipartite entanglement, the local positive COs considered here probe the kernels of reduced density matrices, so their number recovers rank-based SLOCC constraints previously studied in the literature. In this way, our work allows us to combine algebraic- and rank-based classifications within the same CO framework. This unification also leads to a substantial practical refinement, since the rank COs can distinguish states with identical algebraic COs and identical reduced state ranks. Importantly, all the information entering our classification is contained in one- and two-body correlations, allowing local conversion to be ruled out without full state reconstruction.

This low-body character is one of the main advantages of the resulting classification. In fact, recent work has shown that complete one- and two-point correlations are already sufficient to determine continuous strong symmetries and their characters~\cite{yan2026strong}. Thus, by combining this algebraic information with the kernels of two-body reduced states, we obtain a substantially finer classification of state convertibility while using the same level of correlation data. To our knowledge, this gives the most refined exhaustive classification of SLOCC-convertibility constraints through six qubits obtained entirely from one- and two-body correlations.  Higher-body and multi-copy COs, discussed above, provide a natural route for refining this description further.

\section{Methods}
\begingroup
\setlength{\parskip}{1.5pt}
\linespread{0.98}\selectfont

\subsection{Fermionic representation of rank COs}

Here we show the notable fact that the algebraic and rank COs can be combined into a single one-body fermionic description. More than a curiosity, this connection sheds light on the role of the rank COs. To begin, let us associate to each qudit $p$, $d_p$ fermionic modes, and let
\begin{equation}
 \iota:\bigotimes_{p=1}^n \C^{d_p}\longrightarrow\mathcal F_{d}
\end{equation}
denote the embedding into the single-occupancy sector of the $d=\sum_p d_p$-mode fermionic Fock space such that~\cite{levay2015embedding}
\begin{equation}
\iota\ket{i_1,\ldots,i_n} = c_{1i_1}^\dagger\cdots c_{ni_n}^\dagger
 \ket{\mathrm{vac}}.
\label{eq:fermionic-embedding-methods}
\end{equation}
The embedded states contain exactly one fermion in each $d_p$ modes associated with a qudit, such that $\hat{N}_p \iota| \psi\rangle = \iota| \psi\rangle$, with
\begin{subequations}
\begin{equation}
\hat{N}_p = \sum_i G_{pp}^{ii}, \quad p=1,\ldots,n,
\label{numberop}
\end{equation}
and $G_{pq}^{ij}=c_{pi}^\dagger c_{qj}$. Then, one can see that a local matrix ${\bf u}$ acting on $\mathcal{H}_p$ is represented by the number-preserving one-body operator $\sum_{ij} u_{ij} G_{pp}^{ij}$, and then the algebraic COs in the fermionic framework have the form
\begin{equation}
Q^{\rm dis} = \sum_p \sum_{ij} u^p_{ij} G_{pp}^{ij}, \quad \sum_i u^p_{ii} = 0.
\label{Qdis}
\end{equation}

In general, we can consider the covariance matrix associated with the full set of one-body operators $G_{pq}^{ij}$. Since $\iota|\psi\rangle$ contains exactly one fermion in each qubit sector, this covariance is block diagonal with respect to the sector-transfer
pattern,
\begin{equation}
{\bf C}(\iota|\psi\rangle) = {\bf C}^{\rm dis}(|\psi\rangle) \oplus \bigoplus_{p\neq q} {\bf C}^{pq}(\iota|\psi\rangle). \nonumber
\label{eq:fermionic-covariance-blocks}
\end{equation}
Here, ${\bf C}^{\rm dis}$ is the covariance of the intrasector operators
$G_{pp}^{ij}$, whose null space leads to the trivial COs \eqref{numberop} and to the algebraic COs \eqref{Qdis}, while ${\bf C}^{pq}$ is the $d_p d_q\times d_p d_q$ covariance block of the operators $G_{pq}^{ij}$ for fixed $p\neq q$, whose null space leads to operators of the form
\begin{equation}
Q^{pq} =\sum_{i,j} u^{pq}_{ij}G_{pq}^{ij}.
\label{eq:transfer-methods}
\end{equation}
\end{subequations}

Finally, the complete one-body fermionic CO space for the states $\iota |\psi\rangle$ decomposes as
\begin{equation}
\mathfrak p[\iota \psi] = \operatorname{span}\{\hat N_p\}_{p=1}^n \oplus \mathfrak p_{\rm dis}[\iota \psi]
\oplus \bigoplus_{p\neq q} \mathfrak p_{pq}[\iota \psi],
\label{eq:fermionic-co-decomposition}
\end{equation}
where $\mathfrak p_{\rm dis}[\psi]$ is the subspace spanned by the intrasectors COs of the form \eqref{Qdis}, while $\pstab_{pq}[\psi]$ is the subspace spanned by the intersector COs of the form \eqref{eq:transfer-methods}. Each of the subspaces appearing in Eq. \eqref{eq:fermionic-co-decomposition} is itself a Lie algebra: $\mathfrak p_{\rm dis}[\psi]$ is the CO algebra in the distinguishable representation, while $\operatorname{span}\{\hat N_p\}_{p=1}^n$ and $\mathfrak p_{pq}[\psi]$ are Abelian algebras (i.e., all of its elements mutually commute). However, elements belonging to the different subalgebras need not commute with one another. The three relevant intersector commutation relations are
\begin{subequations}
\begin{eqnarray}
\left[Q^{\rm dis}({\bf u}^p), Q^{pq}(\bf u) \right]
&=& Q^{pq}({\bf u}^p{\bf u}-{\bf u}{\bf u}^q), \\
\left[Q^{pq}({\bf u}),Q^{qp}({\bf v})\right] &=&
Q^{pp}({\bf u}{\bf v})-Q^{qq}({\bf v}{\bf u}), \\
\left[ Q^{pq}({\bf u}), Q^{qr}({\bf v})\right]
&=& Q^{pr}({\bf u}{\bf v}),
\end{eqnarray}
\label{eq:fermionic-sector-commutators}
\end{subequations}
\!\!for $p \neq q \neq r$. We also note that the intersector CO spaces transform covariantly under
local transformations for arbitrary local dimensions. If $|\phi\rangle = (g_1\otimes\cdots\otimes g_n)|\psi\rangle$ then
\begin{equation}
\mathfrak p_{pq}[\iota \phi] =
g_p\,\mathfrak p_{pq}[\iota \psi]\,g_q^{-1}.
\label{eq:intersector-transformation}
\end{equation}
Thus, the relative geometry of the intersector CO spaces is preserved along the SLOCC orbit, and together with Proposition 2, Eq. \eqref{eq:intersector-transformation} provides additional orbit constraints that can further refine the algebraic classification. 

However, for qubits, we now show that each intersector space
$\mathfrak p_{pq}[\psi]$ is linearly isomorphic to
$\ker\rho_{pq}[\psi]$, and therefore contains exactly the same information as the corresponding pair-kernel data.

Using the bipartite decomposition $pq$-rest, 
\begin{equation}
 \ket\psi
 =\sum_{i,j=0}^1\ket{i_pj_q} \ket{\phi_{ij}},
\end{equation}
a direct calculation shows that $Q^{pq} \iota\ket\psi=0$ is equivalent to
\begin{equation}
\sum_{i,j=0}^1\beta_{ij}^{pq}\ket{\phi_{ij}}=0,
 \label{eq:pair-contraction-methods}
\end{equation}
for ${\bm \beta}^{pq} = -i {\bm \sigma}_y {\bf u}^{pq}$, which is precisely the condition for the pair ket
$\ket{\beta_{pq}} = \sum_{ij} \beta^{pq*}_{ij}\ket{ij}$ to belong to $\ker \rho_{pq}[\psi]$. Thus, it follows that $\pstab_{pq} [\psi] \sim \ker \rho_{pq}[\psi]$ and in particular,
\begin{equation}
\dim \pstab_{pq}[\psi] = \dim \ker  \rho_{pq}[\psi] = 4- {\rm rank} \rho_{pq}[\psi].
\end{equation}
The algebraic and rank COs can therefore be viewed as complementary parts of the same one-body fermionic CO space, with the former acting within the modes associated with each qubit and the latter transferring a fermion between different qubits.

\subsection{Classification of the algebraic and rank CO data in qubit systems}

We now describe how the exhaustive classifications reported in Theorems \ref{thm:stabilizer-classification}, \ref{thm:rank-classification} and \ref{thm:classification} were obtained. 

To begin, the algebraic classification
retains the pair
\begin{equation}
\mathcal D_{\mathrm{alg}}(\psi) =
\bigl(\pstab[\psi],\chi_\psi\bigr),
\end{equation}
with two such data sets identified when they are related by a single
invertible local transformation and a permutation of the qubits, as
in Proposition~\ref{prop:co-equiv}.

The first step is to put the local CO algebra in a basis adapted to
its algebraic structure. Then, we partition the set of $n$ sites into $K+1$ disjoint blocks $\mathcal P_1,\ldots,\mathcal P_K$ and $\mathcal{\bar P}$, and we write 
\begin{equation}
\pstab[\psi] = \bigoplus_{k=1}^K \pstab_{k}[\psi] \oplus \bar {\pstab}[\psi], 
\label{eq:adapted-methods}
\end{equation}
where each $\mathfrak p_k[\psi]$ is a collective $\mathfrak{sl}_2$ algebra
supported on the block $\mathcal P_k$,
\begin{equation}
\mathfrak p_k[\psi] =
{\rm span} \left\{\sigma_{\mathcal P_k}^z,\, \sigma_{\mathcal P_k}^+,\, \sigma_{\mathcal P_k}^- \right\}, \quad
\sigma_{\mathcal P_k}^{\mu} = \sum_{p\in\mathcal P_k} \sigma_p^\mu, 
\end{equation}
while $\bar{\mathfrak p}[\psi]$ contains the remaining COs, for which the nilpotent part can be chosen entirely in the positive-root sector by local Weyl transformations, such that
\begin{equation}
\bar{\mathfrak p}[\psi] = \bar{\mathfrak p}^z[\psi] \oplus \bar{\mathfrak p}^+[\psi], \quad \bar{\mathfrak p}^\mu[\psi] \subseteq \bar{\mathfrak g}^\mu,
\end{equation}
with $\bar{\mathfrak g}^\mu = {\rm span}\{ \sigma_{\bar p}^\mu \ \forall \  {\bar p} \in \mathcal{\bar P}\}$. The complete proof of Eq. \eqref{eq:adapted-methods} and the
construction of the corresponding local basis are given in the SI.

In this adapted basis, the diagonal sector can be enumerated
directly from the computational support of the state. We denote ${\bm x}=(x_1,\ldots,x_n)\in\{0,1\}^n$ and
$|{\bm x}\rangle\equiv|x_1\cdots x_n\rangle$. Thus, if
\begin{equation}
\ket\psi=\sum_{\bm x} \psi_{\bm x} \ket {\bm x}, \quad {\bm s}({\bm x})=((-1)^{x_1},\ldots,(-1)^{x_n}),
\end{equation}
and $Q^z=\sum_p \epsilon_p\sigma_p^z$, then $Q^z|\psi\rangle = \lambda |\psi\rangle$ if and only if
\begin{equation}
{\bm s}({\bm x})\cdot {\bm \epsilon}= \lambda, \quad \forall \ {\bm x} : \psi_{\bm x} \neq 0.
 \label{eq:diagonal-support-methods}
\end{equation}
Each computational basis state $|{\bm x}\rangle$ defines a sign vector
${\bm s}({\bm x})\in\{\pm1\}^n$, i.e., a vertex of the $n$-dimensional hypercube $\{\pm1\}^n$. Hence, the possible Cartan sectors $\mathfrak p^z{[\psi]}$ are therefore determined by the affine spans of subsets of these vertices. When applying this construction to the decomposition in Eq. \eqref{eq:adapted-methods}, the
Cartan generators associated with the collective blocks $\mathcal P_k$
are already fixed. Hence, only the residual Cartan sector supported on
$\bar P$ needs to be enumerated.

Once $\bar{\mathfrak p}^{\,z}[\psi]$ is fixed, the sites in $\bar P$ are grouped according to their weight under its adjoint action. The raising sector $\bar{\mathfrak p}^{\,+}[\psi]$ must decompose within these equal-weight blocks. We enumerate the possible subspaces in reduced row echelon form, use the residual diagonal local transformations to remove coefficients that can be fixed without loss of generality, and retain all remaining parameters and zero patterns. Finally, Lie closure is imposed, and descriptions related by the remaining local transformations and qubit permutations are identified. 

When moving to the refined classification, each algebraic CO type is augmented with all the rank data. That is, we study the data in the spaces
\begin{equation}
\mathcal D(\psi) = \left(\pstab[\psi],\chi_\psi, \{\ker \rho_{pq}[\psi]\}_{p<q} \right).
\label{eq:full-data-methods}
\end{equation}
Equivalently, we may work with the fermionic intersector spaces $\pstab_{pq}[\psi]$. The crucial point is that the pair spaces are not classified independently. The same local matrix $g_p$ must act on every pair containing qubit $p$. Hence, relative directions shared by different pairs must be retained. In the fermionic description, their compatibility with the local algebra and with one another follows
from relations \eqref{eq:fermionic-sector-commutators}, which strongly restrict which collections of pair spaces can coexist.

For each algebraic CO type we enumerate all compatible pair spaces, retain every parameter locus on which their dimensions or relative
geometry changes, and then quotient the complete data in Eq.~\eqref{eq:full-data-methods} by one simultaneous set of local transformations and qubit permutations. Performing separate basis
changes on different pairs would remove precisely the relative information that allows the rank COs to refine the algebraic classification.

At this point, enumerating algebraically compatible CO data is not enough, since a proposed set of constraints may force additional COs for every state satisfying them. We therefore test whether every candidate can be realized exactly.

For a proposed algebraic CO algebra $\pstab$ with character $\chi$ and pair spaces $\mathcal{K}_{pq}$, define the common state space
\footnotesize
\begin{equation}
 W= \left\{\ket w\in \mathcal{H}\!:\!\! \begin{array}{l}
 (Q-\chi(Q)\id)\ket w=0
 \quad\forall\,Q\in\pstab,\\[2pt]
 (P_{pq}^k\otimes\id)\ket w=0
 \quad\forall\,\ket{k_{pq}}\in \mathcal{K}_{pq},\ p<q
 \end{array}
 \right\}.
 \nonumber
\end{equation}
\normalsize
If $W=\{0\}$, the candidate cannot occur. Otherwise, choose a basis
$\{\ket{w_i}\}_{i=1}^{m}$ of $W$, where $m=\dim W$, and write a general state in this space as $\ket{w}=\sum_{i=1}^{m}c_i\ket{w_i}$.

For this general state, we form the augmented local action matrix
\begin{equation}
 A(w)
 =
 \bigl[
 \sigma_1^zw,\sigma_1^+w,\sigma_1^-w,
 \ldots,
 \sigma_n^zw,\sigma_n^+w,\sigma_n^-w,
 -w
 \bigr],
 \nonumber
\end{equation}
whose kernel gives all algebraic COs of $\ket w$ together with their eigenvalues. For every pair $p,q$, we also form the pair contraction matrix
\begin{equation}
B_{pq}(w) = \bigl[\ket{v_{00}},\ket{v_{01}},
\ket{v_{10}},\ket{v_{11}}\bigr],
\nonumber
\end{equation}
where $|v_{ab}\rangle = (\langle ab|_{pq}\otimes \id_{\overline{pq}})|w\rangle$ and $a,b\in\{0,1\}$. Its kernel is the pair dual space $\mathcal K_{pq}(w)$. If the prescribed dimensions are $d=\dim \pstab$, $\Delta d_{pq} = \dim {\mathcal K}_{pq}$, then the candidate is realized with no additional algebraic or rank COs precisely when
\begin{align}
 {\rm rank}\,A(w)
 &=
 3n+1-d,
 \label{eq:occurrence-ranks-methods-1}\\
 {\rm rank}\,B_{pq}(w)
 &=
 4-\Delta d_{pq}
 \qquad \forall\,p<q.
 \label{eq:occurrence-ranks-methods-2}
\end{align}
These are the largest ranks compatible with the prescribed constraints. Any smaller rank would imply the presence of additional COs not included in the candidate.

Since $\ket{w}$ depends linearly on the $c_i$, the entries of $A(w)$ and $B_{pq}(w)$ are also linear in these coordinates. Their minors are therefore homogeneous polynomials in $\bm c$. Attaining a required rank is equivalent to at least one minor of the corresponding size being nonzero, and hence defines a Zariski-open subset of $W$. If the target rank is attainable for every required matrix, each of these open sets is nonempty. Since $W$ is a complex vector space, their finite intersection is also nonempty, and there exists a single state $\ket{w}$ that realizes all prescribed ranks simultaneously. Conversely, if every minor of the required size vanishes identically on $W$, the corresponding additional CO is unavoidable for every state satisfying the candidate constraints.

This also gives a finite exact test. Suppose the product of one nonvanishing minor for each required rank has total degree at most $D$. A nonzero polynomial of degree at most $D$ cannot vanish on every point of
\begin{equation}
 \{0,1,\ldots,D\}^{m},
\end{equation}
with $m=\dim W$. Thus, whenever the candidate is realizable, an exact witness can be found by a finite search over these coordinates. This gives a computable test without restricting the analysis to a particular trial state.

Continuous families are treated in the same way, but with the family parameters kept symbolic.  For each fixed parameter value, the realization problem is again determined by polynomial rank conditions in $\bm c$. The parameter domain must then be divided whenever the rank of a constraint, a pivot, a denominator, or an equivalence relation changes. In this way, the required ranks are verified over the complete parameter domain rather than only at sampled values. 

Finally, realized candidates are quotiented only after checking a single physical equivalence acting simultaneously on the algebraic CO algebra, its character, and all pair spaces. Two candidates are therefore identified only when the same local transformation and qubit permutation maps all of these data at once. The complete classification then requires three steps. We first construct a cover of all admissible candidate data, then test the exact realization of every candidate, and finally quotient the realized data by the allowed local transformations and qubit permutations.

Importantly, the algorithm also provides a direct procedure for classifying a given quantum state. Once the exhaustive catalog has been constructed, the algebraic COs and their characters can be extracted from the covariance matrix, while the rank COs follow from the kernels of the two-body reduced density matrices. The resulting data can then be matched to the corresponding family in the catalog, up to simultaneous local transformations and qubit permutations. Thus, identifying the CO family of an arbitrary state requires only one- and two-body correlations, without repeating the exhaustive enumeration.

\subsection{Complete classification results}

The complete algebraic-plus-rank CO classification through $4$ qubits, together with representative states for each class, is presented in Table \ref{tabgen}. 

\newsavebox{\MainCellBox}
\newlength{\MainRowHeight}

\newcommand{\MainPad}[1]{%
  \sbox{\MainCellBox}{\strut #1}%
  \raisebox{0pt}%
    [\dimexpr\ht\MainCellBox+1.8pt\relax]%
    [\dimexpr\dp\MainCellBox+1.8pt\relax]%
    {\usebox{\MainCellBox}}%
}
\newcommand{\MainGroupHeight}[3]{%
  \global\MainRowHeight=0pt
  \MainMeasureRow{#1}\MainMeasureRow{#2}\MainMeasureRow{#3}%
}
\newcommand{\MainMeasureRow}[1]{%
  \sbox{\MainCellBox}{\strut #1}%
  \ifdim\dimexpr\ht\MainCellBox+\dp\MainCellBox+3.6pt\relax>\MainRowHeight
    \global\MainRowHeight=\dimexpr\ht\MainCellBox+\dp\MainCellBox+3.6pt\relax
  \fi
}
\newcommand{\MainSpan}[2]{%
  \raisebox{\dimexpr
    (#1\MainRowHeight-\MainRowHeight)/2+
    (#1\arrayrulewidth-\arrayrulewidth)/2\relax
  }[0pt][0pt]{#2}%
}

\begin{table*}[p]
\centering
\scriptsize
\setlength{\tabcolsep}{1.3pt}
\renewcommand{\arraystretch}{0}

\ifdefined\MainCellBox\else
  \newsavebox{\MainCellBox}
\fi
\ifdefined\MainRowHeight\else
  \newlength{\MainRowHeight}
\fi

\providecommand{\MainPad}[1]{%
  \sbox{\MainCellBox}{\strut #1}%
  \raisebox{0pt}%
    [\dimexpr\ht\MainCellBox+1.8pt\relax]%
    [\dimexpr\dp\MainCellBox+1.8pt\relax]%
    {\usebox{\MainCellBox}}%
}
\providecommand{\MainGroupHeight}[3]{%
  \global\MainRowHeight=0pt
  \MainMeasureRow{#1}\MainMeasureRow{#2}\MainMeasureRow{#3}%
}
\providecommand{\MainMeasureRow}[1]{%
  \sbox{\MainCellBox}{\strut #1}%
  \ifdim\dimexpr\ht\MainCellBox+\dp\MainCellBox+3.6pt\relax>\MainRowHeight
    \global\MainRowHeight=\dimexpr\ht\MainCellBox+\dp\MainCellBox+3.6pt\relax
  \fi
}
\providecommand{\MainSpan}[2]{%
  \raisebox{\dimexpr
    (#1\MainRowHeight-\MainRowHeight)/2+
    (#1\arrayrulewidth-\arrayrulewidth)/2\relax
  }[0pt][0pt]{#2}%
}

\providecommand{\MainPartialRule}{%
  \hhline{%
    >{\arrayrulecolor{black}}|>{\arrayrulecolor{gray!15}}-%
    >{\arrayrulecolor{black}}|>{\arrayrulecolor{black}}-%
    >{\arrayrulecolor{black}}|>{\arrayrulecolor{red!10}}-%
    >{\arrayrulecolor{black}}|>{\arrayrulecolor{red!10}}-%
    >{\arrayrulecolor{black}}|>{\arrayrulecolor{red!10}}-%
    >{\arrayrulecolor{black}}|>{\arrayrulecolor{red!10}}-%
    >{\arrayrulecolor{black}}|>{\arrayrulecolor{yellow!15}}-%
    >{\arrayrulecolor{black}}|>{\arrayrulecolor{yellow!15}}-%
    >{\arrayrulecolor{black}}|>{\arrayrulecolor{green!10}}-%
    >{\arrayrulecolor{black}}|>{\arrayrulecolor{green!10}}-%
    >{\arrayrulecolor{black}}|%
  }%
}

\resizebox{0.99\linewidth}{!}{%
\begin{tabular}{
|>{\columncolor{gray!15}[\tabcolsep][\tabcolsep]}w{c}{9pt}
|>{\columncolor{white!10}[\tabcolsep][\tabcolsep]}w{c}{187.70851pt}
|>{\columncolor{red!10}[\tabcolsep][\tabcolsep]}w{c}{74.78426pt}
|>{\columncolor{red!10}[\tabcolsep][\tabcolsep]}w{c}{67.87558pt}
|>{\columncolor{red!10}[\tabcolsep][\tabcolsep]}w{c}{52.7562pt}
|>{\columncolor{red!10}[\tabcolsep][\tabcolsep]}w{c}{11.07146pt}
|>{\columncolor{yellow!15}[\tabcolsep][\tabcolsep]}w{c}{134.43834pt}
|>{\columncolor{yellow!15}[\tabcolsep][\tabcolsep]}w{c}{14.58401pt}
|>{\columncolor{green!10}[\tabcolsep][\tabcolsep]}w{c}{7.97226pt}
|>{\columncolor{green!10}[\tabcolsep][\tabcolsep]}w{c}{14.83345pt}|}

\noalign{\global\arrayrulewidth=.8pt}
\hhline{~|---------|}
\noalign{\global\arrayrulewidth=.4pt}

\multicolumn{1}{c|}{} &
& \multicolumn{4}{>{\columncolor{red!15}[\tabcolsep][\tabcolsep]}c|}{%
    \MainPad{\bfseries\footnotesize Algebraic/Intrasector COs}}
& \multicolumn{2}{>{\columncolor{yellow!20}[\tabcolsep][\tabcolsep]}c|}{%
    \MainPad{\bfseries\footnotesize Rank/Intersector COs}}
& \cellcolor{green!15}
& \cellcolor{green!15} \\
\hhline{~|~|----|--|~|~|}

\multicolumn{1}{c|}{}
& \raisebox{6.6pt}[0pt][0pt]{\bfseries\footnotesize Representative state}
& \cellcolor{red!15}\MainPad{\bfseries\footnotesize Cartan}
& \cellcolor{red!15}\MainPad{\bfseries\footnotesize Non-Cartan}
& \cellcolor{red!15}\MainPad{\bfseries\footnotesize Algebra}
& \cellcolor{red!15}\MainPad{\bfseries\footnotesize $D_{\pstab}$}
& \cellcolor{yellow!20}\MainPad{\bfseries\footnotesize COs ($p<q$)}
& \cellcolor{yellow!20}\MainPad{\bfseries\footnotesize $\tfrac{\Delta D}{2}$}
& \cellcolor{green!15}\raisebox{6.6pt}[0pt][0pt]{\bfseries\footnotesize $D$}
& \cellcolor{green!15}\raisebox{6.6pt}[0pt][0pt]{\bfseries\footnotesize deg} \\
\noalign{\global\arrayrulewidth=.8pt}
\hhline{~|---------|}
\noalign{\global\arrayrulewidth=.4pt}

\multicolumn{1}{c|}{} &
\multicolumn{9}{>{\columncolor{cyan!15}[\tabcolsep][\tabcolsep]}c|}{%
  \MainPad{\bfseries $n=1$ qubit}} \\ \hline

\MainPad{1}
& \MainPad{$|0\rangle$}
& \MainPad{$\sigma_1^z(1)$}
& \MainPad{$\sigma_1^+$}
& \MainPad{$\mathfrak b$}
& \MainPad{2}
& \MainPad{---}
& \MainPad{0}
& \MainPad{2}
& \MainPad{$\times1$} \\ \hline

\multicolumn{1}{c|}{} &
\multicolumn{9}{>{\columncolor{cyan!15}[\tabcolsep][\tabcolsep]}c|}{%
  \MainPad{\bfseries $n=2$ qubits}} \\ \hline

\MainPad{1}
& \MainPad{$|00\rangle$}
& \MainPad{$\sigma_p^z(1)$}
& \MainPad{$\sigma_p^+$}
& \MainPad{$\mathfrak b^{\oplus2}$}
& \MainPad{4}
& \MainPad{$G_{12}^{0i},\,G_{12}^{11}$}
& \MainPad{3}
& \MainPad{10}
& \MainPad{$\times1$} \\ \hline

\MainPad{2}
& \MainPad{$|\Psi^-_{12}\rangle$}
& \MainPad{$\sigma_1^z+\sigma_2^z$}
& \MainPad{$\sigma_1^\pm+\sigma_2^\pm$}
& \MainPad{$\mathfrak{sl}_2$}
& \MainPad{3}
& \MainPad{$G_{12}^{01},G_{12}^{10},G_{12}^{00}\!-\!G_{12}^{11}$}
& \MainPad{3}
& \MainPad{9}
& \MainPad{$\times1$} \\ \hline

\multicolumn{1}{c|}{} &
\multicolumn{9}{>{\columncolor{cyan!15}[\tabcolsep][\tabcolsep]}c|}{%
  \MainPad{\bfseries $n=3$ qubits}} \\ \hline

\MainPad{1}
& \MainPad{$|000\rangle$}
& \MainPad{$\sigma_p^z(1)$}
& \MainPad{$\sigma_p^+$}
& \MainPad{$\mathfrak b^{\oplus3}$}
& \MainPad{6}
& \MainPad{$G_{pq}^{0i},\,G_{pq}^{11}$}
& \MainPad{9}
& \MainPad{24}
& \MainPad{$\times1$} \\ \hline

\MainPad{2}
& \MainPad{$|0\rangle|\Psi^-_{23}\rangle$}
& \MainPad{$\sigma_1^z(1),\,\sigma_2^z+\sigma_3^z$}
& \MainPad{$\sigma_1^+,\,\sigma_2^\pm+\sigma_3^\pm$}
& \MainPad{$\mathfrak b\oplus\mathfrak{sl}_2$}
& \MainPad{5}
& \MainPad{$G_{1q}^{0i},\,G_{23}^{01},G_{23}^{10},G_{23}^{00}\!-\!G_{23}^{11}$}
& \MainPad{7}
& \MainPad{19}
& \MainPad{$\times3$} \\ \hline

\MainPad{3}
& \MainPad{$|W_3\rangle$}
& \MainPad{$\sum_p\sigma_p^z(1)$}
& \MainPad{$\sigma_{1,2}^+-\sigma_{2,3}^+$}
& \MainPad{$\mathbb C\ltimes\mathbb C^2$}
& \MainPad{3}
& \MainPad{$G_{pq}^{01},\,G_{pq}^{00}+G_{pq}^{11}$}
& \MainPad{6}
& \MainPad{15}
& \MainPad{$\times1$} \\ \hline

\MainPad{4}
& \MainPad{$|{\rm GHZ}_3\rangle$}
& \MainPad{$\sigma_{1,2}^z-\sigma_{2,3}^z$}
& \MainPad{---}
& \MainPad{$\mathbb C_{\rm tor}^{2}$}
& \MainPad{2}
& \MainPad{$G_{pq}^{ii}$}
& \MainPad{6}
& \MainPad{14}
& \MainPad{$\times1$} \\ \hline

\multicolumn{1}{c|}{} &
\multicolumn{9}{>{\columncolor{cyan!15}[\tabcolsep][\tabcolsep]}c|}{%
  \MainPad{\bfseries $n=4$ qubits}} \\ \hline

\MainPad{1}
& \MainPad{$|0000\rangle$}
& \MainPad{$\sigma_p^z(1)$}
& \MainPad{$\sigma_p^+$}
& \MainPad{$\mathfrak b^{\oplus4}$}
& \MainPad{8}
& \MainPad{$G_{pq}^{0i},\,G_{pq}^{11}$}
& \MainPad{18}
& \MainPad{44}
& \MainPad{$\times1$} \\ \hline

\MainPad{2}
& \MainPad{$|00\rangle|\Psi^-_{34}\rangle$}
& \MainPad{$\sigma_{1,2}^z(1),\,\sigma_3^z+\sigma_4^z$}
& \MainPad{$\sigma_{1,2}^+,\,\sigma_3^\pm+\sigma_4^\pm$}
& \MainPad{$\mathfrak b^{\oplus2}\oplus\mathfrak{sl}_2$}
& \MainPad{7}
& \MainPad{$G_{12}^{11},G_{1q}^{0i},G_{2q}^{0i},G_{34}^{01},G_{34}^{10},G_{34}^{00}\!-\!G_{34}^{11}$}
& \MainPad{14}
& \MainPad{35}
& \MainPad{$\times6$} \\ \hline

\MainPad{3}
& \MainPad{$|0\rangle|W_3\rangle$}
& \MainPad{$\sigma_1^z(1),\,\sum_{p=2}^4\sigma_p^z(1)$}
& \MainPad{$\sigma_1^+,\,\sigma_{2,3}^+-\sigma_{3,4}^+$}
& \MainPad{$\mathfrak b\oplus(\mathbb C\ltimes\mathbb C^2)$}
& \MainPad{5}
& \MainPad{$G_{1q}^{0i},G_{pq}^{01},\,G_{pq}^{00}+G_{pq}^{11}\ (p,q=2,3,4)$}
& \MainPad{12}
& \MainPad{29}
& \MainPad{$\times4$} \\ \hline

\MainPad{4}
& \MainPad{$|0\rangle|{\rm GHZ}_3\rangle$}
& \MainPad{$\sigma_1^z(1),\,\sigma_{2,3}^z-\sigma_{3,4}^z$}
& \MainPad{$\sigma_1^+$}
& \MainPad{$\mathfrak b\oplus\mathbb C_{\rm tor}^{2}$}
& \MainPad{4}
& \MainPad{$G_{1q}^{0i},G_{pq}^{ii}\ (p,q=2,3,4)$}
& \MainPad{12}
& \MainPad{28}
& \MainPad{$\times4$} \\ \hline

\MainPad{5}
& \MainPad{$|W_4\rangle$}
& \MainPad{$\sum_p\sigma_p^z(2)$}
& \MainPad{$\sigma_{1,2,3}^+-\sigma_{2,3,4}^+$}
& \MainPad{$\mathbb C\ltimes\mathbb C^3$}
& \MainPad{4}
& \MainPad{$G_{pq}^{01},G_{pq}^{00}+G_{pq}^{11}$}
& \MainPad{12}
& \MainPad{28}
& \MainPad{$\times1$} \\ \hline

\MainPad{6}
& \MainPad{$|{\rm GHZ}_4\rangle$}
& \MainPad{$\sigma_{1,2,3}^z-\sigma_{2,3,4}^z$}
& \MainPad{---}
& \MainPad{$\mathbb C_{\rm tor}^{3}$}
& \MainPad{3}
& \MainPad{$G_{pq}^{00},G_{pq}^{11}$}
& \MainPad{12}
& \MainPad{27}
& \MainPad{$\times1$} \\ \hline

\MainPad{7}
& \MainPad{$|1100\rangle+|00\rangle|\Psi^-_{34}\rangle$}
& \MainPad{$\sigma_1^z-\sigma_2^z,\,\sum_{p=2}^4\sigma_p^z(1)$}
& \MainPad{$\sigma_3^++\sigma_4^+$}
& \MainPad{$\mathbb C\oplus\mathfrak b$}
& \MainPad{3}
& \MainPad{$G_{12}^{ii},G_{34}^{00}-G_{34}^{11},G_{pq}^{01}\ ((p,q)\neq(1,2))$}
& \MainPad{8}
& \MainPad{19}
& \MainPad{$\times6$} \\ \hline

\MainPad{8}
& \MainPad{$|\Psi^-_{12}\rangle|\Psi^-_{34}\rangle$}
& \MainPad{$\sigma_1^z+\sigma_2^z,\,\sigma_3^z+\sigma_4^z$}
& \MainPad{$\sigma_1^\pm+\sigma_2^\pm,\,\sigma_3^\pm+\sigma_4^\pm$}
& \MainPad{$\mathfrak{sl}_2\oplus\mathfrak{sl}_2$}
& \MainPad{6}
& \MainPad{$G_{pq}^{01},G_{pq}^{10},G_{pq}^{00}\!-\!G_{pq}^{11}\;(p,q=1,2\lor3,4)$}
& \MainPad{6}
& \MainPad{18}
& \MainPad{$\times3$} \\ \hline

\MainPad{9}
& \MainPad{$2|\Psi^+_{12}\rangle|\Psi^+_{34}\rangle-|1010\rangle$}
& \MainPad{$\sigma_1^z+\sigma_2^z,\,\sigma_3^z+\sigma_4^z$}
& \MainPad{---}
& \MainPad{$\mathbb C_{\rm tor}^{2}$}
& \MainPad{2}
& \MainPad{$G_{12}^{01},G_{12}^{10},G_{34}^{01},G_{34}^{10},G_{13}^{01},G_{14}^{00},G_{23}^{11},G_{24}^{10}$}
& \MainPad{8}
& \MainPad{18}
& \MainPad{$\times3$} \\ \hline

\MainPad{10}
& \MainPad{$|\Psi^-_{12}\rangle|0_30_4\rangle-|0_1 1_4\rangle|\Psi^-_{23}\rangle$}
& \MainPad{$\sigma_1^z+\sigma_2^z+\sigma_3^z(1)$}
& \MainPad{$\sigma_1^++\sigma_2^++\sigma_3^+$}
& \MainPad{$\mathfrak b$}
& \MainPad{2}
& \MainPad{$G_{pq}^{01}(p,q\neq4),G_{14}^{01},G_{34}^{00},G_{24}^{01}-G_{24}^{00}$}
& \MainPad{6}
& \MainPad{14}
& \MainPad{$\times4$} \\ \hline

\MainPad{11}
& \MainPad{$|\Psi^+_{12,34}\rangle+|\Psi^+_{13,24}\rangle+|\Psi^+_{14,23}\rangle$}
& \MainPad{$\sum_p\sigma_p^z$}
& \MainPad{---}
& \MainPad{$\mathbb C_{\rm tor}$}
& \MainPad{1}
& \MainPad{$G_{pq}^{00}+G_{pq}^{11}$}
& \MainPad{6}
& \MainPad{13}
& \MainPad{$\times1$} \\ \hline

\MainPad{12}
& \MainPad{$|0\rangle|W_3\rangle+|1111\rangle$}
& \MainPad{$2\sigma_1^z-\sum_{p=2}^4\sigma_p^z(1)$}
& \MainPad{---}
& \MainPad{$\mathbb C_{\rm tor}$}
& \MainPad{1}
& \MainPad{$G_{1q}^{00},G_{pq}^{00}+G_{pq}^{11}\ (p,q=2,3,4)$}
& \MainPad{6}
& \MainPad{13}
& \MainPad{$\times4$} \\ \hline

\MainPad{13}
& \MainPad{$|\Psi^+_{12,34}\rangle+|\Psi^+_{12}\rangle|0_31_4\rangle$}
& \MainPad{$\sum_p\sigma_p^z$}
& \MainPad{---}
& \MainPad{$\mathbb C_{\rm tor}$}
& \MainPad{1}
& \MainPad{$G_{12}^{00}+G_{12}^{11},G_{13}^{01},G_{23}^{01},G_{14}^{10},G_{24}^{10},G_{34}^{00}$}
& \MainPad{6}
& \MainPad{13}
& \MainPad{$\times6$} \\ \hline

\MainPad{14}
& \MainPad{$|{\rm GHZ}_4\rangle+\tfrac{\sqrt3}{\sqrt2}|0\rangle|W_3\rangle$}
& \MainPad{---}
& \MainPad{---}
& \MainPad{$0$}
& \MainPad{0}
& \MainPad{$G_{1q}^{00},G_{pq}^{00}+G_{pq}^{11}\;(p,q=2,3,4)$}
& \MainPad{6}
& \MainPad{12}
& \MainPad{$\times4$} \\ \hline

\noalign{\MainGroupHeight
  {$\alpha|{\rm GHZ}_4\rangle+\left(|\Psi^+_{12,34}\rangle+|\Psi^+_{13,24}\rangle+|\Psi^+_{14,23}\rangle\right)^{\rm a}$}
  {$|{\rm GHZ}_4\rangle+|\Psi^+_{12,34}\rangle+|\Psi^+_{13,24}\rangle+\beta|0110\rangle$}
  {}}
&
\rule[-\dimexpr\MainRowHeight/2-2.5pt\relax]{0pt}{\MainRowHeight}
\MainPad{$\alpha|{\rm GHZ}_4\rangle+\left(|\Psi^+_{12,34}\rangle+|\Psi^+_{13,24}\rangle+|\Psi^+_{14,23}\rangle\right)^{\rm a}$}
& & & & & & & & \\
\MainPartialRule

\MainSpan{2}{15}
& \rule[-\dimexpr\MainRowHeight/2-2.5pt\relax]{0pt}{\MainRowHeight}
  \MainPad{$|{\rm GHZ}_4\rangle+|\Psi^+_{12,34}\rangle+|\Psi^+_{13,24}\rangle+\beta|0110\rangle$}
& \MainSpan{2}{---}
& \MainSpan{2}{---}
& \MainSpan{2}{$0$}
& \MainSpan{2}{0}
& \MainSpan{2}{$G_{pq}^{00}+G_{pq}^{11}$}
& \MainSpan{2}{6}
& \MainSpan{2}{12}
& \MainSpan{2}{$\times1$} \\ \hline

\MainPad{16}
& \MainPad{$|\Psi^-_{12}\rangle|\Psi^-_{34}\rangle+\gamma|0011\rangle$}
& \MainPad{$\sum_p\sigma_p^z$}
& \MainPad{$\sigma_1^++\sigma_2^+,\,\sigma_3^-+\sigma_4^-$}
& \MainPad{$\mathbb C\ltimes(\mathbb C_+\oplus\mathbb C_-)$}
& \MainPad{3}
& \MainPad{$G_{12}^{01},G_{12}^{11}-G_{12}^{00},G_{34}^{10},G_{34}^{11}-G_{34}^{00}$}
& \MainPad{4}
& \MainPad{11}
& \MainPad{$\times3$} \\ \hline

\MainPad{17}
& \MainPad{$2|\Psi^+_{12}\rangle|\Psi^+_{34}\rangle-\delta^2|1010\rangle^{\rm b}$}
& \MainPad{$\sigma_1^z+\sigma_2^z,\,\sigma_3^z+\sigma_4^z$}
& \MainPad{---}
& \MainPad{$\mathbb C_{\rm tor}^{2}$}
& \MainPad{2}
& \MainPad{$G_{12}^{01},G_{12}^{10},G_{34}^{01},G_{34}^{10}$}
& \MainPad{4}
& \MainPad{10}
& \MainPad{$\times3$} \\ \hline

\MainPad{18}
& \MainPad{$|\Psi^+_{12}\rangle|\Psi^-_{34}\rangle+|1000\rangle$}
& \MainPad{$\sigma_1^z+\sigma_2^z$}
& \MainPad{$\sigma_3^++\sigma_4^+$}
& \MainPad{$\mathbb C_{\rm tor}\oplus\mathbb C_{\rm nil}$}
& \MainPad{2}
& \MainPad{$G_{12}^{01},G_{12}^{10},G_{34}^{01},G_{34}^{11}-G_{34}^{00}$}
& \MainPad{4}
& \MainPad{10}
& \MainPad{$\times6$} \\ \hline

\MainPad{19}
& \MainPad{$|0011\rangle+|1100\rangle-|\Psi^+_{12}\rangle|\Psi^+_{34}\rangle$}
& \MainPad{$\sum_p\sigma_p^z$}
& \MainPad{$\sum_p\sigma_p^\pm$}
& \MainPad{$\mathfrak{sl}_2$}
& \MainPad{3}
& \MainPad{$G_{12}^{00}+G_{12}^{11},G_{34}^{00}+G_{34}^{11}$}
& \MainPad{2}
& \MainPad{7}
& \MainPad{$\times3$} \\ \hline

\MainPad{20}
& \MainPad{$|\Psi^+_{12,34}\rangle+|\Psi^+_{13,24}\rangle+\epsilon|\Psi^+_{14,23}\rangle^{\rm c}$}
& \MainPad{$\sum_p\sigma_p^z$}
& \MainPad{---}
& \MainPad{$\mathbb C_{\rm tor}$}
& \MainPad{1}
& \MainPad{$G_{14}^{00}+G_{14}^{11},G_{23}^{00}+G_{23}^{11}$}
& \MainPad{2}
& \MainPad{5}
& \MainPad{$\times3$} \\ \hline

\MainPad{21}
& \MainPad{$\sqrt2\left(|\Psi^+_{13,24}\rangle+\delta|\Psi^+_{14,23}\rangle\right) +|0011\rangle^{\rm b}$}
& \MainPad{$\sum_p\sigma_p^z$}
& \MainPad{---}
& \MainPad{$\mathbb C_{\rm tor}$}
& \MainPad{1}
& \MainPad{$G_{12}^{01},G_{34}^{10}$}
& \MainPad{2}
& \MainPad{5}
& \MainPad{$\times3$} \\ \hline

\MainPad{22}
& \MainPad{$|0000\rangle+|\Psi^-_{12}\rangle|\Psi^-_{34}\rangle+|\Psi^-_{13}\rangle|\Psi^-_{24}\rangle$}
& \MainPad{---}
& \MainPad{$\sum_p\sigma_p^+$}
& \MainPad{$\mathbb C_{\rm nil}$}
& \MainPad{1}
& \MainPad{$G_{14}^{00}+G_{14}^{11},G_{23}^{00}+G_{23}^{11}$}
& \MainPad{2}
& \MainPad{5}
& \MainPad{$\times3$} \\ \hline

\MainPad{23}
& \MainPad{$|\Psi^-_{12}\rangle|\Psi^-_{34}\rangle+|\Psi^-_{13}\rangle|0_2 0_4\rangle$}
& \MainPad{---}
& \MainPad{$\sum_p\sigma_p^+$}
& \MainPad{$\mathbb C_{\rm nil}$}
& \MainPad{1}
& \MainPad{$G_{12}^{01},G_{34}^{01}$}
& \MainPad{2}
& \MainPad{5}
& \MainPad{$\times3$} \\ \hline

\noalign{\MainGroupHeight
  {$|00\rangle|\Phi^+_{34}\rangle+|\Psi^+_{13,24}\rangle+\delta|\Psi^+_{14,23}\rangle^{\rm b}$}
  {$\eta|\Psi_{12}^{+}\rangle|\Phi^+_{34}\rangle+\sqrt2|\Psi_{12}^{-}\rangle|01\rangle_{34}+|0010\rangle$}
  {}}
&
\rule[-\dimexpr\MainRowHeight/2-2.5pt\relax]{0pt}{\MainRowHeight}
\MainPad{$|00\rangle|\Phi^+_{34}\rangle+|\Psi^+_{13,24}\rangle+\delta|\Psi^+_{14,23}\rangle^{\rm b}$}
& & & & & & & & \\
\MainPartialRule

\MainSpan{2}{24}
& \rule[-\dimexpr\MainRowHeight/2-2.5pt\relax]{0pt}{\MainRowHeight}
  \MainPad{$\eta|\Psi_{12}^{+}\rangle|\Phi^+_{34}\rangle+\sqrt2|\Psi_{12}^{-}\rangle|01\rangle_{34}+|0010\rangle$}
& \MainSpan{2}{---}
& \MainSpan{2}{---}
& \MainSpan{2}{$0$}
& \MainSpan{2}{0}
& \MainSpan{2}{$G_{12}^{01},\,G_{34}^{01}+G_{34}^{10}$}
& \MainSpan{2}{2}
& \MainSpan{2}{4}
& \MainSpan{2}{$\times6$} \\ \hline

\noalign{\MainGroupHeight
  {$\theta|{\rm GHZ}_4\rangle+|\Psi^+_{12,34}\rangle+\delta\left(|\Psi^+_{13,24}\rangle+|\Psi^+_{14,23}\rangle\right)^{\rm b,d}$}
  {$|{\rm GHZ}_4\rangle+|\Psi^+_{12,34}\rangle+\epsilon|\Psi^+_{13,24}\rangle+\iota|0110\rangle^{\rm c}$}
  {$\kappa(\sqrt2|{\rm GHZ}_4\rangle-|\Psi^+_{12}\rangle|\Psi^+_{34}\rangle)+|00\rangle|\Psi^+_{34}\rangle+|\Psi^+_{12}\rangle|11\rangle$}}
&
\rule[-\dimexpr\MainRowHeight/2-2.5pt\relax]{0pt}{\MainRowHeight}
\MainPad{$\theta|{\rm GHZ}_4\rangle+|\Psi^+_{12,34}\rangle+\delta\left(|\Psi^+_{13,24}\rangle+|\Psi^+_{14,23}\rangle\right)^{\rm b,d}$}
& & & & & & & & \\
\MainPartialRule

&
\rule[-\dimexpr\MainRowHeight/2-2.5pt\relax]{0pt}{\MainRowHeight}
\MainPad{$|{\rm GHZ}_4\rangle+|\Psi^+_{12,34}\rangle+\epsilon|\Psi^+_{13,24}\rangle+\iota|0110\rangle^{\rm c}$}
& & & & & & & & \\
\MainPartialRule

\MainSpan{3}{25}
& \rule[-\dimexpr\MainRowHeight/2-2.5pt\relax]{0pt}{\MainRowHeight}
  \MainPad{$\kappa(\sqrt2|{\rm GHZ}_4\rangle-|\Psi^+_{12}\rangle|\Psi^+_{34}\rangle)+|00\rangle|\Psi^+_{34}\rangle+|\Psi^+_{12}\rangle|11\rangle$}
& \MainSpan{3}{---}
& \MainSpan{3}{---}
& \MainSpan{3}{$0$}
& \MainSpan{3}{0}
& \MainSpan{3}{$G_{12}^{00}+G_{12}^{11},G_{34}^{00}+G_{34}^{11}$}
& \MainSpan{3}{2}
& \MainSpan{3}{4}
& \MainSpan{3}{$\times3$} \\ \hline

\MainPad{26}
& \MainPad{$|\Psi^-_{12}\rangle|\Psi^-_{34}\rangle-\xi|\Psi^-_{13}\rangle|\Psi^-_{24}\rangle^{\rm e}$}
& \MainPad{$\sum_p\sigma_p^z$}
& \MainPad{$\sum_p\sigma_p^\pm$}
& \MainPad{$\mathfrak{sl}_2$}
& \MainPad{3}
& \MainPad{---}
& \MainPad{0}
& \MainPad{3}
& \MainPad{$\times1$} \\ \hline

\MainPad{27}
& \MainPad{$\tau_1|\Psi^+_{12,34}\rangle+\tau_2|\Psi^+_{13,24}\rangle+|\Psi^+_{14,23}\rangle^{\rm f}$}
& \MainPad{$\sum_p\sigma_p^z$}
& \MainPad{---}
& \MainPad{$\mathbb C_{\rm tor}$}
& \MainPad{1}
& \MainPad{---}
& \MainPad{0}
& \MainPad{1}
& \MainPad{$\times1$} \\ \hline

\MainPad{28}
& \MainPad{$|0000\rangle+|\Psi^-_{12}\rangle|\Psi^-_{34}\rangle-\xi|\Psi^-_{13}\rangle|\Psi^-_{24}\rangle^{\rm e}$}
& \MainPad{---}
& \MainPad{$\sum_p\sigma_p^+$}
& \MainPad{$\mathbb C_{\rm nil}$}
& \MainPad{1}
& \MainPad{---}
& \MainPad{0}
& \MainPad{1}
& \MainPad{$\times1$} \\ \hline

\noalign{\MainGroupHeight
  {$|{\rm GHZ}_4\rangle+\sum_{i=2}^{4}\tau_{i-1}|\Psi^+_{1i,\overline{1i}}\rangle^{\rm f}$}
  {$|{\rm GHZ}_4\rangle+\tau_1|\Psi^+_{12,34}\rangle+\tau_2|\Psi^+_{13,24}\rangle+\tau_3|0110\rangle^{\rm f}$}
  {$\upsilon(|\Phi^+_{14}\rangle|\Phi^+_{23}\rangle-\xi|\Phi^+_{13}\rangle|\Phi^+_{24}\rangle)+|00\rangle|\Psi^+_{34}\rangle-|\Psi^+_{12}\rangle|11\rangle^{\rm e}$}}
&
\rule[-\dimexpr\MainRowHeight/2-2.5pt\relax]{0pt}{\MainRowHeight}
\MainPad{$|{\rm GHZ}_4\rangle+\sum_{i=2}^{4}\tau_{i-1}|\Psi^+_{1i,\overline{1i}}\rangle^{\rm f}$}
& & & & & & & & \\
\MainPartialRule

&
\rule[-\dimexpr\MainRowHeight/2-2.5pt\relax]{0pt}{\MainRowHeight}
\MainPad{$|{\rm GHZ}_4\rangle+\tau_1|\Psi^+_{12,34}\rangle+\tau_2|\Psi^+_{13,24}\rangle+\tau_3|0110\rangle^{\rm f}$}
& & & & & & & & \\
\MainPartialRule

\MainSpan{3}{29}
& \rule[-\dimexpr\MainRowHeight/2-2.5pt\relax]{0pt}{\MainRowHeight}
  \MainPad{$\upsilon(|\Phi^+_{14}\rangle|\Phi^+_{23}\rangle-\xi|\Phi^+_{13}\rangle|\Phi^+_{24}\rangle)+|00\rangle|\Psi^+_{34}\rangle-|\Psi^+_{12}\rangle|11\rangle^{\rm e}$}
& \MainSpan{3}{---}
& \MainSpan{3}{---}
& \MainSpan{3}{$0$}
& \MainSpan{3}{0}
& \MainSpan{3}{---}
& \MainSpan{3}{0}
& \MainSpan{3}{0}
& \MainSpan{3}{$\times1$} \\ \hline

\multicolumn{10}{|>{\columncolor{violet!5}[\tabcolsep][\tabcolsep]}c|}{%
\makebox[0pt]{\MainPad{\parbox{660pt}{\centering\footnotesize
All displayed parameters are nonzero. We define
$\Delta({\bm x})=
\prod_i(x_i^2-1)
\prod_{i<j}(x_i^2-x_j^2)
\prod_{{\bm s}\in\{\pm1\}^n}
(1-\sum_i s_i x_i)$.\\[2pt]
\textsuperscript{a} $\alpha^2\neq1,9$.\quad
\textsuperscript{b} $\delta^2\neq1$.\quad
\textsuperscript{c} $\epsilon^2\neq1,4$.\quad
\textsuperscript{d} $\theta^2\neq1,\delta^2,(1\pm2\delta)^2$.\quad
\textsuperscript{e} $\xi\neq\pm1,\tfrac12,2$.\quad
\textsuperscript{f} $\Delta({\bm\tau})\neq0$.
}}}} \\
\noalign{\global\arrayrulewidth=.8pt}
\hline
\noalign{\global\arrayrulewidth=.4pt}
\end{tabular}
}

\caption{\textbf{Complete algebraic and rank CO classification through $4$ qubits.}
Representative states for the $n=1,2,3,4$ qubit families, their algebraic COs
(Cartan and non-Cartan), and their rank COs in the fermionic representation.
The notation $Q(\lambda)$ for a Cartan operator $Q$ indicates the eigenvalue when $\neq 0$.
Here, $D_{\pstab}$ ($\Delta D$) denote the number of independent algebraic (rank) COs, with $D=D_{\pstab} + \Delta D$ the total number of non-trivial one-body COs in the fermionic representation. }
\label{tabgen}
\end{table*}

The first columns display the algebraic CO generators, separating Cartan and non-Cartan COs, followed by the corresponding Lie algebra and its dimension $D_{\pstab}$. A parenthesis $(\lambda)$ denotes a nonzero CO eigenvalue $\lambda$. The next columns display the rank COs through their equivalent fermionic generators $G_{pq}^{ij}$, introduced in Eq. \eqref{eq:transfer-methods}. When several pairs are written before a colon, the displayed span applies to each of them. Only pairs with $p<q$ are listed, since the reverse directions contain no independent information. Here, $\Delta D$ counts all intersector fermionic COs, including both $pq$ and $qp$ directions. 
The total number of nontrivial one-body fermionic COs is $D=D_{\pstab}+\Delta D$, excluding the $n$ sector-number operators. The column ``deg'' indicates the multiplicity under qubit permutations. Finally, the first column provides representative states for each CO type. We use
\begin{subequations}
\begin{eqnarray}
|\Phi_{\mu,\nu}^{\pm}\rangle
&=& \frac{1}{\sqrt{2}}
\left(|0_{\mu}0_{\nu}\rangle
\pm |1_{\mu}1_{\nu}\rangle\right),\\
|\Psi_{\mu,\nu}^{\pm}\rangle
&=& \frac{1}{\sqrt{2}}
\left(|0_{\mu}1_{\nu}\rangle
\pm |1_{\mu}0_{\nu}\rangle\right),\\
\ket{W_n}
&=& \tfrac{1}{\sqrt{n}}
\sum_{i=1}^n \ket{0\cdots 1_i\cdots 0},\\
\ket{\mathrm{GHZ}_n}
&=& \tfrac{1}{\sqrt{2}}
(\ket0^{\otimes n}+\ket1^{\otimes n}),
\end{eqnarray}
\end{subequations}
where $\mu,\nu$ denote either individual qubits or collective qubit blocks (and $\ket{i_\mu} = \bigotimes_{p \in \mu} |i_p\rangle$). Some CO types contain multiple SLOCC branches, for which different representative state families are included. This ensures that every state belonging to a given CO type is SLOCC-equivalent, up to qubit permutations, to at least one of the listed representatives for some allowed choice of parameters.

To finish, we note thar in the table a notation such as $\sigma_{1,2}^+-\sigma_{2,3}^+$ indicates that both $\sigma_{1}^+-\sigma_{2}^+$ and $\sigma_{2}^+-\sigma_{3}^+$ are COs.

\section*{Code availability}
Code can be made available upon reasonable requests.

\section*{Data availability}
The complete one-through-four-qubit tables are in the Methods,  the complete five-qubit tables are in the Supplementary Information, and the six qubit table is presented in a Supplemental Data file. 

\section*{Acknowledgements}
FP and MC were supported by the Laboratory Directed Research and Development (LDRD) program of Los Alamos National Laboratory (LANL) under project number 20260043DR. This material is based upon work supported by the U.S. Department of Energy, Office of Science, National Quantum Information Science Research Centers, Quantum Science Center (LC).

\section*{Author Contributions}
The project was conceived by FP and RR. MC led the algebraic calculations. All authors contributed to the theory, writing and reviewing of the manuscript. 

\section*{Competing Interests statements}
The authors declare no competing interests.

\section*{Artificial Intelligence Disclosure}
The idea for this project was created by humans. The authors acknowledge the use of ChatGPT Astra for the deterministic code that performs the full classification.

\clearpage


%

\clearpage
\onecolumngrid
\appendix

\begin{center}
	{\Large \bf Supplementary Information for ``Classification of quantum states through conserved operators''} 
\end{center}

This Supplementary Information provides additional details supporting the results presented in the main text. We first prove the bounds on the number of algebraic COs and their saturation by highest-weight states. We then derive the adapted form of local qubit CO algebras used in the classification, followed by an independent four-qubit cross-check against known SLOCC stabilizer families. Finally, we present the complete five-qubit algebraic-plus-rank CO classification.

\section{Bounds on algebraic COs and highest-weight states}
\label{sec:highest-weight}

\begin{theorem}[Bounds on algebraic COs and orbit dimensions]
Let $G$ be a connected complex reductive group acting irreducibly on $\mathcal H$, with Lie algebra $\mathfrak g$. For every nonzero $\ket{\psi}\in\mathcal H$,
\begin{equation}
\max\{0,D_{\mathfrak g}+1-D_{\mathcal H}\}
\leq D_{\pstab[\psi]}
\leq D_{\pstab[\psi_{\rm hw}]}.
\label{eq:si-bounds}
\end{equation}
The upper bound on the number of algebraic COs, and equivalently the lower bound on the orbit dimension, is saturated if and only if $[\psi]\in G\cdot[\psi_{\rm hw}]$.
\end{theorem}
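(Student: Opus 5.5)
The starting point is the orbit--stabilizer relation in Eq.~\eqref{eq:geometry}, $D_{\pstab[\psi]}=D_{\mathfrak g}-\dim G\cdot[\psi]$. With it, both bounds become statements about orbit dimensions in $\mathbb P(\mathcal H)$. For the lower bound, the orbit $G\cdot[\psi]$ is contained in $\mathbb P(\mathcal H)$, which has complex dimension $D_{\mathcal H}-1$. Hence $\dim G\cdot[\psi]\le D_{\mathcal H}-1$, so $D_{\pstab[\psi]}\ge D_{\mathfrak g}+1-D_{\mathcal H}$; nonnegativity is trivial. Equivalently, $\operatorname{rank}{\bf C}$ is at most the dimension of the tangent space $\{\Delta O\ket\psi\}\subseteq\psi^\perp$, which is $D_{\mathcal H}-1$. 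This step is routine.

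For the upper bound I would use the classical fact, due to Borel--Weil and Kostant, that for an irreducible representation of a connected reductive $G$ the orbit of the highest-weight ray is the unique closed orbit in $\mathbb P(\mathcal H)$. It is closed because its stabilizer contains the Borel subgroup, so it is a parabolic $P$ and $G/P$ is projective. It is unique because every orbit closure is a closed $G$-stable set, hence contains a closed orbit. By the Borel fixed point theorem that closed orbit contains a $B$-fixed ray, and a $B$-fixed ray in an irreducible module must be the highest-weight line $[\psi_{\rm hw}]$.

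With that fact, given any $[\psi]$, the closure $\overline{G\cdot[\psi]}$ contains $G\cdot[\psi_{\rm hw}]$. An orbit is open in its closure, so the boundary $\overline{G\cdot[\psi]}\setminus G\cdot[\psi]$ has strictly smaller dimension than the orbit. If $[\psi]\notin G\cdot[\psi_{\rm hw}]$, then $G\cdot[\psi_{\rm hw}]$ lies in this boundary, and therefore $\dim G\cdot[\psi_{\rm hw}]<\dim G\cdot[\psi]$. Translating back through Eq.~\eqref{eq:geometry}, this gives $D_{\pstab[\psi]}<D_{\pstab[\psi_{\rm hw}]}$ strictly. Conversely, Proposition~\ref{prop:co-equiv} shows that every state in $G\cdot[\psi_{\rm hw}]$ has a CO algebra conjugate to $\pstab[\psi_{\rm hw}]$, so it attains equality. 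Together these give the inequality and the ``if and only if'' statement at once.

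The main obstacle is justifying the algebraic-geometry facts cleanly: that orbits are locally closed with boundary of strictly lower dimension, and that a $B$-fixed ray is the highest-weight line. The first is standard for algebraic group actions (Chevalley's constructibility theorem). The second follows because $\mathfrak b$ acting by scalars on $\psi$ forces $\mathfrak g^+\psi=0$ (root operators are nilpotent, hence have eigenvalue $0$), and then irreducibility gives $\mathcal H=U(\mathfrak g^-)\psi$, so $\psi$ is a highest-weight vector. A minor issue is that the paper's $G=\bigotimes\mathbb{SL}$ may only be the image of a reductive group. This is harmless, since projective orbits depend only on the image.
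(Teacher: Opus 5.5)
Your proposal is correct and follows essentially the paper's route. The lower bound comes from the dimension of the ambient space; the paper phrases this via the rank of the augmented map $A_\psi(Q,a)=Q\ket\psi-a\ket\psi$, which is at most $D_{\mathcal H}$. The upper bound comes from the Borel fixed-point theorem placing $G\cdot[\psi_{\rm hw}]$ inside $\overline{G\cdot[\psi]}$. The only difference is minor: you get strict inequality off the highest-weight orbit directly from the orbit boundary having lower dimension, whereas the paper proves the non-strict bound first and then settles the equality case using closedness of $G\cdot[\psi_{\rm hw}]\simeq G/P_\lambda$ and irreducibility of the orbit closure.
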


\subsection{The lower bound}

Define the augmented action map
\begin{equation}
A_\psi:\mathfrak g\oplus\C\longrightarrow\mathcal H,
\qquad
A_\psi(Q,a)=Q\ket\psi-a\ket\psi.
\label{eq:si-augmented-map}
\end{equation}
For nonzero $\ket\psi$, the projection
\begin{equation}
\ker A_\psi\longrightarrow\pstab[\psi],
\qquad (Q,a)\longmapsto Q,
\end{equation}
is an isomorphism, since the scalar $a$ is uniquely fixed to $\chi_\psi(Q)$. Hence,
\begin{equation}
D_{\pstab[\psi]}=D_{\mathfrak g}+1-\rank A_\psi.
\label{eq:si-augmented-rank}
\end{equation}
Since $\rank A_\psi\leq D_{\mathcal H}$ and $D_{\pstab[\psi]}\geq 0$, we immediately obtain the lower bound in Eq.~\eqref{eq:si-bounds}.

\subsection{The upper bound and saturation}

Let $V_\lambda$ be an irreducible finite-dimensional representation of $G$, with highest-weight vector $\ket{\psi_{\rm hw}}$. Choose a Borel subgroup $B\subset G$ compatible with this highest-weight convention. A ray fixed by $B$ must be annihilated by every positive-root operator, since these operators act nilpotently and therefore have no nonzero eigenvalues. In an irreducible highest-weight representation, the common kernel of all positive-root operators is the highest-weight line. Thus, $[\psi_{\rm hw}]$ is the unique $B$-fixed ray.

For any nonzero $\ket\psi\in V_\lambda$, consider the Zariski closure $\overline{G\cdot[\psi]}$. By the Borel fixed-point theorem~\cite{humphreys1975linear}, it contains a $B$-fixed ray, hence $[\psi_{\rm hw}]$. Since the closure is $G$-invariant,
\begin{equation}
G\cdot[\psi_{\rm hw}]
\subseteq \overline{G\cdot[\psi]}.
\label{eq:si-hw-closure}
\end{equation}
As an orbit and its Zariski closure have the same dimension,
\begin{equation}
\dim G\cdot[\psi_{\rm hw}]
\leq \dim G\cdot[\psi].
\end{equation}
Using
\begin{equation}
\dim G\cdot[\psi]=D_{\mathfrak g}-D_{\pstab[\psi]},
\end{equation}
we obtain the upper bound in Eq.~\eqref{eq:si-bounds}.

To prove the saturation condition, note that the highest-weight orbit is closed, since its stabilizer is parabolic and $G\cdot[\psi_{\rm hw}]\simeq G/P_\lambda$ is projective. If the upper bound is saturated, the two orbits have the same dimension. By Eq.~\eqref{eq:si-hw-closure}, the closed irreducible variety $G\cdot[\psi_{\rm hw}]$ is contained in the irreducible variety $\overline{G\cdot[\psi]}$ and has the same dimension. Since a proper closed subvariety of an irreducible variety has strictly smaller dimension, we conclude that
\begin{equation}
\overline{G\cdot[\psi]}=G\cdot[\psi_{\rm hw}],
\end{equation}
which implies $G\cdot[\psi]=G\cdot[\psi_{\rm hw}]$. The converse follows because stabilizer dimensions are constant along an orbit.

For local qudit transformations, the highest-weight orbit is precisely the set of product states, yielding the upper bound quoted in the main text.

\section{Adapted form of local qubit CO algebras}
\label{sec:adapted}

Here we provide additional details on how the CO algebras
are classified. In particular, recall that the algebra
associated with SLOCC is
\begin{equation}
\mathfrak g^{\rm dis}
=
\bigoplus_{p=1}^n\mathfrak{sl}_{2,p}(\C),
\end{equation}
and that every CO algebra $\pstab[\psi]$ is a
subalgebra of $\mathfrak g^{\rm dis}$.

\begin{proposition}[Adapted form]
Up to local conjugation and qubit permutation, every
complete local qubit CO algebra can be written as
\begin{equation}
\pstab[\psi]
=
\bigoplus_{k=1}^K \pstab_k[\psi]
\oplus \bar{\pstab}[\psi],
\label{eq:si-adapted-form}
\end{equation}
where
\begin{equation}
\pstab_k[\psi] = \Span\left\{\sigma_{\mathcal P_k}^z,\sigma_{\mathcal P_k}^+,\sigma_{\mathcal P_k}^- \right\}, \quad \sigma_{\mathcal P_k}^{\mu} = \sum_{p\in\mathcal P_k}\sigma_p^\mu
\end{equation}
and
\begin{equation}
\bar{\pstab}[\psi] = \bar{\pstab}^z[\psi]
\ltimes \bar{\pstab}^+[\psi], \quad \bar{\pstab}^{\mu}[\psi]\subseteq\bar{\mathfrak g}^{\mu}=\Span\{\sigma_{\bar p}^{\mu}:
\bar p\in\bar{\mathcal P}\}.
\label{eq:si-solvable-form}
\end{equation}
Here, $\mathcal P_1,\ldots,\mathcal P_K$ are disjoint sets of sites, with $\bar{\mathcal P}$ containing the remaining sites. Notice that $\bar{\pstab}^z[\psi]$ is toral, while
$\bar{\pstab}^+[\psi]$ is an abelian ideal stable under the adjoint action of $\bar{\pstab}^z[\psi]$.
\end{proposition}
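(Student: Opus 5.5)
The plan is to combine the Levi decomposition of $\pstab[\psi]$ with the rigid structure of $\mathfrak g^{\rm dis}=\bigoplus_p\slc_{2,p}$. We first write $\pstab[\psi]=\mathfrak s\ltimes\mathfrak r$, with $\mathfrak r$ the solvable radical and $\mathfrak s$ a Levi factor, i.e.\ a maximal semisimple subalgebra.

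\textbf{Step 1: the semisimple part.} Take a simple ideal $\mathfrak s_k$ of $\mathfrak s$. Each projection $\pi_p:\mathfrak s_k\to\slc_{2,p}$ is either zero or injective, because $\mathfrak s_k$ is simple. An injective image is three-dimensional inside $\slc_2$, so $\mathfrak s_k\simeq\slc_2$ and $\pi_p$ is an isomorphism. Every automorphism of $\slc_2$ is inner, so a local conjugation $\bigotimes_p g_p$ puts $\mathfrak s_k$ in the diagonal form ${\rm span}\{\sigma^z_{\mathcal P_k},\sigma^\pm_{\mathcal P_k}\}$, where $\mathcal P_k$ is its support. Distinct factors $\mathfrak s_k,\mathfrak s_l$ commute. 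If both projected onto the same site $p$, we would get two commuting copies of all of $\slc_{2,p}$, which is impossible. Hence the supports $\mathcal P_k$ are disjoint, and we set $\bar{\mathcal P}$ to be the set of remaining sites.

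\textbf{Step 2: splitting $\pstab[\psi]$.} Regard $\mathfrak g^{\rm dis}$ as an $\mathfrak s$-module under the adjoint action. Its trivial isotypic component is exactly $\bigoplus_{\bar p\in\bar{\mathcal P}}\slc_{2,\bar p}$. The $\mathfrak s_k$-adjoint isotypic component is $\bigoplus_{p\in\mathcal P_k}\slc_{2,p}$, and it consists of the operators $\sum_p c_p\sigma^\mu_p$ with $\bm c\in\C^{|\mathcal P_k|}$. Since $\pstab[\psi]$ is an $\mathfrak s$-submodule, it splits along these site-localized isotypic components.

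We next show that the $\mathcal P_k$-component equals $\mathfrak s_k$. Suppose it contained a copy $\sum_p c_p\sigma_p^\mu$ with $\bm c\not\propto(1,\ldots,1)$. Brackets with the diagonal copy produce the copies indexed by all componentwise powers of $\bm c$. The Lie algebra they generate contains $\slc_2$ acting diagonally on each level set of $\bm c$. This is a semisimple subalgebra of $\pstab[\psi]$ strictly larger than $\mathfrak s$, contradicting maximality of the Levi factor. Therefore
$$\pstab[\psi]=\bigoplus_k\pstab_k[\psi]\oplus\bar{\pstab}[\psi],\qquad \bar{\pstab}[\psi]\subseteq\bigoplus_{\bar p}\slc_{2,\bar p}.$$
By the same maximality argument, $\bar{\pstab}[\psi]$ contains no $\slc_2$ and is therefore solvable.

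\textbf{Step 3: the solvable part.} By Lie's theorem, applied to the defining action on each site, $\bar{\pstab}[\psi]$ lies in a Borel subalgebra. A local conjugation on the sites in $\bar{\mathcal P}$ moves this Borel to $\bigoplus_{\bar p}\mathfrak b_{\bar p}$; swapping $\sigma^\pm_{\bar p}$ by local Weyl elements is permitted here. The algebra $\pstab[\psi]$ is the Lie algebra of the algebraic group stabilizing the ray $[\psi]$, so it is algebraic. An algebraic solvable Lie algebra decomposes as a torus acting on its nilpotent (unipotent) radical. All maximal tori of the Borel are conjugate under the Borel group. Conjugating by an element of $\bigotimes_{\bar p}B_{\bar p}$, which preserves the positive-root sector, therefore places the toral part inside $\bar{\mathfrak g}^z$ while the nilpotent part stays in $\bar{\mathfrak g}^+$. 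Finally, $\bar{\mathfrak g}^+$ is abelian, and the nilradical is an ideal stable under the torus, which gives Eq.~\eqref{eq:si-solvable-form}.

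\textbf{Expected obstacles.} The main obstacle is Step 3. One must justify algebraicity so that the torus/nilradical splitting is available, and check that the conjugation into the Cartan sector does not destroy the block structure fixed in Steps 1 and 2. It does not, because it acts only on the sites in $\bar{\mathcal P}$. The level-set argument in Step 2 is the other delicate point; I would verify it by computing $[\sum c_p\sigma^+_p,\sum\sigma^-_p]=\sum c_p\sigma^z_p$ and iterating.
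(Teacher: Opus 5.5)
Your argument is correct and follows the paper's overall structure. Both proofs use the Levi decomposition, force the simple factors to be $\slc_2$ with disjoint supports, use inner automorphisms to reach the collective form, and split the algebraic solvable part as a torus times a unipotent radical inside local Borels. The one genuine difference is your Step~2.

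The paper settles that step in one line. For $p\in\mathcal P_k$, the projection $\pi_p(\mathfrak r)$ of the radical is an ideal of $\pi_p(\mathfrak s_k)=\slc_{2,p}$, because $[\mathfrak s_k,\mathfrak r]\subseteq\mathfrak r$. It is also solvable, hence zero. This gives $\pstab[\psi]=\bigoplus_k\mathfrak s_k\oplus\mathfrak r$ with $\mathfrak r$ supported on $\bar{\mathcal P}$, so $\bar\pstab[\psi]=\mathfrak r$ and solvability comes for free.

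You instead split $\pstab[\psi]$ into $\mathfrak s$-isotypic components and write the $\mathcal P_k$-component as $\slc_2\otimes U$ with $(1,\ldots,1)\in U$. Lie closure makes $U$ closed under the componentwise product, and you then invoke maximality of the Levi factor. That maximality holds: any semisimple $\mathfrak s'\supseteq\mathfrak s$ meets $\mathfrak r$ in a solvable ideal of $\mathfrak s'$, hence trivially, so $\dim\mathfrak s'\le\dim\mathfrak s$. Your route is valid but longer. The paper's version identifies $\bar\pstab[\psi]$ with the radical directly, while yours shows explicitly how $\pstab[\psi]$ sits in $\mathfrak g^{\rm dis}$ as an $\mathfrak s$-module, including why a larger $U$ would only mean a finer block partition.

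A few points to tighten:
\begin{itemize}
\item The componentwise powers of $\bm c$ come from bracketing the $\bm c$-copy with itself, for example $[\sum_p c_p\sigma_p^+,\sum_p c_p\sigma_p^-]=\sum_p c_p^2\sigma_p^z$. Brackets with the diagonal copy only return $\slc_2\otimes\bm c$.
\item In Step~3, the ideal you place in $\bar{\mathfrak g}^+$ must be the Lie algebra of the unipotent radical, i.e.\ the set of nilpotent elements. It cannot be the Lie-theoretic nilradical, which can be toral: for $\ket{{\rm GHZ}_3}$ the abelian toral $\bar\pstab[\psi]$ is its own nilradical.
\item Nilpotent elements of $\bigoplus_{\bar p}\mathfrak b_{\bar p}$ automatically lie in $\bar{\mathfrak g}^+$, so the Borel conjugation is needed only for the torus.
\item The algebraicity you flag as an obstacle is that of $\mathfrak r$. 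It holds because, in characteristic zero, the radical of the stabilizer's Lie algebra is the Lie algebra of the radical of the connected stabilizer group.
\end{itemize}
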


\begin{proof}
Take a Levi decomposition
\begin{equation}
\pstab[\psi]
=
\mathfrak s\ltimes\rad(\pstab[\psi]),
\end{equation}
with $\mathfrak s$ semisimple.
Every simple factor of $\mathfrak s$ projects to each
local $\mathfrak{sl}_2$ either trivially or injectively.
A nonzero projection is therefore an isomorphism onto
$\mathfrak{sl}_2$.
Two distinct simple factors cannot both project
nontrivially onto the same local site, since their
images would be commuting nonzero ideals inside
the simple algebra $\mathfrak{sl}_2$.
Hence the simple factors have disjoint supports.

For a simple factor supported on a block $\mathcal P_k$,
every nonzero local projection differs from a reference
copy of $\mathfrak{sl}_2$ by an automorphism.
Moreover, all automorphisms of $\mathfrak{sl}_2(\C)$
are inner, so independent local conjugations on the
sites in $\mathcal P_k$ bring the factor to the
collective form $\pstab_k[\psi]$.
The projection of the solvable radical onto any
site in $\mathcal P_k$ is an $\mathfrak{sl}_2$-stable
solvable ideal of the local $\mathfrak{sl}_2$,
and therefore vanishes.
Thus the solvable radical has no support on those
sites, which gives the direct sum in
Eq.~\eqref{eq:si-adapted-form}.

On the remaining sites $\bar{\mathcal P}$,
the algebra $\bar{\pstab}[\psi]$ is solvable.
Moreover, since $\pstab[\psi]$ is the Lie algebra
of the projective stabilizer of $[\psi]$, it is
an algebraic Lie algebra.
Hence, the corresponding connected solvable
algebraic group admits a semidirect decomposition
into a torus and its unipotent radical.

This solvable group can be conjugated into a product
of local Borel subgroups. By local conjugations,
the torus can be chosen inside the diagonal Cartan
subalgebra $\bar{\mathfrak g}^z$, with Lie algebra $\bar{\pstab}^z[\psi]$. The Lie algebra of the unipotent radical is then
a $T$-stable subspace of $\bar{\mathfrak g}^+$,
and hence coincides with $\bar{\pstab}^+[\psi]$. Since all generators of $\bar{\mathfrak g}^+$ commute, this subspace is abelian. Together with the Levi factors above, this completes the decomposition in
Eq.~\eqref{eq:si-adapted-form}.
\end{proof}

\section{Independent four-qubit cross-check}
\label{sec:four-qubit-check}

The algebraic part of the four-qubit result can be checked independently against the complete SLOCC stabilizer stratification of Ref.~\cite{dietrich2022classification}. Their classification is expressed in terms of the vector stabilizer
\begin{equation}
\mathfrak z [\psi] = \{Q\in\mathfrak g:Q\ket\psi=0\},
\end{equation}
whereas we use the projective stabilizer $\pstab{[\psi]}$. Since
\begin{equation}
\ker\chi_\psi=\mathfrak z[\psi],
\end{equation}
we have
\begin{equation}
\dim \pstab{[\psi]} - \dim\mathfrak z[\psi] = \dim\operatorname{Im}\chi_\psi \in\{0,1\}.
\end{equation}
For semisimple and mixed strata admitting a nonzero homogeneous invariant, the projective character vanishes. For a nonzero nilpotent orbit, the normal $\mathfrak{sl}_2$ triple supplies a semisimple element acting nontrivially on the ray, giving the possible extra projective direction.

Evaluating the augmented local action matrix on exact representatives of the known four-qubit strata and quotienting by qubit permutations gives exactly $17$ algebraic CO types. Adding the pair spaces and repeating the simultaneous quotient splits these into the $29$ algebraic-plus-rank types listed in the Methods of the main text. The agreement gives an external check of the four-qubit enumeration that does not use the five- and six-qubit calculation.

\section{Supplementary classification data}
\label{sec:data-format}

The complete $5$-qubit algebraic-plus-rank classification contains $287$ fixed types and $4$ continuous families. Table \ref{tab:si-n5-complete} gives one representative state together with the complete algebraic and rank CO data for every type. 

The parameter conditions for the continuous families are indicated by superscripts \textsuperscript{a} and \textsuperscript{b} in the table and stated in the shaded row at the end of the table.

We use the normalized Bell states, the $|W_n\rangle$ and the $|{\rm GHZ}_n\rangle$ of the main text. Overall normalization of the representatives is immaterial. \\

\begingroup
\newcommand{\SIpagedrow}[1]{%
  \raisebox{\depth}{\resizebox{0.99\linewidth}{!}{%
    \setlength{\tabcolsep}{1.3pt}%
    \renewcommand{\arraystretch}{1.18}%
    \scriptsize
    \begin{tabular}{|>{\columncolor{gray!15}[\tabcolsep][\tabcolsep]\centering\arraybackslash}m{18pt}|>{\columncolor{white}[\tabcolsep][\tabcolsep]\centering\arraybackslash}m{174pt}|>{\columncolor{red!10}[\tabcolsep][\tabcolsep]\centering\arraybackslash}m{76pt}|>{\columncolor{red!10}[\tabcolsep][\tabcolsep]\centering\arraybackslash}m{78pt}|>{\columncolor{red!10}[\tabcolsep][\tabcolsep]\centering\arraybackslash}m{62pt}|>{\columncolor{red!10}[\tabcolsep][\tabcolsep]\centering\arraybackslash}m{16pt}|>{\columncolor{yellow!15}[\tabcolsep][\tabcolsep]\centering\arraybackslash}m{145pt}|>{\columncolor{yellow!15}[\tabcolsep][\tabcolsep]\centering\arraybackslash}m{18pt}|>{\columncolor{green!10}[\tabcolsep][\tabcolsep]\centering\arraybackslash}m{16pt}|>{\columncolor{green!10}[\tabcolsep][\tabcolsep]\centering\arraybackslash}m{20pt}|}
    #1
    \end{tabular}%
  }}%
}
\setlength{\LTleft}{\fill}
\setlength{\LTright}{\fill}
\setlength{\LTpre}{0pt}
\setlength{\LTpost}{0pt}
\setlength{\LTcapwidth}{0.99\linewidth}
\setlength{\tabcolsep}{0pt}
\renewcommand{\arraystretch}{0}
\begin{longtable}{@{}c@{}}
\SIpagedrow{\hhline{~|---------|}
\multicolumn{1}{p{18pt}|}{\rule[-4pt]{0pt}{14pt}} & & \multicolumn{4}{>{\columncolor{red!15}[\tabcolsep][\tabcolsep]}c|}{\bfseries\footnotesize Algebraic/Intrasector COs} & \multicolumn{2}{>{\columncolor{yellow!20}[\tabcolsep][\tabcolsep]}c|}{\bfseries\footnotesize Rank/Intersector COs} & \cellcolor{green!15} & \cellcolor{green!15} \\
\hhline{~|>{\arrayrulecolor{white}}->{\arrayrulecolor{black}}|----|--|>{\arrayrulecolor{green!15}}->{\arrayrulecolor{black}}|>{\arrayrulecolor{green!15}}->{\arrayrulecolor{black}}|}
\multicolumn{1}{p{18pt}|}{\rule[-4pt]{0pt}{14pt}} & \multirow{-2}{*}[3pt]{\bfseries\footnotesize Representative state} & \cellcolor{red!15}\bfseries\footnotesize Cartan & \cellcolor{red!15}\bfseries\footnotesize Non-Cartan & \cellcolor{red!15}\bfseries\footnotesize Algebra & \cellcolor{red!15}$D_{\pstab}$ & \cellcolor{yellow!20}\bfseries\footnotesize COs ($p<q$) & \cellcolor{yellow!20}$\tfrac{\Delta D}{2}$ & \cellcolor{green!15}\multirow{-2}{*}[3pt]{$D$} & \cellcolor{green!15}\multirow{-2}{*}[3pt]{\bfseries\footnotesize deg} \\
\hline} \\
\endfirsthead
\multicolumn{1}{c}{\footnotesize Table \thetable\ (continued)} \\[3pt]
\SIpagedrow{\hhline{~|---------|}
\multicolumn{1}{p{18pt}|}{\rule[-4pt]{0pt}{14pt}} & & \multicolumn{4}{>{\columncolor{red!15}[\tabcolsep][\tabcolsep]}c|}{\bfseries\footnotesize Algebraic/Intrasector COs} & \multicolumn{2}{>{\columncolor{yellow!20}[\tabcolsep][\tabcolsep]}c|}{\bfseries\footnotesize Rank/Intersector COs} & \cellcolor{green!15} & \cellcolor{green!15} \\
\hhline{~|>{\arrayrulecolor{white}}->{\arrayrulecolor{black}}|----|--|>{\arrayrulecolor{green!15}}->{\arrayrulecolor{black}}|>{\arrayrulecolor{green!15}}->{\arrayrulecolor{black}}|}
\multicolumn{1}{p{18pt}|}{\rule[-4pt]{0pt}{14pt}} & \multirow{-2}{*}[3pt]{\bfseries\footnotesize Representative state} & \cellcolor{red!15}\bfseries\footnotesize Cartan & \cellcolor{red!15}\bfseries\footnotesize Non-Cartan & \cellcolor{red!15}\bfseries\footnotesize Algebra & \cellcolor{red!15}$D_{\pstab}$ & \cellcolor{yellow!20}\bfseries\footnotesize COs ($p<q$) & \cellcolor{yellow!20}$\tfrac{\Delta D}{2}$ & \cellcolor{green!15}\multirow{-2}{*}[3pt]{$D$} & \cellcolor{green!15}\multirow{-2}{*}[3pt]{\bfseries\footnotesize deg} \\
\hline} \\
\endhead
\endfoot
\endlastfoot
\SIpagedrow{1 & \SIcell{174pt}{\ket{00000}} & \SIcell{76pt}{\sigma_{p}^{z}(1)} & \SIcell{78pt}{\sigma_{p}^{+}} & \SIcell{62pt}{\mathfrak b^{\oplus5}} & 10 & \SIcell{145pt}{G_{pq}^{0i},\,G_{pq}^{11}} & 30 & 70 & $\times1$ \\ \hline} \\
\SIpagedrow{2 & \SIcell{174pt}{\ket{000}\ket{\Psi^-_{45}}} & \SIcell{76pt}{\sigma_{1,2,3}^{z}(1),\,\sigma_{4}^{z}+\sigma_{5}^{z}} & \SIcell{78pt}{\sigma_{1,2,3}^{+},\,\sigma_{4}^{\pm}+\sigma_{5}^{\pm}} & \SIcell{62pt}{\mathfrak b^{\oplus3}\oplus\mathfrak{sl}_2} & 9 & \SIcell{145pt}{\begin{gathered}G_{pq}^{0i},\,G_{pq}^{11}\;(p,q=1,2,3),\,G_{p4}^{0i}\\[-2pt]G_{p5}^{0i}\;(p=1,2,3),\,G_{45}^{00}-G_{45}^{11},\,G_{45}^{01},\,G_{45}^{10}\end{gathered}} & 24 & 57 & $\times10$ \\ \hline} \\
\SIpagedrow{3 & \SIcell{174pt}{\ket{00}\ket{W_3}} & \SIcell{76pt}{\sigma_{1,2}^{z}(1),\,\sum_{p=3}^{5}\sigma_p^{z}(1)} & \SIcell{78pt}{\sigma_{1,2}^{+},\,\sigma_{3,4}^{+}-\sigma_{4,5}^{+}} & \SIcell{62pt}{\mathfrak b^{\oplus2}\oplus(\C\ltimes\C^2)} & 7 & \SIcell{145pt}{\begin{gathered}G_{12}^{0i},\,G_{12}^{11},\,G_{1q}^{0i}\;(q=3,4,5),\,G_{2q}^{0i}\\[-2pt]G_{pq}^{00}+G_{pq}^{11},\,G_{pq}^{01}\;(p,q=3,4,5)\end{gathered}} & 21 & 49 & $\times10$ \\ \hline} \\
\SIpagedrow{4 & \SIcell{174pt}{\ket{00}\ket{\mathrm{GHZ}_3}} & \SIcell{76pt}{\sigma_{1,2}^{z}(1),\,\sigma_{3,4}^{z}-\sigma_{4,5}^{z}} & \SIcell{78pt}{\sigma_{1,2}^{+}} & \SIcell{62pt}{\mathfrak b^{\oplus2}\oplus\C_{\rm tor}^{2}} & 6 & \SIcell{145pt}{\begin{gathered}G_{12}^{0i},\,G_{12}^{11},\,G_{1q}^{0i}\;(q=3,4,5),\,G_{2q}^{0i}\\[-2pt]G_{pq}^{ii}\;(p,q=3,4,5)\end{gathered}} & 21 & 48 & $\times10$ \\ \hline} \\
\SIpagedrow{5 & \SIcell{174pt}{\ket{0}\ket{W_4}} & \SIcell{76pt}{\sigma_{1}^{z}(1),\,\sum_{p=2}^{5}\sigma_p^{z}(2)} & \SIcell{78pt}{\sigma_{1}^{+},\,\sigma_{2,3,4}^{+}-\sigma_{3,4,5}^{+}} & \SIcell{62pt}{\mathfrak b\oplus(\C\ltimes\C^3)} & 6 & \SIcell{145pt}{G_{1q}^{0i},\,G_{pq}^{00}+G_{pq}^{11},\,G_{pq}^{01}\;(p,q=2,3,4,5)} & 20 & 46 & $\times5$ \\ \hline} \\
\end{longtable}
\endgroup

\begin{thebibliography}{42}%
\makeatletter
\providecommand \@ifxundefined [1]{%
 \@ifx{#1\undefined}
}%
\providecommand \@ifnum [1]{%
 \ifnum #1\expandafter \@firstoftwo
 \else \expandafter \@secondoftwo
 \fi
}%
\providecommand \@ifx [1]{%
 \ifx #1\expandafter \@firstoftwo
 \else \expandafter \@secondoftwo
 \fi
}%
\providecommand \natexlab [1]{#1}%
\providecommand \enquote  [1]{``#1''}%
\providecommand \bibnamefont  [1]{#1}%
\providecommand \bibfnamefont [1]{#1}%
\providecommand \citenamefont [1]{#1}%
\providecommand \href@noop [0]{\@secondoftwo}%
\providecommand \href [0]{\begingroup \@sanitize@url \@href}%
\providecommand \@href[1]{\@@startlink{#1}\@@href}%
\providecommand \@@href[1]{\endgroup#1\@@endlink}%
\providecommand \@sanitize@url [0]{\catcode `\\12\catcode `\$12\catcode `\&12\catcode `\#12\catcode `\^12\catcode `\_12\catcode `\%12\relax}%
\providecommand \@@startlink[1]{}%
\providecommand \@@endlink[0]{}%
\providecommand \url  [0]{\begingroup\@sanitize@url \@url }%
\providecommand \@url [1]{\endgroup\@href {#1}{\urlprefix }}%
\providecommand \urlprefix  [0]{URL }%
\providecommand \Eprint [0]{\href }%
\providecommand \doibase [0]{https://doi.org/}%
\providecommand \selectlanguage [0]{\@gobble}%
\providecommand \bibinfo  [0]{\@secondoftwo}%
\providecommand \bibfield  [0]{\@secondoftwo}%
\providecommand \translation [1]{[#1]}%
\providecommand \BibitemOpen [0]{}%
\providecommand \bibitemStop [0]{}%
\providecommand \bibitemNoStop [0]{.\EOS\space}%
\providecommand \EOS [0]{\spacefactor3000\relax}%
\providecommand \BibitemShut  [1]{\csname bibitem#1\endcsname}%
\let\auto@bib@innerbib\@empty
\bibitem [{\citenamefont {Chitambar}\ and\ \citenamefont {Gour}(2019)}]{chitambar2019quantum}%
  \BibitemOpen
  \bibfield  {author} {\bibinfo {author} {\bibfnamefont {E.}~\bibnamefont {Chitambar}}\ and\ \bibinfo {author} {\bibfnamefont {G.}~\bibnamefont {Gour}},\ }\bibfield  {title} {\bibinfo {title} {Quantum resource theories},\ }\href {https://doi.org/10.1103/RevModPhys.91.025001} {\bibfield  {journal} {\bibinfo  {journal} {Reviews of modern physics}\ }\textbf {\bibinfo {volume} {91}},\ \bibinfo {pages} {025001} (\bibinfo {year} {2019})}\BibitemShut {NoStop}%
\bibitem [{\citenamefont {Chitambar}\ \emph {et~al.}(2014)\citenamefont {Chitambar}, \citenamefont {Leung}, \citenamefont {Man{\v{c}}inska}, \citenamefont {Ozols},\ and\ \citenamefont {Winter}}]{chitambar2014everything}%
  \BibitemOpen
  \bibfield  {author} {\bibinfo {author} {\bibfnamefont {E.}~\bibnamefont {Chitambar}}, \bibinfo {author} {\bibfnamefont {D.}~\bibnamefont {Leung}}, \bibinfo {author} {\bibfnamefont {L.}~\bibnamefont {Man{\v{c}}inska}}, \bibinfo {author} {\bibfnamefont {M.}~\bibnamefont {Ozols}},\ and\ \bibinfo {author} {\bibfnamefont {A.}~\bibnamefont {Winter}},\ }\bibfield  {title} {\bibinfo {title} {Everything you always wanted to know about locc (but were afraid to ask)},\ }\href {https://doi.org/10.1007/s00220-014-1953-9} {\bibfield  {journal} {\bibinfo  {journal} {Communications in Mathematical Physics}\ }\textbf {\bibinfo {volume} {328}},\ \bibinfo {pages} {303} (\bibinfo {year} {2014})}\BibitemShut {NoStop}%
\bibitem [{\citenamefont {D{\"u}r}\ \emph {et~al.}(2000)\citenamefont {D{\"u}r}, \citenamefont {Vidal},\ and\ \citenamefont {Cirac}}]{dur2000three}%
  \BibitemOpen
  \bibfield  {author} {\bibinfo {author} {\bibfnamefont {W.}~\bibnamefont {D{\"u}r}}, \bibinfo {author} {\bibfnamefont {G.}~\bibnamefont {Vidal}},\ and\ \bibinfo {author} {\bibfnamefont {J.~I.}\ \bibnamefont {Cirac}},\ }\bibfield  {title} {\bibinfo {title} {Three qubits can be entangled in two inequivalent ways},\ }\href {https://doi.org/10.1103/PhysRevA.62.062314} {\bibfield  {journal} {\bibinfo  {journal} {Physical Review A}\ }\textbf {\bibinfo {volume} {62}},\ \bibinfo {pages} {062314} (\bibinfo {year} {2000})}\BibitemShut {NoStop}%
\bibitem [{\citenamefont {Miyake}(2003)}]{miyake2003classification}%
  \BibitemOpen
  \bibfield  {author} {\bibinfo {author} {\bibfnamefont {A.}~\bibnamefont {Miyake}},\ }\bibfield  {title} {\bibinfo {title} {Classification of multipartite entangled states by multidimensional determinants},\ }\href {https://doi.org/10.1103/PhysRevA.67.012108} {\bibfield  {journal} {\bibinfo  {journal} {Physical Review A}\ }\textbf {\bibinfo {volume} {67}},\ \bibinfo {pages} {012108} (\bibinfo {year} {2003})}\BibitemShut {NoStop}%
\bibitem [{\citenamefont {Barnum}\ \emph {et~al.}(2003)\citenamefont {Barnum}, \citenamefont {Knill}, \citenamefont {Ortiz},\ and\ \citenamefont {Viola}}]{barnum2003generalizations}%
  \BibitemOpen
  \bibfield  {author} {\bibinfo {author} {\bibfnamefont {H.}~\bibnamefont {Barnum}}, \bibinfo {author} {\bibfnamefont {E.}~\bibnamefont {Knill}}, \bibinfo {author} {\bibfnamefont {G.}~\bibnamefont {Ortiz}},\ and\ \bibinfo {author} {\bibfnamefont {L.}~\bibnamefont {Viola}},\ }\bibfield  {title} {\bibinfo {title} {Generalizations of entanglement based on coherent states and convex sets},\ }\href {https://doi.org/10.1103/PhysRevA.68.032308} {\bibfield  {journal} {\bibinfo  {journal} {Physical Review A}\ }\textbf {\bibinfo {volume} {68}},\ \bibinfo {pages} {032308} (\bibinfo {year} {2003})}\BibitemShut {NoStop}%
\bibitem [{\citenamefont {Diaz}\ \emph {et~al.}(2025)\citenamefont {Diaz}, \citenamefont {Mele}, \citenamefont {Bermejo}, \citenamefont {Braccia}, \citenamefont {Larocca},\ and\ \citenamefont {Cerezo}}]{diaz2025unified}%
  \BibitemOpen
  \bibfield  {author} {\bibinfo {author} {\bibfnamefont {N.~L.}\ \bibnamefont {Diaz}}, \bibinfo {author} {\bibfnamefont {A.~A.}\ \bibnamefont {Mele}}, \bibinfo {author} {\bibfnamefont {P.}~\bibnamefont {Bermejo}}, \bibinfo {author} {\bibfnamefont {P.}~\bibnamefont {Braccia}}, \bibinfo {author} {\bibfnamefont {M.}~\bibnamefont {Larocca}},\ and\ \bibinfo {author} {\bibfnamefont {M.}~\bibnamefont {Cerezo}},\ }\bibfield  {title} {\bibinfo {title} {A unified approach to quantum resource theories and a new class of free operations},\ }\bibfield  {journal} {\bibinfo  {journal} {arXiv preprint arXiv:2507.10851}\ }\href {https://doi.org/10.48550/arXiv.2507.10851} {10.48550/arXiv.2507.10851} (\bibinfo {year} {2025})\BibitemShut {NoStop}%
\bibitem [{\citenamefont {Verstraete}\ \emph {et~al.}(2002)\citenamefont {Verstraete}, \citenamefont {Dehaene}, \citenamefont {De~Moor},\ and\ \citenamefont {Verschelde}}]{verstraete2002four}%
  \BibitemOpen
  \bibfield  {author} {\bibinfo {author} {\bibfnamefont {F.}~\bibnamefont {Verstraete}}, \bibinfo {author} {\bibfnamefont {J.}~\bibnamefont {Dehaene}}, \bibinfo {author} {\bibfnamefont {B.}~\bibnamefont {De~Moor}},\ and\ \bibinfo {author} {\bibfnamefont {H.}~\bibnamefont {Verschelde}},\ }\bibfield  {title} {\bibinfo {title} {Four qubits can be entangled in nine different ways},\ }\href {https://doi.org/10.1103/PhysRevA.65.052112} {\bibfield  {journal} {\bibinfo  {journal} {Physical Review A}\ }\textbf {\bibinfo {volume} {65}},\ \bibinfo {pages} {052112} (\bibinfo {year} {2002})}\BibitemShut {NoStop}%
\bibitem [{\citenamefont {Dietrich}\ \emph {et~al.}(2022)\citenamefont {Dietrich}, \citenamefont {De~Graaf}, \citenamefont {Marrani},\ and\ \citenamefont {Origlia}}]{dietrich2022classification}%
  \BibitemOpen
  \bibfield  {author} {\bibinfo {author} {\bibfnamefont {H.}~\bibnamefont {Dietrich}}, \bibinfo {author} {\bibfnamefont {W.~A.}\ \bibnamefont {De~Graaf}}, \bibinfo {author} {\bibfnamefont {A.}~\bibnamefont {Marrani}},\ and\ \bibinfo {author} {\bibfnamefont {M.}~\bibnamefont {Origlia}},\ }\bibfield  {title} {\bibinfo {title} {Classification of four qubit states and their stabilisers under slocc operations},\ }\href {https://doi.org/10.1088/1751-8121/ac4b13} {\bibfield  {journal} {\bibinfo  {journal} {Journal of Physics A: Mathematical and Theoretical}\ }\textbf {\bibinfo {volume} {55}},\ \bibinfo {pages} {095302} (\bibinfo {year} {2022})}\BibitemShut {NoStop}%
\bibitem [{\citenamefont {Jia}\ \emph {et~al.}(2015)\citenamefont {Jia}, \citenamefont {Li},\ and\ \citenamefont {Qiao}}]{jia2015classification}%
  \BibitemOpen
  \bibfield  {author} {\bibinfo {author} {\bibfnamefont {K.-K.}\ \bibnamefont {Jia}}, \bibinfo {author} {\bibfnamefont {J.-L.}\ \bibnamefont {Li}},\ and\ \bibinfo {author} {\bibfnamefont {C.-F.}\ \bibnamefont {Qiao}},\ }\bibfield  {title} {\bibinfo {title} {Classification of the entangled states of $2\times l\times m\times n\times h$},\ }\bibfield  {journal} {\bibinfo  {journal} {arXiv preprint arXiv:1505.02007}\ }\href {https://doi.org/10.48550/arXiv.1505.02007} {10.48550/arXiv.1505.02007} (\bibinfo {year} {2015})\BibitemShut {NoStop}%
\bibitem [{\citenamefont {Chterental}\ and\ \citenamefont {Djokovic}(2006)}]{chterental2006normal}%
  \BibitemOpen
  \bibfield  {author} {\bibinfo {author} {\bibfnamefont {O.}~\bibnamefont {Chterental}}\ and\ \bibinfo {author} {\bibfnamefont {D.~Z.}\ \bibnamefont {Djokovic}},\ }\bibfield  {title} {\bibinfo {title} {Normal forms and tensor ranks of pure states of four qubits},\ }\bibfield  {journal} {\bibinfo  {journal} {arXiv preprint quant-ph/0612184}\ }\href {https://doi.org/10.48550/arXiv.quant-ph/0612184} {10.48550/arXiv.quant-ph/0612184} (\bibinfo {year} {2006})\BibitemShut {NoStop}%
\bibitem [{\citenamefont {Luque}\ and\ \citenamefont {Thibon}(2003)}]{luque2003polynomial}%
  \BibitemOpen
  \bibfield  {author} {\bibinfo {author} {\bibfnamefont {J.-G.}\ \bibnamefont {Luque}}\ and\ \bibinfo {author} {\bibfnamefont {J.-Y.}\ \bibnamefont {Thibon}},\ }\bibfield  {title} {\bibinfo {title} {Polynomial invariants of four qubits},\ }\href {https://doi.org/10.1103/PhysRevA.67.042303} {\bibfield  {journal} {\bibinfo  {journal} {Physical Review A}\ }\textbf {\bibinfo {volume} {67}},\ \bibinfo {pages} {042303} (\bibinfo {year} {2003})}\BibitemShut {NoStop}%
\bibitem [{\citenamefont {Gour}\ and\ \citenamefont {Wallach}(2013)}]{gour2013classification}%
  \BibitemOpen
  \bibfield  {author} {\bibinfo {author} {\bibfnamefont {G.}~\bibnamefont {Gour}}\ and\ \bibinfo {author} {\bibfnamefont {N.~R.}\ \bibnamefont {Wallach}},\ }\bibfield  {title} {\bibinfo {title} {Classification of multipartite entanglement of all finite dimensionality},\ }\href {https://doi.org/10.1103/PhysRevLett.111.060502} {\bibfield  {journal} {\bibinfo  {journal} {Physical Review Letters}\ }\textbf {\bibinfo {volume} {111}},\ \bibinfo {pages} {060502} (\bibinfo {year} {2013})}\BibitemShut {NoStop}%
\bibitem [{\citenamefont {Walter}\ \emph {et~al.}(2013)\citenamefont {Walter}, \citenamefont {Doran}, \citenamefont {Gross},\ and\ \citenamefont {Christandl}}]{walter2013entanglement}%
  \BibitemOpen
  \bibfield  {author} {\bibinfo {author} {\bibfnamefont {M.}~\bibnamefont {Walter}}, \bibinfo {author} {\bibfnamefont {B.}~\bibnamefont {Doran}}, \bibinfo {author} {\bibfnamefont {D.}~\bibnamefont {Gross}},\ and\ \bibinfo {author} {\bibfnamefont {M.}~\bibnamefont {Christandl}},\ }\bibfield  {title} {\bibinfo {title} {Entanglement polytopes: multiparticle entanglement from single-particle information},\ }\href {https://doi.org/10.1126/science.1232957} {\bibfield  {journal} {\bibinfo  {journal} {Science}\ }\textbf {\bibinfo {volume} {340}},\ \bibinfo {pages} {1205} (\bibinfo {year} {2013})}\BibitemShut {NoStop}%
\bibitem [{\citenamefont {Gharahi}\ \emph {et~al.}(2020)\citenamefont {Gharahi}, \citenamefont {Mancini},\ and\ \citenamefont {Ottaviani}}]{gharahi2020fine}%
  \BibitemOpen
  \bibfield  {author} {\bibinfo {author} {\bibfnamefont {M.}~\bibnamefont {Gharahi}}, \bibinfo {author} {\bibfnamefont {S.}~\bibnamefont {Mancini}},\ and\ \bibinfo {author} {\bibfnamefont {G.}~\bibnamefont {Ottaviani}},\ }\bibfield  {title} {\bibinfo {title} {Fine-structure classification of multiqubit entanglement by algebraic geometry},\ }\href {https://doi.org/10.1103/PhysRevResearch.2.043003} {\bibfield  {journal} {\bibinfo  {journal} {Physical Review Research}\ }\textbf {\bibinfo {volume} {2}},\ \bibinfo {pages} {043003} (\bibinfo {year} {2020})}\BibitemShut {NoStop}%
\bibitem [{\citenamefont {Briand}\ \emph {et~al.}(2003)\citenamefont {Briand}, \citenamefont {Luque},\ and\ \citenamefont {Thibon}}]{briand2003complete}%
  \BibitemOpen
  \bibfield  {author} {\bibinfo {author} {\bibfnamefont {E.}~\bibnamefont {Briand}}, \bibinfo {author} {\bibfnamefont {J.-G.}\ \bibnamefont {Luque}},\ and\ \bibinfo {author} {\bibfnamefont {J.-Y.}\ \bibnamefont {Thibon}},\ }\bibfield  {title} {\bibinfo {title} {A complete set of covariants of the four qubit system},\ }\href {https://doi.org/10.1088/0305-4470/36/38/309} {\bibfield  {journal} {\bibinfo  {journal} {Journal of Physics A: Mathematical and General}\ }\textbf {\bibinfo {volume} {36}},\ \bibinfo {pages} {9915} (\bibinfo {year} {2003})}\BibitemShut {NoStop}%
\bibitem [{\citenamefont {Luque}\ and\ \citenamefont {Thibon}(2006)}]{luque2006algebraic}%
  \BibitemOpen
  \bibfield  {author} {\bibinfo {author} {\bibfnamefont {J.-G.}\ \bibnamefont {Luque}}\ and\ \bibinfo {author} {\bibfnamefont {J.-Y.}\ \bibnamefont {Thibon}},\ }\bibfield  {title} {\bibinfo {title} {Algebraic invariants of five qubits},\ }\href {https://doi.org/10.1088/0305-4470/39/2/007} {\bibfield  {journal} {\bibinfo  {journal} {Journal of physics A: mathematical and general}\ }\textbf {\bibinfo {volume} {39}},\ \bibinfo {pages} {371} (\bibinfo {year} {2006})}\BibitemShut {NoStop}%
\bibitem [{\citenamefont {Li}\ and\ \citenamefont {Li}(2012{\natexlab{a}})}]{li2012slocc}%
  \BibitemOpen
  \bibfield  {author} {\bibinfo {author} {\bibfnamefont {X.}~\bibnamefont {Li}}\ and\ \bibinfo {author} {\bibfnamefont {D.}~\bibnamefont {Li}},\ }\bibfield  {title} {\bibinfo {title} {Slocc determinant invariants of order $2^{n/2}$ for even n qubits},\ }\href {https://doi.org/10.1088/1751-8113/45/7/075308} {\bibfield  {journal} {\bibinfo  {journal} {Journal of Physics A: Mathematical and Theoretical}\ }\textbf {\bibinfo {volume} {45}},\ \bibinfo {pages} {075308} (\bibinfo {year} {2012}{\natexlab{a}})}\BibitemShut {NoStop}%
\bibitem [{\citenamefont {Li}\ and\ \citenamefont {Li}(2012{\natexlab{b}})}]{li2012method}%
  \BibitemOpen
  \bibfield  {author} {\bibinfo {author} {\bibfnamefont {X.}~\bibnamefont {Li}}\ and\ \bibinfo {author} {\bibfnamefont {D.}~\bibnamefont {Li}},\ }\bibfield  {title} {\bibinfo {title} {Method for classifying multiqubit states via the rank of the coefficient matrix and its application to four-qubit states},\ }\href {https://doi.org/10.1103/PhysRevA.86.042332} {\bibfield  {journal} {\bibinfo  {journal} {Physical Review A—Atomic, Molecular, and Optical Physics}\ }\textbf {\bibinfo {volume} {86}},\ \bibinfo {pages} {042332} (\bibinfo {year} {2012}{\natexlab{b}})}\BibitemShut {NoStop}%
\bibitem [{\citenamefont {Li}\ and\ \citenamefont {Li}(2012{\natexlab{c}})}]{li2012classification}%
  \BibitemOpen
  \bibfield  {author} {\bibinfo {author} {\bibfnamefont {X.}~\bibnamefont {Li}}\ and\ \bibinfo {author} {\bibfnamefont {D.}~\bibnamefont {Li}},\ }\bibfield  {title} {\bibinfo {title} {Classification of general n-qubit states under stochastic local operations and classical communication in terms of the rank of coefficient matrix},\ }\href {https://doi.org/10.1103/PhysRevLett.108.180502} {\bibfield  {journal} {\bibinfo  {journal} {Physical review letters}\ }\textbf {\bibinfo {volume} {108}},\ \bibinfo {pages} {180502} (\bibinfo {year} {2012}{\natexlab{c}})}\BibitemShut {NoStop}%
\bibitem [{\citenamefont {Chen}\ and\ \citenamefont {Chen}(2006)}]{chen2006range}%
  \BibitemOpen
  \bibfield  {author} {\bibinfo {author} {\bibfnamefont {L.}~\bibnamefont {Chen}}\ and\ \bibinfo {author} {\bibfnamefont {Y.-X.}\ \bibnamefont {Chen}},\ }\bibfield  {title} {\bibinfo {title} {Range criterion and classification of true entanglement in a 2$\times$ m$\times$ n system},\ }\href {https://doi.org/10.1103/PhysRevA.73.052310} {\bibfield  {journal} {\bibinfo  {journal} {Physical Review A—Atomic, Molecular, and Optical Physics}\ }\textbf {\bibinfo {volume} {73}},\ \bibinfo {pages} {052310} (\bibinfo {year} {2006})}\BibitemShut {NoStop}%
\bibitem [{\citenamefont {Carteret}\ and\ \citenamefont {Sudbery}(2000)}]{carteret2000local}%
  \BibitemOpen
  \bibfield  {author} {\bibinfo {author} {\bibfnamefont {H.}~\bibnamefont {Carteret}}\ and\ \bibinfo {author} {\bibfnamefont {A.}~\bibnamefont {Sudbery}},\ }\bibfield  {title} {\bibinfo {title} {Local symmetry properties of pure three-qubit states},\ }\href {https://doi.org/10.1088/0305-4470/33/28/303} {\bibfield  {journal} {\bibinfo  {journal} {Journal of Physics A: Mathematical and General}\ }\textbf {\bibinfo {volume} {33}},\ \bibinfo {pages} {4981} (\bibinfo {year} {2000})}\BibitemShut {NoStop}%
\bibitem [{\citenamefont {Walck}\ and\ \citenamefont {Lyons}(2007)}]{walck2007maximum}%
  \BibitemOpen
  \bibfield  {author} {\bibinfo {author} {\bibfnamefont {S.~N.}\ \bibnamefont {Walck}}\ and\ \bibinfo {author} {\bibfnamefont {D.~W.}\ \bibnamefont {Lyons}},\ }\bibfield  {title} {\bibinfo {title} {Maximum stabilizer dimension for nonproduct states},\ }\href {https://doi.org/10.1103/PhysRevA.76.022303} {\bibfield  {journal} {\bibinfo  {journal} {Physical Review A—Atomic, Molecular, and Optical Physics}\ }\textbf {\bibinfo {volume} {76}},\ \bibinfo {pages} {022303} (\bibinfo {year} {2007})}\BibitemShut {NoStop}%
\bibitem [{\citenamefont {Gour}\ \emph {et~al.}(2017)\citenamefont {Gour}, \citenamefont {Kraus},\ and\ \citenamefont {Wallach}}]{gour2017almost}%
  \BibitemOpen
  \bibfield  {author} {\bibinfo {author} {\bibfnamefont {G.}~\bibnamefont {Gour}}, \bibinfo {author} {\bibfnamefont {B.}~\bibnamefont {Kraus}},\ and\ \bibinfo {author} {\bibfnamefont {N.~R.}\ \bibnamefont {Wallach}},\ }\bibfield  {title} {\bibinfo {title} {Almost all multipartite qubit quantum states have trivial stabilizer},\ }\bibfield  {journal} {\bibinfo  {journal} {Journal of Mathematical Physics}\ }\textbf {\bibinfo {volume} {58}},\ \href {https://doi.org/10.1063/1.5003015} {10.1063/1.5003015} (\bibinfo {year} {2017})\BibitemShut {NoStop}%
\bibitem [{\citenamefont {S{\l}owik}\ \emph {et~al.}(2020)\citenamefont {S{\l}owik}, \citenamefont {Hebenstreit}, \citenamefont {Kraus},\ and\ \citenamefont {Sawicki}}]{slowik2020link}%
  \BibitemOpen
  \bibfield  {author} {\bibinfo {author} {\bibfnamefont {O.}~\bibnamefont {S{\l}owik}}, \bibinfo {author} {\bibfnamefont {M.}~\bibnamefont {Hebenstreit}}, \bibinfo {author} {\bibfnamefont {B.}~\bibnamefont {Kraus}},\ and\ \bibinfo {author} {\bibfnamefont {A.}~\bibnamefont {Sawicki}},\ }\bibfield  {title} {\bibinfo {title} {A link between symmetries of critical states and the structure of slocc classes in multipartite systems},\ }\href {https://doi.org/10.22331/q-2020-07-20-300} {\bibfield  {journal} {\bibinfo  {journal} {Quantum}\ }\textbf {\bibinfo {volume} {4}},\ \bibinfo {pages} {300} (\bibinfo {year} {2020})}\BibitemShut {NoStop}%
\bibitem [{\citenamefont {Qi}\ and\ \citenamefont {Ranard}(2019)}]{qi2019determining}%
  \BibitemOpen
  \bibfield  {author} {\bibinfo {author} {\bibfnamefont {X.-L.}\ \bibnamefont {Qi}}\ and\ \bibinfo {author} {\bibfnamefont {D.}~\bibnamefont {Ranard}},\ }\bibfield  {title} {\bibinfo {title} {Determining a local hamiltonian from a single eigenstate},\ }\href {https://doi.org/10.22331/q-2019-07-08-159} {\bibfield  {journal} {\bibinfo  {journal} {Quantum}\ }\textbf {\bibinfo {volume} {3}},\ \bibinfo {pages} {159} (\bibinfo {year} {2019})}\BibitemShut {NoStop}%
\bibitem [{\citenamefont {Chertkov}\ and\ \citenamefont {Clark}(2018)}]{chertkov2018computational}%
  \BibitemOpen
  \bibfield  {author} {\bibinfo {author} {\bibfnamefont {E.}~\bibnamefont {Chertkov}}\ and\ \bibinfo {author} {\bibfnamefont {B.~K.}\ \bibnamefont {Clark}},\ }\bibfield  {title} {\bibinfo {title} {Computational inverse method for constructing spaces of quantum models from wave functions},\ }\href {https://doi.org/10.1103/PhysRevX.8.031029} {\bibfield  {journal} {\bibinfo  {journal} {Physical Review X}\ }\textbf {\bibinfo {volume} {8}},\ \bibinfo {pages} {031029} (\bibinfo {year} {2018})}\BibitemShut {NoStop}%
\bibitem [{\citenamefont {Petrovich}\ \emph {et~al.}(2024)\citenamefont {Petrovich}, \citenamefont {Rossignoli},\ and\ \citenamefont {Canosa}}]{petrovich2024covariance}%
  \BibitemOpen
  \bibfield  {author} {\bibinfo {author} {\bibfnamefont {F.}~\bibnamefont {Petrovich}}, \bibinfo {author} {\bibfnamefont {R.}~\bibnamefont {Rossignoli}},\ and\ \bibinfo {author} {\bibfnamefont {N.}~\bibnamefont {Canosa}},\ }\bibfield  {title} {\bibinfo {title} {Covariance-based method for eigenstate factorization and generalized singlets},\ }\href {https://doi.org/10.1103/PhysRevA.110.052213} {\bibfield  {journal} {\bibinfo  {journal} {Physical Review A}\ }\textbf {\bibinfo {volume} {110}},\ \bibinfo {pages} {052213} (\bibinfo {year} {2024})}\BibitemShut {NoStop}%
\bibitem [{\citenamefont {Petrovich}\ and\ \citenamefont {Rossignoli}(2025)}]{petrovich2025conserved}%
  \BibitemOpen
  \bibfield  {author} {\bibinfo {author} {\bibfnamefont {F.}~\bibnamefont {Petrovich}}\ and\ \bibinfo {author} {\bibfnamefont {R.}~\bibnamefont {Rossignoli}},\ }\bibfield  {title} {\bibinfo {title} {Conserved operators and exact conditions for pair condensation},\ }\href {https://doi.org/10.1103/qbwh-c3gp} {\bibfield  {journal} {\bibinfo  {journal} {Physical Review A}\ }\textbf {\bibinfo {volume} {112}},\ \bibinfo {pages} {032214} (\bibinfo {year} {2025})}\BibitemShut {NoStop}%
\bibitem [{\citenamefont {Barnum}\ \emph {et~al.}(2004)\citenamefont {Barnum}, \citenamefont {Knill}, \citenamefont {Ortiz}, \citenamefont {Somma},\ and\ \citenamefont {Viola}}]{barnum2004subsystem}%
  \BibitemOpen
  \bibfield  {author} {\bibinfo {author} {\bibfnamefont {H.}~\bibnamefont {Barnum}}, \bibinfo {author} {\bibfnamefont {E.}~\bibnamefont {Knill}}, \bibinfo {author} {\bibfnamefont {G.}~\bibnamefont {Ortiz}}, \bibinfo {author} {\bibfnamefont {R.}~\bibnamefont {Somma}},\ and\ \bibinfo {author} {\bibfnamefont {L.}~\bibnamefont {Viola}},\ }\bibfield  {title} {\bibinfo {title} {A subsystem-independent generalization of entanglement},\ }\href {https://doi.org/10.1103/PhysRevLett.92.107902} {\bibfield  {journal} {\bibinfo  {journal} {Physical Review Letters}\ }\textbf {\bibinfo {volume} {92}},\ \bibinfo {pages} {107902} (\bibinfo {year} {2004})}\BibitemShut {NoStop}%
\bibitem [{\citenamefont {Perelomov}(1977)}]{perelomov1977generalized}%
  \BibitemOpen
  \bibfield  {author} {\bibinfo {author} {\bibfnamefont {A.~d.~M.}\ \bibnamefont {Perelomov}},\ }\bibfield  {title} {\bibinfo {title} {Generalized coherent states and some of their applications},\ }\href {https://doi.org/10.1070/PU1977v020n09ABEH005459} {\bibfield  {journal} {\bibinfo  {journal} {Soviet Physics Uspekhi}\ }\textbf {\bibinfo {volume} {20}},\ \bibinfo {pages} {703} (\bibinfo {year} {1977})}\BibitemShut {NoStop}%
\bibitem [{\citenamefont {Gilmore}(1974)}]{gilmore1974properties}%
  \BibitemOpen
  \bibfield  {author} {\bibinfo {author} {\bibfnamefont {R.}~\bibnamefont {Gilmore}},\ }\bibfield  {title} {\bibinfo {title} {On the properties of coherent states},\ }\href {https://rmf.smf.mx/ojs/index.php/rmf/article/view/1046} {\bibfield  {journal} {\bibinfo  {journal} {Revista Mexicana de F{\'\i}sica}\ }\textbf {\bibinfo {volume} {23}},\ \bibinfo {pages} {143} (\bibinfo {year} {1974})}\BibitemShut {NoStop}%
\bibitem [{\citenamefont {Zhang}\ \emph {et~al.}(1990)\citenamefont {Zhang}, \citenamefont {Gilmore} \emph {et~al.}}]{zhang1990coherent}%
  \BibitemOpen
  \bibfield  {author} {\bibinfo {author} {\bibfnamefont {W.-M.}\ \bibnamefont {Zhang}}, \bibinfo {author} {\bibfnamefont {R.}~\bibnamefont {Gilmore}}, \emph {et~al.},\ }\bibfield  {title} {\bibinfo {title} {Coherent states: Theory and some applications},\ }\href {https://doi.org/10.1103/RevModPhys.62.867} {\bibfield  {journal} {\bibinfo  {journal} {Reviews of Modern Physics}\ }\textbf {\bibinfo {volume} {62}},\ \bibinfo {pages} {867} (\bibinfo {year} {1990})}\BibitemShut {NoStop}%
\bibitem [{\citenamefont {Provost}\ and\ \citenamefont {Vall{\'e}e}(1980)}]{provost1980riemannian}%
  \BibitemOpen
  \bibfield  {author} {\bibinfo {author} {\bibfnamefont {J.-P.}\ \bibnamefont {Provost}}\ and\ \bibinfo {author} {\bibfnamefont {G.}~\bibnamefont {Vall{\'e}e}},\ }\bibfield  {title} {\bibinfo {title} {Riemannian structure on manifolds of quantum states},\ }\href {https://doi.org/10.1007/BF02193559} {\bibfield  {journal} {\bibinfo  {journal} {Communications in Mathematical Physics}\ }\textbf {\bibinfo {volume} {76}},\ \bibinfo {pages} {289} (\bibinfo {year} {1980})}\BibitemShut {NoStop}%
\bibitem [{\citenamefont {Wierichs}\ \emph {et~al.}(2023)\citenamefont {Wierichs}, \citenamefont {East}, \citenamefont {Larocca}, \citenamefont {Cerezo},\ and\ \citenamefont {Killoran}}]{wierichs2023symmetric}%
  \BibitemOpen
  \bibfield  {author} {\bibinfo {author} {\bibfnamefont {D.}~\bibnamefont {Wierichs}}, \bibinfo {author} {\bibfnamefont {R.~D.}\ \bibnamefont {East}}, \bibinfo {author} {\bibfnamefont {M.}~\bibnamefont {Larocca}}, \bibinfo {author} {\bibfnamefont {M.}~\bibnamefont {Cerezo}},\ and\ \bibinfo {author} {\bibfnamefont {N.}~\bibnamefont {Killoran}},\ }\bibfield  {title} {\bibinfo {title} {Symmetric derivatives of parametrized quantum circuits},\ }\bibfield  {journal} {\bibinfo  {journal} {arXiv preprint arXiv:2312.06752}\ }\href {https://doi.org/10.48550/arXiv.2312.06752} {10.48550/arXiv.2312.06752} (\bibinfo {year} {2023})\BibitemShut {NoStop}%
\bibitem [{\citenamefont {Hall}(2013)}]{hall2013lie}%
  \BibitemOpen
  \bibfield  {author} {\bibinfo {author} {\bibfnamefont {B.~C.}\ \bibnamefont {Hall}},\ }\href@noop {} {\emph {\bibinfo {title} {Lie groups, Lie algebras, and representations}}}\ (\bibinfo  {publisher} {Springer},\ \bibinfo {year} {2013})\BibitemShut {NoStop}%
\bibitem [{\citenamefont {Perez-Garcia}\ \emph {et~al.}(2007)\citenamefont {Perez-Garcia}, \citenamefont {Verstraete}, \citenamefont {Wolf},\ and\ \citenamefont {Cirac}}]{perez2006matrix}%
  \BibitemOpen
  \bibfield  {author} {\bibinfo {author} {\bibfnamefont {D.}~\bibnamefont {Perez-Garcia}}, \bibinfo {author} {\bibfnamefont {F.}~\bibnamefont {Verstraete}}, \bibinfo {author} {\bibfnamefont {M.~M.}\ \bibnamefont {Wolf}},\ and\ \bibinfo {author} {\bibfnamefont {J.~I.}\ \bibnamefont {Cirac}},\ }\bibfield  {title} {\bibinfo {title} {Matrix product state representations},\ }\href {https://doi.org/10.5555/2011832.2011833} {\bibfield  {journal} {\bibinfo  {journal} {Quantum Info. Comput.}\ }\textbf {\bibinfo {volume} {7}},\ \bibinfo {pages} {401–430} (\bibinfo {year} {2007})}\BibitemShut {NoStop}%
\bibitem [{\citenamefont {Cerezo}\ \emph {et~al.}(2016)\citenamefont {Cerezo}, \citenamefont {Rossignoli},\ and\ \citenamefont {Canosa}}]{cerezo2016factorization}%
  \BibitemOpen
  \bibfield  {author} {\bibinfo {author} {\bibfnamefont {M.}~\bibnamefont {Cerezo}}, \bibinfo {author} {\bibfnamefont {R.}~\bibnamefont {Rossignoli}},\ and\ \bibinfo {author} {\bibfnamefont {N.}~\bibnamefont {Canosa}},\ }\bibfield  {title} {\bibinfo {title} {Factorization in spin systems under general fields and separable ground-state engineering},\ }\href {https://doi.org/10.1103/PhysRevA.94.042335} {\bibfield  {journal} {\bibinfo  {journal} {Physical Review A}\ }\textbf {\bibinfo {volume} {94}},\ \bibinfo {pages} {042335} (\bibinfo {year} {2016})}\BibitemShut {NoStop}%
\bibitem [{\citenamefont {Shiraishi}\ and\ \citenamefont {Mori}(2017)}]{shiraishi2017systematic}%
  \BibitemOpen
  \bibfield  {author} {\bibinfo {author} {\bibfnamefont {N.}~\bibnamefont {Shiraishi}}\ and\ \bibinfo {author} {\bibfnamefont {T.}~\bibnamefont {Mori}},\ }\bibfield  {title} {\bibinfo {title} {Systematic construction of counterexamples to the eigenstate thermalization hypothesis},\ }\href {https://doi.org/10.1103/PhysRevLett.119.030601} {\bibfield  {journal} {\bibinfo  {journal} {Physical review letters}\ }\textbf {\bibinfo {volume} {119}},\ \bibinfo {pages} {030601} (\bibinfo {year} {2017})}\BibitemShut {NoStop}%
\bibitem [{\citenamefont {Leifer}\ \emph {et~al.}(2004)\citenamefont {Leifer}, \citenamefont {Linden},\ and\ \citenamefont {Winter}}]{leifer2004measuring}%
  \BibitemOpen
  \bibfield  {author} {\bibinfo {author} {\bibfnamefont {M.~S.}\ \bibnamefont {Leifer}}, \bibinfo {author} {\bibfnamefont {N.}~\bibnamefont {Linden}},\ and\ \bibinfo {author} {\bibfnamefont {A.}~\bibnamefont {Winter}},\ }\bibfield  {title} {\bibinfo {title} {Measuring polynomial invariants of multiparty quantum states},\ }\href {https://doi.org/10.1103/PhysRevA.69.052304} {\bibfield  {journal} {\bibinfo  {journal} {Physical Review A}\ }\textbf {\bibinfo {volume} {69}},\ \bibinfo {pages} {052304} (\bibinfo {year} {2004})}\BibitemShut {NoStop}%
\bibitem [{\citenamefont {Yan}(2026)}]{yan2026strong}%
  \BibitemOpen
  \bibfield  {author} {\bibinfo {author} {\bibfnamefont {H.}~\bibnamefont {Yan}},\ }\bibfield  {title} {\bibinfo {title} {Strong symmetry from two-point correlations},\ }\bibfield  {journal} {\bibinfo  {journal} {arXiv preprint arXiv:2609.23591}\ }\href {https://doi.org/10.48550/arXiv.2609.23591} {10.48550/arXiv.2609.23591} (\bibinfo {year} {2026})\BibitemShut {NoStop}%
\bibitem [{\citenamefont {L{\'e}vay}\ and\ \citenamefont {Holweck}(2015)}]{levay2015embedding}%
  \BibitemOpen
  \bibfield  {author} {\bibinfo {author} {\bibfnamefont {P.}~\bibnamefont {L{\'e}vay}}\ and\ \bibinfo {author} {\bibfnamefont {F.}~\bibnamefont {Holweck}},\ }\bibfield  {title} {\bibinfo {title} {Embedding qubits into fermionic fock space: peculiarities of the four-qubit case},\ }\href {https://doi.org/10.1103/PhysRevD.91.125029} {\bibfield  {journal} {\bibinfo  {journal} {Physical Review D}\ }\textbf {\bibinfo {volume} {91}},\ \bibinfo {pages} {125029} (\bibinfo {year} {2015})}\BibitemShut {NoStop}%
\bibitem [{\citenamefont {Humphreys}(1975)}]{humphreys1975linear}%
  \BibitemOpen
  \bibfield  {author} {\bibinfo {author} {\bibfnamefont {J.~E.}\ \bibnamefont {Humphreys}},\ }\href@noop {} {\emph {\bibinfo {title} {Linear algebraic groups}}},\ Vol.~\bibinfo {volume} {21}\ (\bibinfo  {publisher} {Springer New York},\ \bibinfo {year} {1975})\BibitemShut {NoStop}%
\end{thebibliography}
\end{document}